\documentclass[a4paper,UKenglish,cleveref,thm-restate]{lipics-v2021}

\usepackage[subrefformat=simple,labelformat=simple,position=b]{subcaption}

\usepackage{nicefrac}
\usepackage{mathtools}
\usepackage{xspace}
\usepackage{complexity}

\usepackage{graphicx}

\usepackage{amsmath}
\usepackage{amsthm}

\usepackage{amssymb}
\usepackage{mathtools}
\usepackage{amsfonts}
\usepackage{siunitx}
\usepackage{nicefrac}

\newcommand{\polygon}[0]{\mathcal{P}}
\newcommand{\guardset}[0]{\mathcal{G}}
\newcommand{\psat}[0]{\textsc{P3Sat}$_{\overline{\underline{3}}}$\xspace}
\newcommand{\agp}[0]{AGP\xspace}
\newcommand{\dagp}[0]{\textsc{Dispersive AGP}\xspace}

\crefname{figure}{Figure}{Figures}
\crefname{theorem}{Theorem}{Theorems}
\crefname{lemma}{Lemma}{Lemmas}
\crefname{corollary}{Corollary}{Corollaries}
\crefname{section}{Section}{Sections}
\crefname{appendix}{Appendix}{Appendices}
\crefname{remark}{Remark}{Remarks}
\crefname{claim}{Claim}{Claims}
\crefname{conjecture}{Conjecture}{Conjectures}
\crefname{observation}{Observation}{Observations}

\graphicspath{{./figures/}}

\bibliographystyle{plainurl}

\title{Guarding Offices with Maximum Dispersion}
\titlerunning{Guarding Offices with Maximum Dispersion}

\author{Sándor P. Fekete}{Department of Computer Science, TU Braunschweig, Germany}{s.fekete@tu-bs.de}{https://orcid.org/0000-0002-9062-4241}{}
\author{Kai Kobbe}{Department of Computer Science, TU Braunschweig, Germany}{kobbe@ibr.cs.tu-bs.de}{https://orcid.org/0009-0008-1772-1409}{}
\author{Dominik Krupke}{Department of Computer Science, TU Braunschweig, Germany}{krupke@ibr.cs.tu-bs.de}{https://orcid.org/0000-0003-1573-3496}{}
\author{Joseph S. B. Mitchell}{Department of Applied Mathematics and Statistics, Stony Brook University, USA}{joseph.mitchell@stonybrook.edu}{https://orcid.org/0000-0002-0152-2279}{}
\author{Christian Rieck}{Department of Discrete Mathematics, University of Kassel, Germany}{christian.rieck@mathematik.uni-kassel.de}{https://orcid.org/0000-0003-0846-5163}{}
\author{Christian Scheffer}{Department of Electrical Engineering and Computer Science, Bochum University of Applied~Sciences, Germany}{christian.scheffer@hs-bochum.de}{https://orcid.org/0000-0002-3471-2706}{}

\authorrunning{S. P. Fekete, K. Kobbe, D. Krupke, J. S. B. Mitchell, C. Rieck, and C. Scheffer}
\supplement{\url{https://github.com/KaiKobbe/dispersive_agp_solver}}
\Copyright{Sándor P.\ Fekete and Kai Kobbe and Dominik Krupke and Joseph S.\,B.\ Mitchell and Christian~Rieck and Christian Scheffer}

\ccsdesc{Theory of computation~Computational geometry}

\keywords{Dispersive Art Gallery Problem, vertex guards, office-like polygons, orthogonal polygons, polyominoes, \NP-completeness, worst-case optimality, dynamic programming, SAT solver}

\funding{Work from TU Braunschweig was partially supported by the German Research Foundation~(DFG), project ``CG:SHOP'', FE 407/21-1. 
Joseph Mitchell was partially supported by the National Science Foundation (CCF-2007275). Christian Rieck was partially supported by DFG grant 522790373.}

\nolinenumbers
\hideLIPIcs

\widowpenalty10000
\clubpenalty10000

\begin{document}

    \maketitle

    \begin{abstract}
		We investigate the \textsc{Dispersive Art Gallery Problem} with vertex guards and rectangular visibility ($r$-visibility) for a class of orthogonal polygons that reflect the properties of real-world floor plans: these \emph{office-like} polygons consist of rectangular rooms and corridors.
		In the dispersive variant of the Art Gallery Problem, the objective is not to minimize the number of guards but to maximize the minimum geodesic $L_1$-distance between any two guards, called the \emph{dispersion distance}. 
		
		Our main contributions are as follows. 
		We prove that determining whether a vertex guard set can achieve a dispersion distance of $4$ in office-like polygons is \NP-complete, where vertices of the polygon are restricted to integer coordinates. 
		Additionally, we present a simple worst-case optimal algorithm that guarantees a dispersion distance of $3$ in polynomial time. 
		Our complexity result extends to polyominoes, resolving an open question posed by Rieck and Scheffer~\cite{rieck-scheffer-dispersiveAGP}. 
		When vertex coordinates are allowed to be rational, we establish analogous results, proving that achieving a dispersion distance of $2+\varepsilon$ is \NP-hard for any $\varepsilon > 0$, while the classic Art Gallery Problem remains solvable in polynomial time for this class of polygons. 
		Furthermore, we give a straightforward polynomial-time algorithm that computes worst-case optimal solutions with a dispersion distance~$2$.

		On the other hand, for the more restricted class of hole-free independent office-like polygons, we propose a dynamic programming approach that computes optimal solutions. 
		Moreover, we demonstrate that the problem is practically tractable for arbitrary orthogonal polygons. 
		To this end, we compare solvers based on SAT, CP, and MIP formulations. 
		Notably, SAT solvers efficiently compute optimal solutions for randomly generated instances with up to \num{1600} vertices in under \SI{15}{\second}. 
    \end{abstract}
	\newpage
\section{Introduction}

The \textsc{Art Gallery Problem} (AGP) is a widely studied problem in computational geometry. 
Given a polygonal region, the objective is to select a minimum number of points (guards) such that every point within the region is visible from at least one guard, with visibility defined by an unobstructed line segment that does not intersect the polygon boundary. 
Since its introduction, numerous variants of the problem have been studied~\cite{orourke,shermer,urrutia}.

A relevant application arises in the deployment of sensors across a designated area, for which the challenge is to balance sensor spacing and signal coverage. 
On one hand, ensuring sufficient distance between sensors helps mitigate interference, preserving data integrity. 
On the other hand, it is essential to guarantee full coverage, ensuring continuous and reliable signal transmission across the entire area.
This trade-off is captured by the \textsc{Dispersive Art Gallery Problem}, which is defined as follows: 
Given a polygon $\polygon$ and a rational number $\ell$, determine whether there exists a guard set $\guardset$ for $\polygon$ such that the geodesic distances between any two guards in $\guardset$ are at least $\ell$. 
Notably, this formulation focuses solely on pairwise distances between guards rather than minimizing their total number.

In this paper, we investigate the theoretical and practical tractability of this problem in the context of vertex guards within a class of orthogonal polygons that model key structural properties of real-world floor plans or rectangular galleries~\cite{cruz}, consisting of orthogonal
rooms and corridors, called~\emph{office-like polygons}\footnote{This class of polygons was originally introduced by Cruz and Tom\'as~\cite{cruz} using the Portuguese expression SCOT for \emph{salas e corredores ortogonais}, i.e., orthogonal rooms and corridors.}; see~\cref{fig:initial_example} for an example.

\begin{figure}[htb]
    \centering
    \includegraphics[page=2]{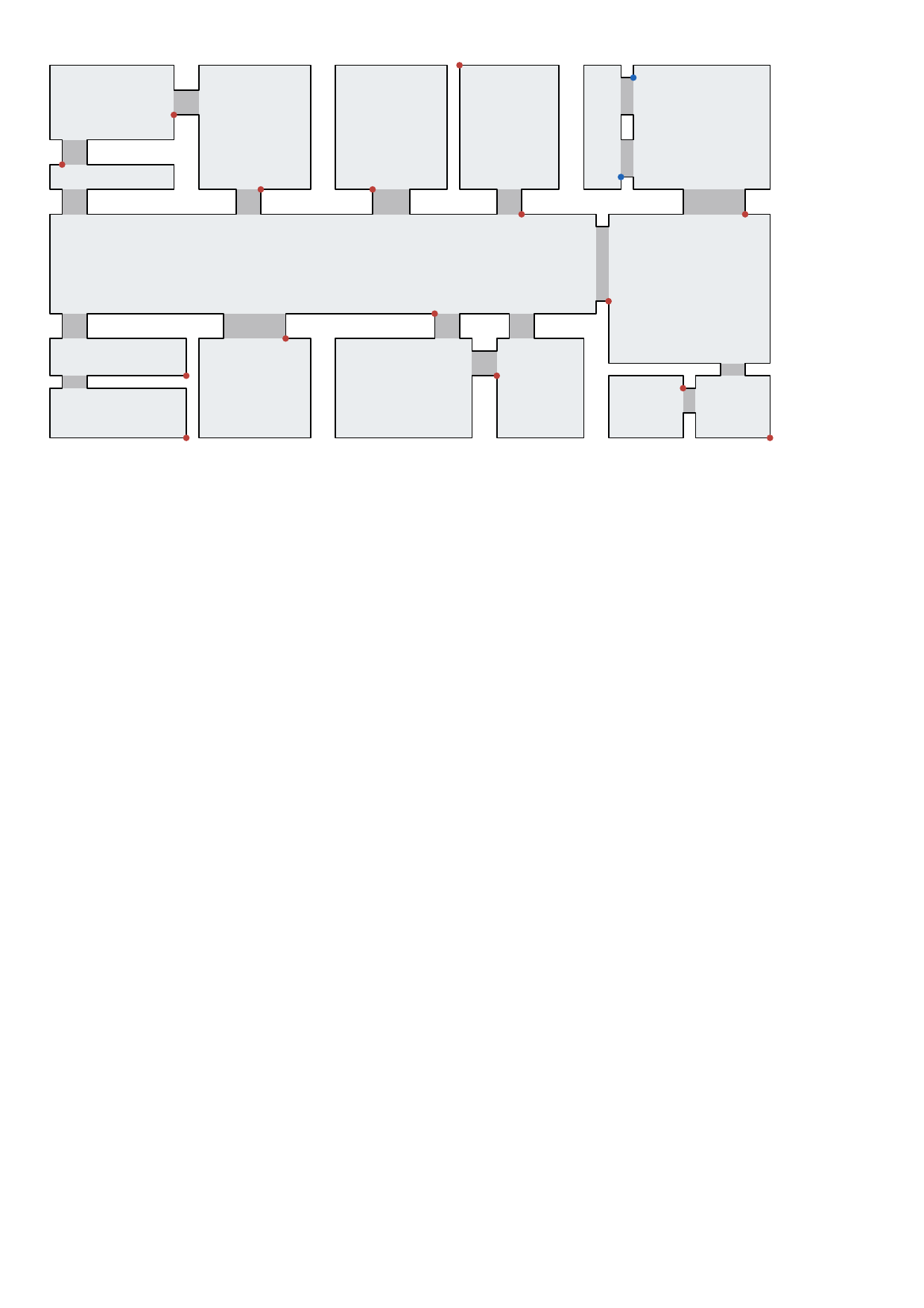}
    \caption{An office-like polygon with holes, along with a vertex guard set (shown as red and blue points) achieving maximum dispersion. Rooms are highlighted in light gray, and corridors in dark gray. The pair of blue guards represents the minimum geodesic distance among all pairs in the set.}
    \label{fig:initial_example}
\end{figure}

\subsection{Our contributions}
We present results for the \dagp with vertex guards in orthogonal polygons, including general, polyominoes, and office-like polygons, under \mbox{$r$-visibility} and $L_1$-geodesics. 
Without loss of generality, we assume a minimum distance of $1$ between any pair of vertices.\footnote{This assumption serves purely to simplify the statements; alternatively, setting the minimum distance to $\zeta$ would generalize all results proportionally.}

\begin{itemize}
    \item 
    {We show that the ratio between the cardinality of a maximally dispersed guard set and the smallest possible guard set can be arbitrarily large, even in histograms.
    }
    \item 
    {We present polynomial-time algorithms that compute worst-case optimal guard sets for office-like polygons. Specifically, the algorithms achieve dispersion distances of~$3$ when vertices are restricted to integer coordinates, and~$2$ otherwise. These bounds are tight, as demonstrated by corresponding lower-bound constructions.
    }
    \item 
    {Deciding whether a given office-like polygon allows a guard set with a dispersion distance of $4$ is \NP-complete when vertices are restricted to integer coordinates; 
    the same result holds for polyominoes (and with minor modifications even for thin polyominoes), thereby resolving an open question from \cite{rieck-scheffer-dispersiveAGP} on \NP-hardness in that setting.
    If vertex coordinates are allowed to be rational, deciding whether a dispersion distance of at least~$2+\varepsilon$ can be achieved for any~$\varepsilon > 0$ is already hard; see~\cref{thm:hardness-integer,cor:hardness-polyominoes,cor:hardness-arbitrary}.}
    \item 
    {For independent hole-free office-like polygons, we provide a dynamic programming approach to compute guard sets with maximum dispersion in polynomial time;~see~\cref{thm:dpalgo}.}
    \item 
    {We evaluate practical tractability by utilizing SAT, CP, and MIP techniques.
    Our~experiments show that instances with up to \num{1600} vertices can be solved in under \SI{15}{\second}.}
\end{itemize}

Note that the assumption on the minimum distance between any pair of vertices is not needed in both the practical evaluation as well as the dynamic programming. 
However, we adopt this assumption throughout the paper to improve clarity and readability.
Details for statements marked with ($\star$) are deferred to the appendix.

\subsection{Previous work}
The \textsc{Art Gallery Problem} (AGP) has been a cornerstone of computational geometry, inspiring extensive research. 
For an overview, see the previously mentioned book by O'Rourke~\cite{orourke} or the surveys by Shermer~\cite{shermer} and Urrutia~\cite{urrutia}. 

For a long time, it was known that the classic AGP is \NP-hard in several fundamental variants~\cite{LeeL86,SchuchardtH95}. 
Abrahamsen, Adamaszek, and Miltzow~\cite{irrational-guards} were the first to show that, even in monotone polygons, ``irrational guards are sometimes needed''. 
They subsequently established \mbox{$\exists \mathbb{R}$-completeness}~\cite{abrahamsenER}. 
Meijer and Miltzow~\cite{meijer2024irrationalguardsneeded} construct a polygon that can be guarded by two guards, both of which must have irrational coordinates.
When considering $r$-visibility (where two points $u$ and $v$ see each other if and only if the rectangle described by $u$ and $v$ lies completely within the polygonal region) instead of the classic line visibility, the AGP remains \NP-hard for polygons with holes~\cite{iwamoto}, whereas optimal guard sets can be computed in polynomial time for hole-free (orthogonal) polygons~\cite{worman}. 
Recently, Cruz and Tom{\'{a}}s~\cite{cruz} introduced a family of orthogonal polygons designed to mimic properties of real-world buildings. 
We will refer to these polygons as \emph{office-like polygons}. 
These polygons consist of multiple rectangular rooms connected by rectangular corridors, with each corridor being narrower than both of its incident rooms. 
Using \mbox{$r$-visibility}, they obtained several results. 
They introduced a simple greedy algorithm that computes optimal guard placements using the minimum number of guards for simple office-like polygonal environments.
They also proved that efficient algorithms exist for the subclass of \emph{$r$-independent} office-like polygons (i.e., office-like polygons in which the extensions of all pairs of adjacent corridors with the same direction are either coincident or disjoint), utilizing matching and flow techniques. 
However, guarding general office-like polygons under $r$-visibility remains \NP-hard.

All of the aforementioned results focus on minimizing the number of guards required for coverage. However, some variants relax this constraint and shift the focus to alternative objectives or other factors.
One such example is the \textsc{Chromatic AGP}~\cite{erickson2010chromatic,EricksonL11,FeketeFHM014,IwamotoI20}, with each guard being assigned a color. 
The goal is to minimize the number of colors used so that no two guards of the same color class have overlapping visibility regions while ensuring full coverage of the domain. 
In the \textsc{Conflict-free Chromatic AGP}~\cite{BartschiGMTW14,BartschiS14,hksvw-ccgoag-18}, the aforementioned overlapping constraint is relaxed, requiring that every point in the domain must see at least one guard whose color is unique among all guards visible from that point.

Under the current definition of the dispersive variant, the cardinality of a guard set is not considered relevant either.
Rieck and Scheffer~\cite{rieck-scheffer-dispersiveAGP} studied the \dagp for vertex guards in polyominoes, assuming $L_1$ geodesics and $r$-visibility. 
They proved that deciding whether a given thin polyomino allows a guard set with a dispersion distance of~\num{5} is \NP-complete. 
Additionally, they provided a linear-time algorithm that produces a worst-case optimal solution for simple polyominoes, ensuring a guard set with a dispersion distance of~\num{3}. 
Fekete et al.~\cite{dispersiveAGP-CCCG24} extended this to vertex guards in polygons. 
They demonstrated that determining whether a guard set with a dispersion distance of~\num{2} exists in polygons with holes is \mbox{\NP-complete}. 
Furthermore, they showed that a dispersion distance of~\num{2} is sometimes unavoidable even in hole-free polygons and proposed an algorithm to generate such sets. 

Dispersion problems, which are closely related to packing problems, focus on arranging a set of objects to be as far apart as possible or selecting a subset whose elements are maximally separated.
Such problems naturally arise in the context of the obnoxious facility location problem~\cite{cappanera1999survey} and the problem of distant representatives~\cite{FialaKP05}. 
Baur and Fekete~\cite{baur} studied the problem of distributing a set of points within a polygonal domain to maximize dispersion. 
They~established that this problem is \NP-hard and ruled out the possibility of a \PTAS\ for geometric dispersion problems unless \P=\NP. 

\subsection{Preliminaries and first observations}\label{subsec:preliminaries}
Given any polygon~$\polygon$, a subset $\guardset \subset \polygon$ of points is called a \emph{guard set} for $\polygon$ if every point~$p \in \polygon$ is \emph{visible} by at least one point~$g \in \guardset$. 
We restrict ourselves to vertex guards, i.e., guards are only placed on vertices of the domain.
We adopt the \emph{r-visibility} model, where two points $p$ and $q$ are said to \emph{see each other} if the axis-aligned rectangle defined by $p$ and $q$ is entirely contained within~$\polygon$.
The \emph{visibility region}~$\emph{Vis}(p) \subseteq \polygon$ of a point~$p$ is the set of all points that are visible from~$p$.
The \emph{size} of a guard set is its number of guards.
We measure the distance~$\delta(p,q)$ between two points~$p,q \in \polygon$ as \emph{$L_1$-geodesics}.
With this, the \emph{dispersion distance} $\ell$ of a guard set~$\guardset$ is the minimum  geodesic $L_1$-distance between any pair of guards; we say that a~guard set \emph{realizes} a dispersion distance of $\ell$, or is \emph{$\ell$-realizing}.

In this paper, we consider office-like polygons as well as polyominoes.
\emph{Polyominoes} are orthogonal polygons formed by joining unit squares edge to edge.
An \emph{office-like polygon}~${\polygon = (\mathcal R, \mathcal C)}$ is a connected orthogonal polygon made up from a collection of rectangular rooms $\mathcal{R}$ that are linked by a collection of rectangular corridors $\mathcal{C}$. 
A~corridor connects two rooms if a side of the corridor is \emph{strictly contained} in an edge of both rooms, i.e., the corridor is \emph{narrower} than the rooms, and none of the vertices of a room coincides with a vertex of a corridor. 
An example of an office-like polygon with several rooms and corridors, as well as a set of guards is depicted in~\cref{fig:initial_example,fig:3_sufficient_main}. 
Each corridor connects exactly two rooms. 
A corridor is \emph{horizontal} if it links rooms to its left and right; otherwise, it is \emph{vertical}.
Two corridors $C_1$ and $C_2$ are \emph{independent} if no guard placed on a vertex of $C_1$ also guards $C_2$.
If~every pair of corridors is independent, then the polygon is~\emph{independent}.

Fekete et al.~\cite{dispersiveAGP-CCCG24} showed that optimal solutions for the \dagp when considering arbitrary simple polygonal domains may contain many guards, even if the domain can be covered by a small number of guards.
In particular, they obtained a family of polygons in which the ratio between the sizes of the respective guard sets is unbounded.
We strengthen this by observing that this is true even for histograms, as visualized in~\cref{unbounded_dagp}.

\begin{figure}[htb]
	\centering
	\begin{subfigure}[b]{0.45\textwidth}
		\centering
		\includegraphics[page=1]{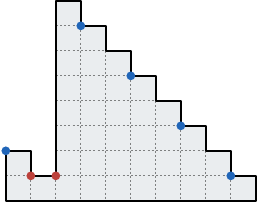}
		\subcaption{}
		\label{fig:dagp_agp_unbounded}
	\end{subfigure}
	\hfill
	\begin{subfigure}[b]{0.45\textwidth}
		\centering
		\includegraphics[page=2]{figures/dagp_agp_unbounded.pdf}
		\subcaption{}
		\label{fig:extension_unbounded}
	\end{subfigure}
	\caption{A family of histogram polygons illustrating that the ratio between optimal solutions of the AGP and the dispersive AGP can differ significantly. Red and blue points indicate optimal vertex guard placements for AGP and dispersive AGP, respectively.}
	\label{unbounded_dagp}
\end{figure}

In the case of office-like polygons, it appears that the situation may differ.
However, we can construct a family of these polygons for which this ratio asymptotically approaches~$2$; the construction is given in~\cref{app:small-guard-sets}.

\section{Worst-case optimality} \label{chap:wc_opt}

In this section, we discuss worst-case optimal algorithms for the \dagp in office-like polygons. 
We begin by examining them with vertices at integer coordinates.
Afterward, we extend our analysis to present similar results for the unrestricted case.

\subsection{Office-like polygons with vertices at integer coordinates}\label{vpic_main}
We establish the following worst-case bound.

\begin{restatable}[$\star$]{theorem}{worstCaseIntegerScots}
	\label{thm:worst-case-integer-scots}
	For office-like polygons with holes and vertices at integer coordinates, there always exists a guard set with dispersion distance at least~$3$, which is sometimes optimal.
\end{restatable}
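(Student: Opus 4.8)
The statement splits into an \emph{achievability} part (every office-like polygon with holes and integer vertices admits a guard set of dispersion at least~$3$) and a \emph{tightness} part (some such polygon admits none of dispersion more than~$3$). Since integer coordinates force every $L_1$-geodesic between vertices to have integer length, ``more than~$3$'' means ``at least~$4$,'' so tightness amounts to exhibiting a polygon in which no covering vertex-guard set is $4$-realizing. The plan is to prove achievability by a direct (and simple) placement rule together with a counting argument on feature sizes, and tightness by a compact gadget verified by a finite case check.

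For achievability I would first record the $r$-visibility fact that an axis-aligned rectangle is seen in its entirety from \emph{every} one of its points; in particular, each room is covered by a guard placed anywhere on its boundary, and each corridor is covered by a guard at any of its corners. Since the corners of a corridor lie on the boundary edges of its two incident rooms, a single guard at a chosen corridor corner simultaneously covers that corridor and one of its two incident rooms. The algorithm therefore places one guard per corridor (such a guard is effectively unavoidable, as the interior of a corridor far from its openings is seen only from vertices at corridor height, i.e.\ from the corridor's own corners) and adds a guard at a far corner of any room that remains uncovered. The crucial quantitative input is that integer coordinates together with strict containment and the no-coincident-vertex condition force every room edge carrying a corridor to have length at least~$3$ and leave a buffer of at least~$1$ between a corridor opening and each adjacent room corner; holes only lengthen $L_1$-geodesics and hence never hurt spacing. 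Choosing for each guard a corner \emph{far} from its neighbors, I would verify that any two placed guards are at geodesic distance at least~$3$: guards associated with the same room are forced apart by a room edge of length at least~$3$, and guards lying in different rooms are separated by the corridor length plus the offsets from the corridor mouth to the chosen far corners, which sum to at least~$3$ by the buffer and integrality bounds.

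The main obstacle is the \emph{global consistency} of this corner selection. A naive choice can fail: if two corridors meet a room on adjacent edges, guards placed at the corridor corners nearest the shared room corner can be as close as distance~$2$, and in dense arrangements (minimal-size rooms packed around a hole) there is little slack to spread them. I expect the heart of the proof to be an assignment argument—most cleanly phrased as an orientation of the room--corridor incidence structure in which each room is \emph{claimed} by an incident corridor guard sitting on that room's boundary, with far-corner room guards added only where the orientation leaves a room unclaimed—showing that, because every relevant edge has length at least~$3$ and each corridor contributes buffers of at least~$1$, a placement with all pairwise distances at least~$3$ always exists. Handling thin rooms (where one corridor guard already covers the entire room, so no separate room guard is added) and checking that holes never create an unavoidably short pair are the cases I would treat most carefully.

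For tightness I would build a compact office-like polygon with a single small hole and exhibit two points $p,q$ on opposite sides of the hole inside a narrow region of width at most~$3$, such that no vertex sees both $p$ and $q$ (the hole blocks every common rectangle), while every vertex seeing $p$ and every vertex seeing $q$ lie within geodesic distance~$3$ of one another. Any covering guard set must then contain a guard from the viewer set of $p$ and one from the viewer set of $q$, and these are necessarily within distance~$3$, so no $4$-realizing set exists; combined with achievability this pins the optimum at exactly~$3$. I anticipate this direction to reduce to enumerating the finitely many vertices of the gadget and their visibility regions, so its difficulty lies mainly in designing the gadget so that the two viewer sets are genuinely disjoint yet mutually close, with the hole (rather than long appendages) doing the work of preventing the guards from being spread apart.
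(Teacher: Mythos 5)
Your overall two-part structure (achievability plus a finite gadget for tightness) matches the paper, and your tightness plan is fine in spirit: the paper's own witness is a small polygon with three unit-square corridors that each force a guard, after which a pigeonhole on pairwise distances yields a pair at distance at most~$3$; your ``two points with disjoint but mutually close viewer sets'' variant would work just as well and needs only a finite check. The quantitative observations you make (room edges carrying a corridor have length at least~$3$, buffers of at least~$1$ between a corridor mouth and a room corner) are also correct and are implicitly used by the paper.

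The genuine gap is in the achievability direction, and it sits exactly where you say you ``expect'' an assignment argument to work. You correctly identify that a naive far-corner choice can leave two guards at distance~$2$ (adjacent corridors on the same or on adjacent room edges), but you never exhibit a corner-selection rule and prove it globally consistent; the existence of such a selection \emph{is} the theorem, so deferring it to an unspecified orientation of the room--corridor incidence structure leaves the proof incomplete. The paper resolves this not by choosing among all four corners of each corridor but by a directional discipline: vertical corridors receive guards only on their \emph{left} walls (preferring the bottom-left vertex and falling back to the top-left vertex when the bottom-left is within distance~$2$ of an earlier guard), rooms are processed as local maxima of a below/above partial order, horizontal corridors are then handled symmetrically on their \emph{bottom} walls, and uncovered rooms get a guard at their top-right corner. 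Even with this discipline the spacing proof requires several nontrivial case analyses (e.g.\ showing that the top-left fallback vertex is itself at distance at least~$3$ from a guard possibly placed in the room above, and that the vertical-phase and horizontal-phase guards cannot both sit at distance~$1$ from a shared bottom-left room corner). A further flaw in your sketch: a guard per corridor is \emph{not} unavoidable under $r$-visibility, since aligned corridors sharing a room can see into one another; the paper's algorithm explicitly skips corridors already covered, and this matters because placing extra guards only makes the dispersion constraint harder to satisfy.
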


As a first observation, it is easy to construct instances in which every guard set has a dispersion distance of at most~$3$; an example is visualized in~\cref{fig:3_necessary_main}. 
In this polygon, at least one guard is required in each of the three unit-square corridors, but then at least one pair of guards must be in distance $3$ of one another.
Notably, this polygon can serve as some kind of building block, as it can be extended arbitrarily along the black arrows.

We now outline the high-level idea of a polynomial-time algorithm that computes guard sets with a dispersion distance of $3$, thereby ensuring optimal solutions in the worst-case.
Our approach consists of three phases. 
In each phase, we place a set of guards, guaranteeing that no guard is positioned less than $3$ units away from any previously placed guard. 

\begin{description}
	\item[Phase (1):] {Place guards on vertices of corridors connecting two rooms \emph{vertically} to ensure that these corridors are properly guarded.}
	\item[Phase (2):] {Place guards similarly on the vertices of corridors connecting rooms \emph{horizontally}.}
	\item[Phase (3):] {Place guards in any remaining rooms that are not yet covered.}
\end{description}

An example illustrating the guard placements made by our algorithm is shown in~\cref{fig:3_sufficient_main}. 
Red guards are placed during the first phase, blue guards during the second phase, and green guards during the final phase.

\begin{figure}[htb]
	\centering
	\begin{subfigure}[b]{0.3\textwidth}
		\centering
		\includegraphics[page=1]{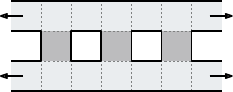}
		\subcaption{}
		\label{fig:3_necessary_main}
	\end{subfigure}
	\hfill
	\begin{subfigure}[b]{0.5\textwidth}
		\centering
		\includegraphics[page = 3]{figures/example_wcopt_main.pdf}
		\subcaption{}
		\label{fig:3_sufficient_main}
	\end{subfigure}
	\caption{A dispersion distance of~$3$ is (a) sometimes best possible and (b) always realizable.}
	\label{fig:wc_opt_3_main}
\end{figure}

In each phase, our approach ensures that every new guard is placed at least 3 units away from all previously placed guards.
Essentially, in the first two phases, we place guards either on the left side of vertical corridors or on the bottom side of horizontal corridors.
Moreover, it is easy to see that in every room that is not covered by the first two phases, there is at least one vertex that is in distance at least $3$ to a guard placed in an incident corridor.

We give all technical details and a proof of~\cref{thm:worst-case-integer-scots} in~\cref{app:worst-case-integer}.

\subsection{Office-like polygons with vertices at arbitrary rational coordinates}\label{lbpd_main}
We now investigate worst-case optimal solutions in case that vertices of the polygon are not restricted to integer coordinates. 
The primary challenge in achieving large dispersion distances arises in corridors that are independent but have a minimal horizontal or vertical separation. 
We refer to such arrangements of corridors as \emph{densely packed}; an example is depicted in~\cref{fig:2eps_necessary_main}.

\begin{figure}[htb]
	\centering
	\begin{subfigure}[b]{0.6\textwidth}
		\centering
		\includegraphics[page=2]{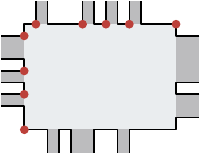}
		\subcaption{}
		\label{fig:2eps_necessary_main}
	\end{subfigure}
	\hfill
	\begin{subfigure}[b]{0.3\textwidth}
		\centering
		\includegraphics[page = 1]{figures/wc_opt_examples_main.pdf}
		\subcaption{}
		\label{fig:2_sufficient_main}
	\end{subfigure}
	\caption{A dispersion distance of (a) $2 + \varepsilon$ for every rational $\varepsilon > 0$ is sometimes best possible, but (b) a dispersion distance of $2$ is always realizable.}
	\label{fig:wc_opt_2eps_main}
\end{figure}

Our main theorem for these polygons is the following.

\begin{theorem}
	\label{thm:worst-case-scots}
	For office-like polygons with holes, there always exists a guard set with a dispersion distance of at least~$2$. 
	Furthermore, for every rational $\varepsilon >0$, there is such a polygon in which a dispersion distance of $2+\varepsilon$ is best possible.
\end{theorem}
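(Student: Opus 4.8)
The plan is to prove the two assertions separately: first the universal guarantee that a dispersion distance of~$2$ is always attainable, and then a family of instances witnessing that $2+\varepsilon$ is best possible. For the achievability direction I would reuse the three-phase placement scheme underlying \cref{thm:worst-case-integer-scots}, but now aiming only for a separation of~$2$; for the tightness direction I would build on the densely packed configuration sketched in \cref{fig:2eps_necessary_main}, with a matching realizing set as in \cref{fig:2_sufficient_main}.

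For the universal bound I would run the same three phases as in the integer case: cover all vertical corridors (Phase~1), then all horizontal corridors (Phase~2), then any still-uncovered room (Phase~3), co-covering non-independent corridors with a single guard wherever possible and otherwise being free to choose which reflex vertex of a corridor carries its guard. The entire content then lies in the distance analysis: I would show that every pair of placed guards is at geodesic $L_1$-distance at least~$2$. The key structural observation is that the normalizing assumption (minimum inter-vertex distance~$1$) applies to the points where corridor walls meet the incident rooms, so every corridor has width at least~$1$ and any two corridors attached to a common room edge are separated there by a gap of at least~$1$. Hence two same-side guards on corridors sharing a room edge are already at distance at least $1+1=2$. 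I would then treat the remaining cases---two guards on parallel corridors not sharing an edge, a Phase-1 guard against a Phase-2 guard, a corridor guard against a Phase-3 room guard, and two room guards---each time routing the shortest geodesic through the rectangular (hence convex) rooms and charging two independent unit separations to reach the bound of~$2$.

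For the tightness statement I would exhibit, for each rational $\varepsilon>0$, an office-like polygon $\polygon_\varepsilon$ containing a densely packed block of pairwise independent corridors whose mutual separation is calibrated to encode $\varepsilon$. Independence forces every guard set to spend a distinct guard on each of these corridors, so no such pair can be avoided; I would then verify that, however the guarding vertices are chosen, some two of them lie at geodesic distance at most $2+\varepsilon$, which bounds the optimum from above. Exhibiting one guard set that realizes exactly $2+\varepsilon$ on the block---combined with the achievability direction applied to the rest of $\polygon_\varepsilon$ so that no other pair drops below this value---shows the bound is attained, so $2+\varepsilon$ is indeed best possible.

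The main obstacle, I expect, is the densely packed regime in both directions. In the achievability proof I must ensure the per-corridor vertex choice is globally consistent so that the distance-$2$ bound survives even when many narrow corridors crowd a single room; the gap-at-least-$1$ observation is exactly what keeps the naive same-side rule from dropping below~$2$, and the cross-phase pairs require the most care. Dually, in the lower-bound construction the delicate point is calibrating the geometry so that the optimum is \emph{exactly} $2+\varepsilon$: the corridors must be close enough that proximity is forced, yet the remainder of $\polygon_\varepsilon$ must still admit a guard set in which no other pair lowers the dispersion below $2+\varepsilon$.
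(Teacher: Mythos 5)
Your plan for the achievability half has a genuine gap exactly at the phenomenon that distinguishes the rational-coordinate setting from the integer one. Your key observation (two corridors attached to a \emph{common} room edge are separated by a gap of at least $1$, so two near-side guards are at distance at least $2$) is correct but does not cover the dangerous pairs: two vertical corridors attached to \emph{opposite} edges of a room of height $1$ may have horizontal offset as small as $\tau$ (or even $0$), since the minimum-vertex-distance normalization is already satisfied by the room's height alone. If both corridors receive their guard at the vertex lying on that shared room's boundary, the two guards are at geodesic distance only $1+\tau<2$. So ``charging two independent unit separations'' fails here --- only one unit separation is available --- and a corridor-first scheme in the style of \cref{thm:worst-case-integer-scots} only works if you commit to a consistent \emph{far-side} rule (e.g., always the bottom vertex of the left wall of a vertical corridor, always the left vertex of the bottom wall of a horizontal corridor) and re-verify all pair types under that rule. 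You flag this as needing ``global consistency'' but supply neither the rule nor the verification, and that is where the entire difficulty of this direction lives. The paper sidesteps all of it with a much lighter construction: in every room, walk clockwise along the left and top walls from the bottom-left to the top-right corner and place a guard at every other vertex. Consecutive guards are then two inter-vertex gaps (each at least $1$) apart, every corridor is guarded from exactly one of its two rooms, and no guard ever sits on a room's bottom or right wall, which automatically yields the corridor-length-plus-one separation between guards of adjacent rooms.

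For the tightness half your plan follows the spirit of the paper's densely packed construction (\cref{fig:2eps_necessary_main}), but two remarks. First, ``however the guarding vertices are chosen, some two lie at distance at most $2+\varepsilon$'' is not a direct pigeonhole: with few packed corridors one \emph{can} alternate near and far ends so that all pairs stay at distance at least $2+\varepsilon$ (the paper's $9$- and $10$-corridor packings do admit such guard sets). The actual argument is a forcing cascade --- a guard at one end of the central corridor uniquely determines the positions in the neighbouring corridors, and only with $11$ corridors does every cascade run out of feasible positions --- so you need both enough corridors and the propagation argument. Second, you do not need, and in the paper's construction cannot obtain, a guard set realizing \emph{exactly} $2+\varepsilon$: the construction shows every guard set there has dispersion strictly below $2+\varepsilon$, and together with the universal lower bound of $2$ this is all that ``best possible'' requires.
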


We begin with observing that arrangements of densely packed corridors enforce small dispersion distances.

\begin{restatable}[$\star$]{lemma}{disptwoeps}
    \label{lem:disp_2eps_example}
    For every $n \in \mathbb{N}$ and $\varepsilon > 0$, there exists an office-like polygon with at least~$n$ vertices such that every guard set has a dispersion distance that is smaller than $2+\varepsilon$.
\end{restatable}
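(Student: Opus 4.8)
The plan is to realize the ``densely packed'' arrangement of \cref{fig:2eps_necessary_main} as a comb-shaped office-like polygon and to argue that it forces two guards to lie geodesically close. Concretely, I would take one wide room~$R$ and attach to its bottom edge a sequence of $m$ thin vertical corridors $C_1,\dots,C_m$, each of width~$w$ and length~$h$, where consecutive corridors are separated only by a thin wall of width~$d$, and where each $C_i$ ends in its own small room~$B_i$. Taking $m$ sufficiently large (say $m \ge n$) makes the number of vertices at least~$n$, and I fix $h=1$ together with $2w+d<\varepsilon$. It is then routine to verify that, for small enough $w$ and $d$, this object satisfies all defining properties of an office-like polygon: each corridor is strictly narrower than both incident rooms, no corridor vertex coincides with a room vertex, and every corridor connects exactly the two rooms $R$ and $B_i$. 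Moreover, the resulting polygon is simple (hole-free), and any two distinct corridors are independent, since the separating wall of width~$d$ blocks $r$-visibility between them.

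The heart of the argument is a covering claim: \emph{every corridor $C_i$ must receive a guard placed on one of its own four corners.} I would derive this from the office-like structure together with the $r$-visibility model. A point $p$ sees all of the rectangular corridor~$C_i$ only if the axis-aligned bounding box of $p$ and the far corner of $C_i$ is contained in~$\polygon$. Because the corridor opening is \emph{strictly contained} in the bounding edges of~$R$ and of~$B_i$, no room vertex is horizontally aligned with the corridor, so any room vertex sees at most a thin strip of~$C_i$; and by independence no vertex of another corridor sees into~$C_i$ at all. Hence the only vertices whose visibility region contains all of~$C_i$ are the four corners of $C_i$ itself. Since a guard set covers $\polygon$, it covers every $C_i$ and therefore contains at least one guard on a corner of $C_i$; as distinct corridors share no vertices, the guards obtained for two different corridors are distinct points.

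Finally, I would bound the geodesic distance between the forced guards of two \emph{adjacent} corridors. Fix the pair $C_1,C_2$; by the covering claim the guard set contains a point $g_1$ on a corner of $C_1$ and a distinct point $g_2$ on a corner of $C_2$. All eight relevant corners lie in a rectangle of width $2w+d$ and height $h$, and the only obstacle separating $C_1$ from $C_2$ is the intervening wall, which is bypassed through the shared room~$R$. A short case analysis over the (at most sixteen) corner pairs shows that the worst case pairs a bottom corner of $C_1$ with a bottom corner of $C_2$, whose $L_1$-geodesic runs up one corridor, across~$R$, and down the other, of total length at most $2h+2w+d$; one checks that the wall never forces a detour beyond this direct $L_1$ length. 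With $h=1$ and $2w+d<\varepsilon$ this gives $\delta(g_1,g_2)\le 2h+2w+d<2+\varepsilon$, so the dispersion distance of every guard set is smaller than $2+\varepsilon$.

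I expect the covering claim to be the main obstacle: one must argue carefully, using the mandatory offset between corridor and room vertices and the wall-induced independence, that no economical ``room guard'' and no guard shared between two corridors can ever cover an entire corridor, so that each corridor genuinely demands a dedicated corner guard. Once this is secured, the distance estimate is a routine $L_1$ computation, the only delicate point being the verification that the separating wall does not lengthen the geodesic beyond the straight $L_1$ distance between the two chosen corners.
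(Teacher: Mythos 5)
Your construction has a critical flaw: it violates the paper's standing normalization that any two vertices of the polygon are at distance at least $1$ (stated in the contributions section and adopted ``throughout the paper''). Your comb needs corridor width $w$ and wall width $d$ with $2w+d<\varepsilon$, i.e.\ sub-unit features; the top corners of adjacent corridors then lie at distance $d<\varepsilon$ on the bottom edge of $R$. This normalization is not cosmetic here: without it the lemma is vacuous, since one could scale \emph{any} two-guard polygon down until its diameter is below $2+\varepsilon$, and the companion claim of \cref{thm:worst-case-scots} that a dispersion distance of $2$ is \emph{always} achievable would be false. Under the normalization ($w,d\ge 1$, $h\ge 1$) your comb does not work at all: placing guards at the bottom-left corners of the corridors gives pairwise geodesic distances of at least $2h+w+d\ge 4$, so the comb admits well-dispersed guard sets and proves nothing. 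The flaw is not repairable by tuning parameters; the obstruction is structural.

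The paper's construction is engineered precisely to get around this. It uses unit-square corridors and rooms of height $1$ (so all vertex pairs are at distance $\ge 1$), but alternates consecutive corridors between two levels (connecting bottom--middle and middle--top rooms) with a horizontal offset of $\tau<\varepsilon/2$ per step. No single pair of adjacent corridors forces a short distance; instead, the argument is a \emph{forcing chain}: every corridor must contain a guard (independence), a guard at a given corner of the central corridor leaves only one admissible vertex in each neighboring corridor (all others being within $2+2\tau<2+\varepsilon$ of already-forced guards), and after propagating through eleven corridors some corridor has no admissible vertex left --- a contradiction. Your covering claim (each corridor demands a guard on one of its own corners) and your $L_1$ distance computations are fine as far as they go, but the single-room, single-level packing cannot substitute for this propagation argument.
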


Consider the polygon shown in~\cref{fig:2eps_necessary_main}. 
All corridors (depicted in dark gray) are unit squares, the rooms (represented as horizontal bars in light gray) have a height of $1$, and the red rectangles have a width of $0 < \tau < \nicefrac{\varepsilon}{2}$ for any $\varepsilon \in (0,1)$.

We now give the idea that any guard set contains a guard in corridor~$C$ only if its dispersion distance is smaller than~$2+\varepsilon$.
Clearly, every guard set must contain such a guard, as otherwise the corridor is not guarded.
Every guard set with a dispersion distance of at least~$2+\varepsilon$ that contains a guard at the bottom-left (top-left) vertex of~$C$ must contain all guards depicted in green (blue).
However, this leads to a contradiction, as no guard can be placed with sufficient distance in~$C_1$ (or~$C_2$), i.e., there is no $2+\varepsilon$-realizing guard set.
Incoming arcs indicate that a previously placed guard enforces a guard position (because no other option in the respective corridor is possible).
Due to symmetry, a guard set with adequate dispersion distance cannot contain a guard at either the bottom-right or top-right vertex of~$C$.
Moreover, this instance can easily be extended along the black arrows, to obtain office-like polygons with an arbitrary number of vertices. We give full details in~\cref{app:worst-case-integer}.

We now prove that a dispersion distance of~$2$ is always realizable.

\begin{lemma}\label{wc_proof_2}
    Every office-like polygon allows a guard set with dispersion distance at least~$2$.
\end{lemma}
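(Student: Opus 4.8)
The plan is to give a constructive, phase-based placement argument analogous to the integer-coordinate case, but now only claiming a dispersion distance of~$2$. The key structural fact I would exploit is that every corridor is \emph{strictly narrower} than both incident rooms, so the two vertices of a corridor that lie on the interior of a room edge are at distance strictly greater than~$0$ from the room corners; combined with the minimum-vertex-distance assumption of~$1$, this forces useful spacing. First I would fix, for each corridor~$C$, a canonical guard vertex: for a vertical corridor I place a guard on one of its two \emph{lower} vertices (say the bottom-left), and for a horizontal corridor on one of its two \emph{left} vertices (the bottom-left). By $r$-visibility, a guard at a corridor vertex already sees the entire corridor rectangle (it is convex), so each corridor is guarded after this step.

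The main argument is then to verify that the pairwise geodesic $L_1$-distances among these canonical guards are all at least~$2$, and afterward to patch any still-uncovered rooms. I would organize this exactly as the three phases sketched before \cref{thm:worst-case-scots}: place guards for vertical corridors, then for horizontal corridors, then one guard per uncovered room, each time checking the new guard keeps distance~$\ge 2$ from all previously placed ones. For two guards sitting on corridors incident to a common room, the geodesic path between them must traverse the room and enter each corridor through its narrow mouth; since each corridor vertex is interior to a room edge and the two corridors do not share vertices with the room, the path is forced to make at least two unit-length orthogonal moves, giving length~$\ge 2$. For guards on corridors that are not incident to a common room, the geodesic must pass through at least one intermediate room, which only lengthens the path, so the bound is preserved. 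For a room that remains uncovered after the first two phases, I would argue (as the excerpt notes in the integer case) that it always contains a vertex at distance~$\ge 2$ from every guard on an incident corridor, and place a guard there; the strict-narrowness condition is what guarantees such a ``far'' corner exists.

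The hard part will be handling the geodesic distance between two guards placed on corridors attached to the \emph{same} room when those corridors are very close together, i.e.\ the densely packed configuration of \cref{fig:2eps_necessary_main} that drives the matching lower bound. Here I cannot rely on integer spacing, so the whole argument must hinge on the topological fact that an $L_1$-geodesic entering a corridor through its mouth is pinned to cross the corridor's defining edge, and on a careful choice of \emph{which} of the two candidate vertices of each corridor to pick so that consecutive chosen vertices do not collapse to distance below~$2$. Concretely, I would argue that whenever two chosen corridor vertices threaten to be too close along one axis, the geodesic between them is forced to detour around the separating room boundary, and that this detour contributes the missing length; the delicate bookkeeping is ensuring this simultaneously for all pairs while the third-phase room guards are inserted. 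I expect the bulk of the technical work, deferred to the appendix, to be this case analysis of geodesic lengths across shared rooms, whereas the coverage guarantee (every point is seen) follows immediately from $r$-visibility once every corridor and every room carries a guard.
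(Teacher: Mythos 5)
Your approach is viable but takes a genuinely different (and heavier) route than the paper. The paper's proof is a single uniform construction: in every room, walk clockwise along the boundary from the bottom-left to the top-right vertex (i.e., along the left and top walls only) and place a guard at \emph{every other} vertex (\cref{fig:2_sufficient_main}). Coverage is immediate because every corridor lies on the left or top wall of exactly one of its two incident rooms, so exactly one of its two vertices on that wall receives a guard; spacing is immediate because consecutive vertices on the walk are at distance $\geq 1$ (so every-other-vertex gives $\geq 2$ inside a room, where walk distance equals $L_1$-geodesic distance), and because no two guards ever share a corridor, any inter-room geodesic accumulates $\geq 1$ crossing the corridor and $\geq 1$ more before reaching the next guard. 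Your corridor-centric three-phase scheme (canonical bottom-left vertex per corridor, then patch uncovered rooms at a far corner) does work --- I checked the main cases, including the densely packed configuration of \cref{fig:2eps_necessary_main}, where the vertical separation of height(room)$\,+\,$height(corridor) $\geq 2$ already saves you, so the ``hard part'' you flag is actually benign and no adaptive choice between the two candidate vertices is needed. But you have not carried out the case analysis you defer, and that analysis is precisely the cost of your decomposition: you must separately verify roughly a dozen relative positions of two corridors around a shared room, whereas the paper's left-and-top-walls invariant collapses all of these into one sentence. If you pursue your route, state the canonical vertex choice precisely and enumerate the cases; if you want the shorter proof, switch to the per-room boundary walk.
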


\begin{proof}
    Our construction of a guard set with a dispersion distance of at least $2$ is as follows.
    In every room, we walk in clockwise direction along the boundary from the bottom-left to the top-right vertex, and place a guard at every other vertex, see~\cref{fig:2_sufficient_main}.

    This routine clearly yields a guard set. 
    Moreover, within each room guards are placed in geodesic distance at least $2$.
    We ensure that guards placed in different but adjacent rooms have at least a geodesic distance of $2$ by only placing guards at the left and top part of each room, meaning that, in particular, no two guards are positioned within the same corridor.
\end{proof}

Note that this result cannot be directly derived from~\cite{dispersiveAGP-CCCG24}, as it also holds for office-like polygons with holes, while the known result only holds for simple polygons without holes.

\section{Computational complexity} \label{complexity}

In this section, we investigate the computational complexity of the respective decision problem. 
Our main result is the following:

\begin{theorem}
	\label{thm:hardness-integer}
	Deciding whether there exists a guard set with a dispersion distance of~\num{4} in independent office-like polygons with holes and vertices on integer coordinates is \NP-complete.
\end{theorem}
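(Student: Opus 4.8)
The plan is to prove \NP-completeness via two directions. Membership in \NP is routine: given a candidate guard set $\guardset$, one verifies in polynomial time both that $\guardset$ covers $\polygon$ (checking $r$-visibility coverage of each room and corridor) and that every pair of guards is at geodesic $L_1$-distance at least $4$ (all-pairs geodesic distances in an orthogonal polygon under $L_1$ can be computed in polynomial time). The substance of the theorem is \NP-hardness, which I would establish by a polynomial-time reduction from a suitable restricted, already-hard variant of planar 3-satisfiability. The defined macro \psat (\textsc{P3Sat}$_{\overline{\underline{3}}}$) strongly signals that the intended source problem is \emph{planar monotone} (or balanced) 3-SAT with each variable in exactly three clauses, which admits a rectilinear planar embedding; this embeddability is precisely what lets one lay out the gadgets in an orthogonal polygon.

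**First I would** design the geometric gadgets as office-like building blocks. The worst-case discussion in \cref{thm:worst-case-integer-scots} already isolates the critical mechanism: a unit-square corridor forces a guard into a small set of vertices, and neighboring forced guards interact through the dispersion constraint. I would exploit exactly this, using the building block from \cref{fig:3_necessary_main} (extendable along the indicated arrows) as the backbone for \emph{wires} that carry a Boolean signal. The idea is that in a chain of unit corridors, requiring dispersion distance $4$ (rather than the always-achievable $3$) leaves each corridor with a binary choice of guard vertex, and the parity/phase of these choices propagates down the wire — encoding \textsc{true}/\textsc{false}. I would then build: (i) a \textbf{variable gadget} that fixes a consistent truth value and fans the signal out to its (at most three) clause occurrences; (ii) \textbf{wire/bend gadgets} routing signals through the rectilinear embedding while preserving the binary invariant; and (iii) a \textbf{clause gadget} — a room or junction of three incoming corridors — whose vertices admit a $4$-realizing guard placement if and only if at least one incident literal arrives \textsc{true}. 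The holes allowed in the polygon give the freedom to separate wires that would otherwise interfere, which is why the statement is phrased for polygons \emph{with holes}; keeping the construction \emph{independent} (no guard vertex of one corridor sees another corridor) ensures that coverage constraints factor cleanly per corridor, so that the only global coupling is the dispersion distance.

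**The key correctness argument** then splits into the two standard implications. In one direction, from a satisfying assignment I place guards along each wire in the phase dictated by its variable's truth value, argue that the forced choices never bring two guards within distance $4$, and check that every satisfied clause gadget can be completed to a full $4$-realizing cover. In the other direction, from any $4$-realizing guard set I read off a truth value for each variable by inspecting which vertex of its gadget is used, show this is well-defined (the dispersion constraint forces consistency along each wire), and prove that a falsified clause would force two guards at distance less than $4$ inside the clause gadget — a contradiction. Throughout I must verify that the assembled region is genuinely office-like: every corridor strictly narrower than both incident rooms, no coinciding room/corridor vertices, integer coordinates, and minimum vertex distance $1$.

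**The hard part will be** calibrating the gadget geometry so that dispersion distance exactly $4$ — not $3$ — is the operative threshold, and so that the forcing is tight: each corridor must admit precisely the intended binary choice with no ``cheating'' placement that sidesteps the encoding. Since \cref{thm:worst-case-integer-scots} guarantees a $3$-realizing set always exists, the reduction lives in the narrow gap between $3$ and $4$, and every gadget must be tuned so that a $4$-realizing set is obstructed exactly by unsatisfiability. Managing the geodesic ($L_1$) distances — which bend around corners and through holes rather than going straight — across gadget boundaries, and certifying that no unintended short geodesic couples two far-apart gadgets, is where the delicate case analysis will concentrate; this is also where the independence assumption and the hole structure do the most work in decoupling the argument.
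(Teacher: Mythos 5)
Your overall strategy coincides with the paper's: membership in \NP\ is handled exactly as you describe, and hardness is shown by a reduction from \psat\ (planar 3-SAT, each variable in exactly three clauses, w.l.o.g.\ negated twice and unnegated once), realized by laying variable, clause, and bending gadgets onto a rectilinear embedding of the clause--variable incidence graph. The paper's signal mechanism is slightly leaner than your ``parity along a chain of unit corridors'': gadgets are joined by a \emph{single} straight connecting corridor, and the Boolean signal is simply whether that corridor is guarded from the variable end or must instead be guarded from the clause end; the bending gadget flips which end of a turn can supply the guard. The paper's central primitive for making this work is a \emph{banning-type subpolygon} (\cref{fig:dist4_banning}) whose unique-up-to-symmetry $4$-realizing guard sets forbid any further guard within distance~$3$ of two designated vertices; this is what prevents the ``cheating'' placements you correctly identify as the main danger.

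There is one concrete gap in your plan as written. You describe the clause gadget only as ``a junction of three incoming corridors,'' but \psat\ restricted to clauses of exactly three literals is solvable in polynomial time (the paper notes this explicitly, citing \cite{papadimitriou}), so a reduction that handles only $3$-literal clauses starts from a problem in \P\ and proves nothing. The paper therefore builds a separate $2$-literal clause gadget (\cref{fig:4_2clause_gadget}) alongside the $3$-literal one (\cref{fig:4_3clause_gadget}); your framework accommodates this easily, but the gadget must actually be designed, and in the paper it even requires an extra banning-type subpolygon just to make room for the adjacent bending gadgets. Beyond that, your plan is a faithful blind reconstruction of the paper's argument, with the delicate geometric calibration correctly identified as the place where the real work lies.
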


We highlight that the constructed polygon in our reduction is essentially a polyomino, thus, this result also answers an open question from Rieck and Scheffer~\cite{rieck-scheffer-dispersiveAGP}.
Unfortunately, the gadgets are not thin, i.e., those that do not contain a $2\times 2$ square. 
Nevertheless, we can adjust all involved gadgets, showing that this result also holds true for thin polyominoes.

\begin{corollary}
	\label{cor:hardness-polyominoes}
	Deciding whether there exists a guard set with a dispersion distance of~\num{4} in (thin) polyominoes with holes is \NP-complete.
\end{corollary}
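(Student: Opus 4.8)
The plan is to derive \cref{cor:hardness-polyominoes} directly from the reduction that establishes \cref{thm:hardness-integer}, rather than building a fresh reduction from scratch. The key observation flagged in the excerpt is that the office-like polygon produced by the \NP-hardness reduction is ``essentially a polyomino'': all rooms and corridors have integer coordinates, and the corridors are unit-width. The first step is therefore to verify carefully that the construction can be realized so that every room and every corridor is a union of unit squares glued edge-to-edge, i.e., that the defining dimensions (room sizes, corridor lengths, mutual separations) used in the hardness gadgets are all integral. Since \cref{thm:worst-case-integer-scots} already works with integer coordinates and the target dispersion distance is the integer $4$, I would expect the geometry to carry over with only cosmetic rescaling, so that the decision question ``does a $4$-realizing guard set exist?'' is verbatim the same question on a polyomino with holes. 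Membership in \NP is immediate (a candidate guard set is polynomial in size and its dispersion distance is checkable in polynomial time via $L_1$-geodesic computation), so only hardness requires argument, and it is inherited.

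The second step, and the genuinely new content of the corollary, is to upgrade the polyomino to a \emph{thin} one, i.e., one containing no $2\times 2$ block of unit squares. The hardness gadgets from \cref{thm:hardness-integer} are explicitly \emph{not} thin, so I would go gadget by gadget (the variable gadgets, clause gadgets, and whatever wiring/connector pieces the reduction uses for \psat) and replace each fat region by a thin substitute. The natural technique is to hollow out the interiors of the wide rooms, turning a solid rectangle into a thin cyclic or comb-like corridor around its boundary, while preserving two invariants: (i) the set of vertices available for guard placement that matters for the reduction, together with their pairwise $L_1$-geodesic distances, is unchanged up to the relevant thresholds around $4$; and (ii) the coverage constraints (which cells force a guard, and which guard positions cover which cells under $r$-visibility) are preserved, so that a $4$-realizing guard set exists in the thin version if and only if it existed in the fat version if and only if the \psat instance is satisfiable.

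The main obstacle I anticipate is \textbf{preserving $r$-visibility and geodesic distances simultaneously under thinning}. Hollowing out a room changes two things at once: it can destroy $r$-visibility (two points that saw each other across a solid rectangle may no longer see each other once the interior is removed, since the axis-aligned rectangle between them now pokes outside the region), and it can lengthen geodesic distances (a path that used to cut straight across a room must now detour around the hole). Both effects can break the forcing logic on which the reduction depends. The careful part is to choose the thinning so that the specific ``enforced guard position'' arguments — the arcs in \cref{fig:2eps_necessary_main}-style reasoning, where placing a guard at one vertex forces a unique feasible guard elsewhere — remain valid, and so that the critical distances that the reduction needs to be exactly $\ge 4$ or $<4$ fall on the correct side of the threshold. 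I would handle this by keeping every corridor and every feature whose width is already $1$ untouched, and thinning only the truly wide rooms, checking that in each such room the guard candidates relevant to satisfiability lie on the boundary already and that their mutual distances and visibilities are governed by the boundary walk, which thinning leaves intact.

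Finally, I would remark that after this gadget surgery the argument is mechanical: the reduction is still polynomial (thinning at most increases the vertex count by a constant factor per gadget), correctness follows gadget-by-gadget from the preserved invariants, and the \NP-membership argument is unchanged, yielding \NP-completeness for thin polyominoes with holes and hence the stated corollary.
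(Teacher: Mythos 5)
Your first step---inheriting hardness for (non-thin) polyominoes with holes directly from the reduction of \cref{thm:hardness-integer}, since that construction already lives on the integer grid with unit-width corridors---matches the paper exactly, and the \NP-membership argument is the standard one. The divergence, and the gap, is in the thin case. The paper does not thin the existing office-like gadgets; it replaces them wholesale with new, \emph{simpler} gadgets (variable, bending, $2$-literal and $3$-literal clause) that are thin by design, exploiting the fact that polyominoes are not bound by the office-like constraints (corridors strictly narrower than rooms, no coinciding vertices), so the wide rooms that make the original gadgets fat can simply be dispensed with rather than hollowed out.

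Your proposed surgery---carving each wide room into a thin boundary cycle---is a genuinely different route, and you correctly name the two effects that threaten it (loss of $r$-visibility across the hollowed room and lengthened geodesics), but you do not resolve them, and they are not cosmetic. Hollowing a room introduces a hole whose inner boundary carries \emph{new} vertices, i.e., new admissible guard positions; the forcing arguments of the reduction (``a guard here leaves only one feasible position there'') are exactly the kind of statement that a single new guard candidate can falsify. Moreover, a guard at one corner of a hollowed room no longer $r$-sees corridors attached to its far side, so the banning-type subpolygon's ``either both blue or both red guards'' dichotomy need not survive, and geodesics between opposite sides of the room grow, so the critical comparisons against the threshold $4$ must be re-verified in every gadget. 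Your stated safeguard---that the relevant guard candidates already lie on the outer boundary and that their distances are ``governed by the boundary walk''---is an assertion, not a verification; carrying it out amounts to designing and checking the thin gadgets, which is precisely the content the corollary requires and which your proposal defers. As written, the thin half of the statement is not established.
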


By relaxing the constraint on the vertex positions to arbitrary (rational) coordinates, and slight modifications to all involved gadgets, we obtain the following.

\begin{corollary}
	\label{cor:hardness-arbitrary}
	Deciding whether there exists a guard set with a dispersion distance of $2+\varepsilon$ for any rational $\varepsilon > 0$ in independent office-like polygons with holes is \NP-complete.
\end{corollary}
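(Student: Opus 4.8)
The plan is to adapt the reduction underlying \cref{thm:hardness-integer} rather than to construct a new one from scratch. Membership in \NP\ is immediate and independent of the coordinate model: a certificate is a subset of the (polynomially many) polygon vertices, and for such a candidate one can verify full coverage under $r$-visibility and compute all pairwise $L_1$-geodesic distances in polynomial time, checking that each is at least $2+\varepsilon$. Hence the entire argument reduces to establishing \NP-hardness, for which I would again reduce from \psat.

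The central idea is to reuse the global architecture of the integer construction---variable gadgets that encode a truth value by a binary choice between two admissible guard positions, wires that propagate this choice, and clause gadgets that admit a sufficiently dispersed guard set precisely when at least one incident literal is set to true---while replacing every place in which the integer gadgets exploit a forced distance of $4$ by a \emph{densely packed} configuration in the spirit of \cref{lem:disp_2eps_example}. Concretely, I would shrink the critical separations so that the corridors relevant for forcing guard positions are mutually offset by a horizontal or vertical gap $\tau$ with $0 < \tau < \nicefrac{\varepsilon}{2}$, exactly as in \cref{fig:2eps_necessary_main}. This turns the clean integer gap ``$4$ versus at most $3$'' into the continuous gap ``$2+\varepsilon$ versus strictly less'', so that a guard placed in one corridor again forces the guard in the neighboring corridor into a unique position, preserving the propagation and clause logic of the original reduction.

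With the gadgets rescaled, I would prove the two directions of correctness in parallel with the integer case. For a satisfying assignment I place guards according to the chosen literals and verify that every pair of selected vertices is in geodesic distance at least $2+\varepsilon$; conversely, from any $2+\varepsilon$-realizing guard set I read off a consistent assignment, using that the densely packed corridors leave no room to deviate from the forced positions without dropping the dispersion below $2+\varepsilon$. I would also check that the modified construction remains \emph{independent} and that the introduced holes are exactly those of the integer gadgets, so that the statement about independent office-like polygons with holes is retained.

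The main obstacle I anticipate is the loss of slack. In the integer reduction the admissible and forbidden configurations are separated by a full unit, which absorbs small geometric inaccuracies; after compressing everything into the $\varepsilon$-regime, every relevant geodesic distance must be certified to lie either comfortably above $2+\varepsilon$ or provably below it, with no ambiguous intermediate case. The delicate step is therefore a careful, gadget-by-gadget re-examination of all pairwise distances---particularly at junctions where a wire meets a clause or variable gadget---to guarantee that shrinking the forcing gaps to $\tau < \nicefrac{\varepsilon}{2}$ neither accidentally creates an under-dispersed pair in a satisfiable instance nor relaxes the forcing that makes unsatisfiable instances fail.
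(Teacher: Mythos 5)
Your overall strategy coincides with the paper's: both reduce from \psat, keep the variable/clause/bending architecture of the integer construction, and replace the unit-slack forcing by densely packed corridors with offset $\tau < \nicefrac{\varepsilon}{2}$ in the spirit of \cref{lem:disp_2eps_example} and \cref{fig:2eps_necessary_main}. Membership in \NP\ is handled the same way in both arguments.

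However, two concrete ingredients are missing, and the first is a genuine gap rather than a deferral of routine checking. First, you never explain the mechanism by which a dense packing actually \emph{forbids} a guard position at threshold $2+\varepsilon$. In the paper this is the content of the $c$-$\varepsilon$-packing lemmas: an $11$-$\varepsilon$-packing admits no $(2+\varepsilon)$-realizing guard set because a guard anywhere in the central corridor triggers a cascade of forced positions (each previously placed guard leaves only one admissible vertex in the next corridor) that eventually leaves some corridor uncoverable, while $10$- and $9$-$\varepsilon$-packings admit exactly one and exactly two such guard sets, respectively. These counting statements are what turn a packing into a banning-type subpolygon that excludes specific vertices, and every gadget (variable, clause, bending) is assembled from such subpolygons. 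Saying that ``a guard placed in one corridor again forces the guard in the neighboring corridor into a unique position'' asserts the conclusion of this cascade argument without supplying it. Second, the resulting gadgets are much larger than their integer counterparts (a single banning device already requires on the order of ten nested corridors), and as a consequence the bending gadget no longer fits between the incoming and outgoing corridors of the variable and clause gadgets. The paper must introduce an additional \emph{widening} gadget that propagates the truth signal while creating space for the bends; without it (or some substitute) the global assembly along the rectilinear embedding of the clause--variable incidence graph does not go through. Your closing paragraph correctly anticipates that pairwise distances must be re-certified gadget by gadget, but this layout obstruction is of a different nature and needs its own construction.
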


Membership in \NP\ is relatively straightforward: 
First, we can compute the $r$-visibility polygon of a guard in polynomial time, see~\cref{satsolver,app:visibility}. 
Additionally, the union of two polygons can be computed in polynomial time as well~\cite{margalit}.
Thus, we can efficiently verify that a given solution is indeed a guard set.
Moreover, the distance between two guards can be computed in polynomial time~\cite{toth2004handbook}. 
By checking all pairwise distances between guards, we can verify whether the guard set realizes a certain dispersion distance.
Hence, the problem is in \NP.

We~proceed with a high-level idea of the reduction showing \NP-hardness, before moving on to a thorough explanation of the various gadgets that are utilized in the actual construction.

\paragraph*{High-level idea of the reduction}
We reduce from the \NP-complete variant of \textsc{Planar 3Sat} where every variable occurs in exactly three clauses, denoted as \textsc{Planar 3Sat with Exactly Three Occurrences per Variable} (\psat)~\cite{berman,cerioli,garey}.
Without loss of generality, we can assume that each variable appears negated in two clauses and unnegated in one clause.

For any given Boolean formula $\varphi$ representing an instance of \psat, we first compute a rectilinear embedding of the corresponding clause-variable incidence graph. 
Then, we replace each clause vertex with a clause gadget and each variable vertex with a variable gadget, connecting them accordingly. 
(In some variants of our problem, additional auxiliary gadgets are required to construct a valid office-like polygon.)
We can then demonstrate that $\varphi$ is satisfiable if and only if the respectively constructed polygon has a guard set of a certain dispersion distance.

\paragraph*{Gadgets for office-like polygons with vertices at integer coordinates}
\label{par:hardness-scot-integer}
A first observation is that the polygon, that is depicted in~\cref{fig:dist4_banning}, has the property that, if we want to realize a dispersion distance of $4$, either the two blue or the two red guards need to be selected. 
More importantly, this implies that no other guard can be placed within distance~$3$ of the vertices $v$ and $v'$.

\begin{figure}[htb]
    \centering
    \includegraphics[page=5]{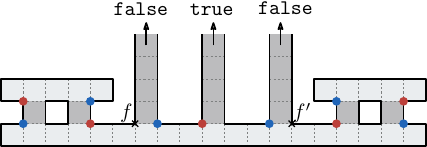}
    \caption{Banning-type subpolygon.}
    \label{fig:dist4_banning}
\end{figure}

\subparagraph{Variable gadget.} 
The variable gadget is depicted in~\cref{fig:4_var_gadget}.
Banning-type gadgets on both sides prevent placing guards at the vertices $f$ and $f'$. 
As a result, no guard set consisting solely of vertices from the variable gadget can see into two consecutive corridors. 
In other words, there exist feasible guard sets that either guard both corridors labeled \texttt{false}, or the corridor labeled \texttt{true}, but never one corridor of each type.
Therefore, this gadget can distinguish between true and false.

\begin{figure}[htb]
	\centering
	\includegraphics[page=1]{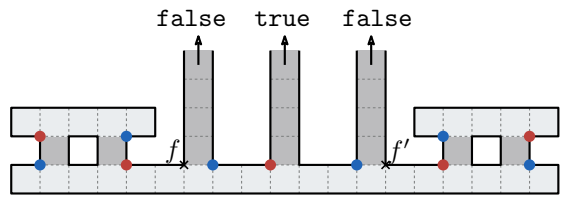}
	\caption{Variable gadget for office-like polygons with vertices at integer coordinates.}
	\label{fig:4_var_gadget}
\end{figure}

\subparagraph{Clause gadget.} Because the problem from which we construct our reduction is solvable in polynomial time in the special case where all clauses contain three literals, we introduce clause gadgets that include both two- and three-literal clauses, as visualized in~\cref{fig:clause-gadgets-integer}.

\begin{figure}[htb]
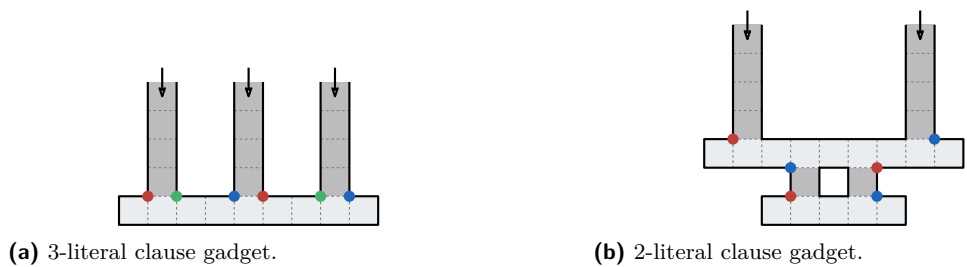

    \centering
     \begin{subfigure}[b]{0.45\textwidth}
         \centering
         \includegraphics[page=2]{figures/hardness_4.pdf}
         \subcaption{$3$-literal clause gadget.}
         \label{fig:4_3clause_gadget}
     \end{subfigure}
     \hfill
     \begin{subfigure}[b]{0.45\textwidth}
         \centering
         \includegraphics[page = 3]{figures/hardness_4.pdf}
         \subcaption{$2$-literal clause gadget.}
         \label{fig:4_2clause_gadget}
     \end{subfigure}
    \caption{Clause gadgets for office-like polygons with vertices at integer coordinates.}
    \label{fig:clause-gadgets-integer}
\end{figure}

In both cases, there exist guard sets with a dispersion distance of $4$ that guard all incoming corridors except for one. 
As a result, at least one of the corridors must be guarded by a guard placed on a vertex from another gadget, which implies that the clause is satisfied by the corresponding truth assignment.

To construct an office-like polygon from a given embedding of the clause-variable-incidence graph, it is essential to be able to make $\ang{90}$ turns. 
As these polygons only permit straight corridors, turns can only be realized by auxiliary rooms. 
Note that the banning-type subpolygon in the two-literal clause gadget is necessary only to accommodate the bending of corridors; without it, the corridors would be too close together to provide enough room for the bending gadgets.
For that reason, we now introduce the bending gadget.

\subparagraph{Bending gadget.} The bending gadget is depicted in~\cref{fig:bend-gadget-integer}. 
The banning-type subpolygon prevents placing a guard on the vertex labeled $v$ of the outgoing corridor.
Additionally, the remaining vertex of the outgoing corridor is at a distance of $3$ from the two vertices incident to the incoming corridor. 
Therefore, we can only guard the outgoing corridor from vertices of the bending gadget if the incoming corridor does not need to be guarded, and vice versa, ensuring a feasible propagation of the truth assignment.

\begin{figure}[htb]
	\centering
	\includegraphics[page=4]{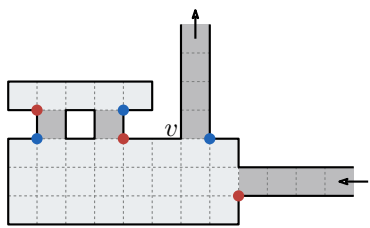}
	\caption{The bending gadget enables \ang{90} turns within office-like polygons whose vertices lie on integer coordinates.}
	\label{fig:bend-gadget-integer}
\end{figure}

\paragraph*{Gadgets for polyominoes}
\label{par:polyomino-gadgets}
As previously noted, this reduction also applies to polyominoes. 
However, we can use even simpler gadgets to demonstrate that determining whether a guard set with a dispersion distance of 4 exists is \NP-hard. 
Moreover, as these gadgets are thin, this result extends to thin polyominoes as visualized in~\cref{fig:hardness-polyominoes}.

\begin{figure}[htb]
    \begin{subfigure}[b]{0.25\textwidth}
         \includegraphics[page = 8]{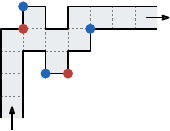}%
         \subcaption{Variable gadget.}
     \end{subfigure}\hfill%
     \begin{subfigure}[b]{0.25\textwidth}
         \includegraphics[page=1]{figures/hardness_polyominoes.pdf}%
         \subcaption{Bending gadget.}
     \end{subfigure}\hfill%
     \begin{subfigure}[b]{0.25\textwidth}
         \includegraphics[page = 4]{figures/hardness_polyominoes.pdf}%
         \subcaption{2-literal clause.}
     \end{subfigure}\hfill%
          \begin{subfigure}[b]{0.25\textwidth}
     	\includegraphics[page = 6]{figures/hardness_polyominoes.pdf}%
     	\subcaption{3-literal clause.}
     \end{subfigure}\hfill%
    \caption
    {Simpler gadgets for polyominoes.
    Note that the 3-clause gadget is shown solely for completeness, as it is equivalent to that for office-like polygons with vertices at integer coordinates.}
    \label{fig:hardness-polyominoes}
\end{figure}

\paragraph*{Gadgets for office-like polygons with vertices at arbitrary rational coordinates}
\label{par:hardness-scot-arbitrary}
Removing the requirement that vertices lie on integer coordinates provides more freedom, enabling a more intricate construction to show hardness distance $2+\varepsilon$ for any $\varepsilon > 0$.
In~particular, corridors can be placed in such a way that they remain independent, yet their vertical or horizontal distance can be arbitrarily small.
The underlying structure of the gadgets for this variant is essentially the same as those in the integer variant, now combined with a polygon similar to the one shown in~\cref{fig:2_sufficient_main}. 
However, as the gadgets grow larger due to the minimum number of nested corridors required for the construction to work properly, an additional gadget is necessary. 
This gadget is used to stretch or widen the respective polygon, creating space for other gadgets.

While the proofs of~\cref{thm:hardness-integer,cor:hardness-polyominoes} are straightforward, the proof of~\cref{cor:hardness-arbitrary} is more involved; all technical details are provided in~\cref{app:complexity-details}.
\section{Dynamic programming for hole-free independent office-like polygons} \label{dp_algo}
Computing optimal solutions for the \dagp turns out to be challenging. 
At~present, the only non-trivial class of polygons known to admit a polynomial-time algorithm comprises polyominoes whose dual graph is a tree~\cite{rieck-scheffer-dispersiveAGP}. 
Additionally, the \NP-hardness result implies that the existence of a polynomial-time algorithm is unlikely, even for office-like polygons that only contain independent corridors.

We now present a polynomial-time dynamic programming algorithm that computes the optimal solution for independent office-like polygons without holes.
The main component of this approach is to solve geometric independent set problems.

\begin{restatable}[$\star$]{theorem}{dpalgo}
    \label{thm:dpalgo}
    For independent office-like polygons without holes, there exists a polynomial-time algorithm that computes guard sets with maximum dispersion.
\end{restatable}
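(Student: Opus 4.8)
The plan is to exploit the fact that, for a hole-free office-like polygon, the \emph{dual graph} whose nodes are rooms and whose edges are corridors is a tree~$T$: since each corridor joins exactly two rooms and the domain is simply connected, any cycle of rooms would have to enclose a hole. Before running a bottom-up dynamic program on~$T$, I would record two coverage facts that turn the covering constraints into a purely combinatorial condition. First, every room~$R$ is a rectangle, so under $r$-visibility \emph{any} vertex lying on~$\partial R$ sees all of~$R$; hence $R$ is covered as soon as one of its incident vertices is selected. Second, from independence together with the corridors being strictly narrower than their rooms, one shows that a corridor~$C$ can be covered only by one of its own four corners. Thus covering reduces to: every corridor has one of its four corners selected, and every room has one incident vertex selected. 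Finally, the optimal dispersion equals one of the $O(n^2)$ pairwise geodesic $L_1$-distances between vertices (where $n$ is the number of vertices), so a binary search over these candidate values reduces ``maximize the minimum distance'' to $O(\log n)$ calls of a decision procedure that, for a fixed threshold~$\ell$, tests whether a covering vertex set with all pairwise geodesic distances at least~$\ell$ exists.

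The core of the decision procedure is a \emph{geometric independent set} problem solved inside each room. All candidate guards of a room~$R$ lie on the boundary of the rectangle~$R$, and since $R$ is convex the geodesic $L_1$-distance between two such vertices equals their in-room $L_1$-distance. Selecting guards in~$R$ that are pairwise at distance at least~$\ell$ is therefore an independent set in the conflict graph on the $O(n)$ boundary vertices of~$R$, subject to the side-constraint that every incident corridor is covered from the $R$-side whenever it is not covered from the far side. I would argue that this per-room packing-with-coverage problem is solvable in polynomial time because the candidate points live on the one-dimensional boundary cycle of a rectangle.

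The rooms are then stitched together by a DP over~$T$, rooted at an arbitrary room. The only interaction between the subtree rooted at a child room~$R$ and the rest of the polygon passes through the single interface corridor~$C_p$ joining~$R$ to its parent, because in a tree every geodesic between a guard inside the subtree and a guard outside it traverses~$C_p$. Hence the state I would carry upward is $O(1)$ bits recording which of the two $R$-side corners of~$C_p$ are selected (and whether $C_p$ is already covered from below), together with the minimum geodesic distance from a selected subtree-guard to each $R$-side corner of~$C_p$; the latter takes only $O(n)$ distinct values, so the table stays polynomial. Merging a room with its children amounts to combining the room's geometric-independent-set solution with the children's tables while checking the corridor's coverage and the across-corridor inequality $m_R + (\text{traversal of }C_p) + m_P \geq \ell$, and feasibility is read off at the root.

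The hard part is twofold. First, I must prove that the across-subtree distance constraints are fully captured by this $O(1)$-plus-$O(n)$ interface information, i.e.\ that no long-range coupling survives; this is exactly where the tree structure and the monotonicity of geodesic distance along corridors are essential, since a guard deep in a subtree is automatically far from any outside guard once the two interface corners obey the constraint. Second, and more delicate, is establishing that the per-room geometric independent set with its coverage side-constraints is polynomially solvable: although the candidate points lie on a one-dimensional boundary, the $L_1$-conflicts can wrap around corners and therefore do not form a simple interval structure, so the packing DP on the rectangle's boundary cycle must be set up carefully to remain polynomial.
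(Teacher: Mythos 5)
Your overall architecture coincides with the paper's: both exploit that the room--corridor dual graph of a hole-free office-like polygon is a tree, both reduce the optimization to $\mathcal{O}(\log n)$ calls of a decision procedure via binary search over the $\mathcal{O}(n^2)$ pairwise geodesic distances, and both use the fact that every geodesic between a guard inside a subtree and one outside it passes through a gate corner of the interface corridor (the paper's \cref{gate_dist_main}) to compress each subtree into a small interface summary. Your coverage observations (rooms are covered by any incident vertex; by independence a corridor is covered only from its own four corners) also match the paper's preliminary reductions.

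The genuine gap is in the merge step at a room $R_v$ with $k$ incident child branches, which you fold into a ``per-room geometric independent set on the boundary cycle of a rectangle'' and then assert, without argument, to be polynomially solvable. Two problems. First, the objects to be selected are not points on $\partial R_v$: for each child branch you must pick one entry of its table (a whole subtree solution, characterized by its distances to the two gate corners), and the conflict between two choices for different branches depends on those distances, so the instance is a $k$-partite independent-set problem in which you need an independent set hitting every part --- a problem that is \NP-hard in general and is not resolved by one-dimensionality of the candidate positions. Second, you yourself note that the $L_1$-conflicts ``wrap around corners and do not form a simple interval structure,'' which is precisely the obstruction; saying the DP ``must be set up carefully'' is not a proof. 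The paper closes exactly this hole with a dedicated argument: it splits the branches into those extending vertically and those extending horizontally from $R_v$, shows that within each family a left-to-right greedy choice of the ``smallest'' feasible configuration is optimal (because shortest paths between guards in sibling branches pass through specific corners of $R_v$, the greedy solution simultaneously maximizes the distances to the relevant corners), couples the two families only through the corners of $R_v$ via an outer probe over how the distance budget at the bottom-left corner is split, and resolves remaining vertical--horizontal conflicts by an iterative pruning step whose every deletion is certified safe by the corner-maximality of the greedy solutions. It also needs an inner search (per child, per local placement) to reduce each branch's table to constantly many configurations, each maximizing the distance to its gate vertex. Without an argument of this kind your decision procedure is not established to run in polynomial time, so the proof as proposed is incomplete at its central step.
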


The high-level idea of our approach is as follows: 
Consider a hole-free independent office-like polygon~$\polygon$ with~$n$ vertices. 
We solve the decision problem whether a guard set with a dispersion distance of at least $\ell$ exists. 
Afterward, we do a binary search over the~$\mathcal{O}(n^2)$ many possible dispersion distances (implied by the pairwise vertex distances in $\polygon$) to solve the corresponding maximization problem, i.e., finding a guard set realizing a maximum dispersion distance by using our algorithm for the decision problem as a subroutine.

Let $G(\polygon) = (V,E)$ represent~$\polygon$ as follows: 
There is a node $v \in V$ for each room $R_v$ of~$\polygon$, and an edge $e=\{v,w\} \in E$ if two rooms $R_v$ and~$R_w$ are connected by a corridor $C_e$.
We~now fix an arbitrary in-arborescence $G'$ of $G(\polygon)$, i.e., a directed tree with edges pointing towards a root, see~\cref{fig:overview_dp_outline}. 
For each node $v$ of $G'$, we define a subproblem and solve it only when all predecessor nodes are marked as processed.

We proceed with a high-level overview of the key terminology, followed by precise definitions of the associated subproblems and a detailed exposition of the core components of our algorithm.
Subsequently, we provide a brief analysis of the algorithm's runtime.

\begin{figure}[htb]
    \centering
    \includegraphics[scale = .95]{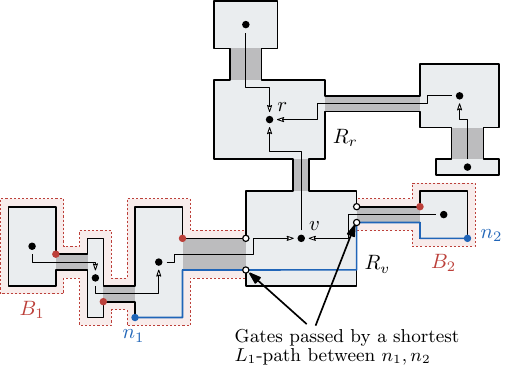}
    \caption{A hole-free independent office-like polygon $\polygon$ with its corresponding in-arborescence for $G(\polygon)$ with root $r$.}
    \label{fig:overview_dp_outline}
\end{figure}

\paragraph*{Preliminaries and subproblem definition}
Consider a directed edge $e=(u,v)$ in $G'$ and the maximal directed subtree~$T$ of~$G'$ rooted at~$u$. 
The~\emph{branch}~$B$ for~$e$ consists of the corridor $C_e$ and all rooms and corridors corresponding to nodes and edges of~$T$. 
We refer to the two vertices shared by~$C_e$ and~$R_v$ as the \emph{gate vertices} (or gate) of~$B$. 
For convenience, we define the branches rooted at~$v$ as those for edges of the form~$(\cdot,v)$. 
In~\cref{fig:overview_dp_outline}, the rooms and corridors outlined in red correspond to the branches $B_1$ and $B_2$ that are rooted at~$v$, with corresponding gate vertices marked in white. 

A guard set for a branch $B$ consists of guards exclusively placed at vertices within~$B$, ensuring coverage of every room and corridor in~$B$. 
However, some guard sets for~$B$ also cover~$R_v$ (by placing a guard at a feasible gate vertex), while others do not.
A set of \emph{configurations}~$\mathcal{C}$ for~$B$ is a collection of guard sets for~$B$, each with a dispersion distance of at least~$\ell$, satisfying the following property:
Consider an arbitrary guard set~$\guardset$ for~$\polygon$ with a dispersion distance of at least~$\ell$. 
Let $\guardset_B \subseteq \guardset$ denote the guards that are placed at vertices within~$B$. 
Then, there always exists a configuration $c \in \mathcal{C}$ such that $(\guardset \setminus \guardset_B) \cup c$ forms a guard set for~$\polygon$ with a dispersion distance of at least~$\ell$.

These definitions allow us to define a subproblem for each node in the in-arborescence~$G'$:
When processing the root~$r$ of $G'$, we solve the entire decision problem. 
Every other node~$v$ in~$G'$, is handled as follows: 
Let~$w$ denote the unique node such that there exists an edge from~$v$ to~$w$ in~$G'$. 
The goal is to find a set of configurations for the branch of $(v,w)$. 
Note that, by the time we process~$v$, all predecessor nodes have already been processed, and we therefore know a set of configurations for each branch rooted at~$v$. 
If, at some point, a computed configuration set is empty, we conclude that no guard set with a dispersion distance of at least~$\ell$ exists for~$\polygon$, not even within this branch.

\paragraph*{Computing configuration sets}
We now explain how we construct the configuration set for a branch $B_e$ belonging to an edge $e = (v,w)$.
Consider all feasible guard placements on the vertices of a single corridor $C_e$ and its incident room $R_v$; as an example, see the red guards in~\cref{configurations_main}.
Note that the number of distinct feasible vertex guard sets in this combined region is clearly bounded by a constant.
\begin{figure}[htb]
	\centering
	\begin{subfigure}[b]{0.18\textwidth}
		\centering
		\includegraphics[page = 1]{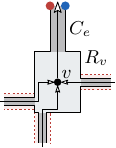}
		\label{conf1}
	\end{subfigure}
	\hfill
	\begin{subfigure}[b]{0.18\textwidth}
		\centering
		\includegraphics[page=2]{figures/configurations.pdf}
	\end{subfigure}
	\hfill
	\begin{subfigure}[b]{0.18\textwidth}
		\centering
		\includegraphics[page = 3]{figures/configurations.pdf}
	\end{subfigure} 
	\hfill
	\begin{subfigure}[b]{0.18\textwidth}
		\centering
		\includegraphics[page = 4]{figures/configurations.pdf}
		\label{corner_place}
	\end{subfigure}
	\hfill
	\begin{subfigure}[b]{0.18\textwidth}
		\centering
		\includegraphics[page = 5]{figures/configurations.pdf}
		\label{no_blue}
	\end{subfigure}
	\caption{Schematic illustration of sensible guard placements at corners of $R_v$ and in $C_e$ (depicted red).
		When fixing one of the options, it suffices to find a guard set for $e$'s branch that has a dispersion distance of at least~$\ell$ and, among those, maximizes the distance to the respective blue vertex.}
	\label{configurations_main}
\end{figure}
Furthermore, some of these guard sets are not sensible, as, e.g., placing two guards at corners of a room $R_v$ is unnecessary, as both would have identical visibility polygons.
Now, for every such placement~$X$, we aim to derive a configuration for~$B_e$ by selecting the guards in~$X$ along with one precomputed configuration for each branch rooted at~$v$.
However, choosing such a combination requires careful consideration of the following aspects:

\descriptionlabel{(1)} 
As mentioned above, some configurations for branches rooted at~$v$ do cover $R_v$, while others do not.
Therefore, if~$X$ does not contain a guard in $R_v$, an arbitrary choice of configurations may result in $R_v$ remaining uncovered.
To address this issue, we fix a configuration (for some branch rooted at~$v$) that places a guard at one of its gate vertices, and hence in $R_v$.
To ensure correctness, we probe every possible option.

\medskip
\descriptionlabel{(2)} 
By definition, all pairs of guards within a single configuration~$c$ maintain a minimum dispersion distance of~$\ell$; however, this condition may be violated when considering interactions between guards in~$X$ and those in~$c$.
In such instances, the configuration~$c$ is deemed invalid and consequently excluded from further consideration.

\medskip
\descriptionlabel{(3)}
When computing the configuration~$c$ that contains the guard set $X$, our goal is to ensure that whenever there exists a guard set for~$\polygon$ with a dispersion distance of at least~$\ell$ that includes~$X$, then there also exists such a guard set that includes~$c$. 
Leveraging on the fact that $\polygon$ is independent, we arrive at the following crucial observation.

\begin{restatable}{observation}{gateVerticesShortestPath}\label{gate_dist_main}
    Consider two vertices~$n_1, n_2$ in~$\polygon$ such that~$n_1$ is inside, and $n_2$ is outside of a branch~$B$.
    Then, there always exists a shortest geodesic $L_1$-path between~$n_1$ and~$n_2$ that passes through one of the gate vertices of~$B$.
\end{restatable}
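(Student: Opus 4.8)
The plan is to exploit that the branch $B$ is attached to the rest of $\polygon$ \emph{only} through the corridor $C_e$. Write $g_1,g_2$ for the two gate vertices and let $s$ be the segment joining them, i.e.\ the side of $C_e$ contained in the edge of $R_v$; assume w.l.o.g.\ that $C_e$ is vertical, so $s$ is horizontal with endpoints $g_1,g_2$. Then $s$ is a cut separating $B$ from $\polygon\setminus B$, and since $n_1\in B$ while $n_2\notin B$, every $n_1$--$n_2$ path meets $s$. Consequently
\[
  \delta(n_1,n_2)=\min_{q\in s}\bigl(\delta(n_1,q)+\delta(q,n_2)\bigr),
\]
and it suffices to show this minimum is attained at $q=g_1$ or $q=g_2$: any shortest path realizing such a $q$ then passes through a gate vertex.

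First I would dispatch the easy half. Writing $h(q):=\delta(q,n_2)$ and parametrizing $s$ by its horizontal coordinate, $h$ is $1$-Lipschitz along $s$, since displacing $q$ within the straight segment $s$ by an amount $t$ changes its geodesic distance to the fixed point $n_2$ by at most $t$. Hence $h$ has slope in $[-1,+1]$ everywhere on $s$.

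The crux is the companion statement for $f(q):=\delta(n_1,q)$: I claim that, because $n_1$ is a \emph{vertex} lying inside $B$, the function $f$ is \emph{concave} on $s$; concretely, $f(q)=\min\{\,\delta(n_1,g_1)+|g_1q|,\ \delta(n_1,g_2)+|g_2q|\,\}$, the lower envelope of a slope-$(+1)$ line and a slope-$(-1)$ line. The inequality ``$\le$'' is immediate, as one may always route through a corner of $C_e$; the content is the reverse inequality, namely that no shortest path from $n_1$ reaches an interior point of $s$ strictly faster than via a gate vertex. This is exactly where I would invoke the office-like structure together with independence. A shortest path from $n_1$ must descend through $C_e$; by induction over the tree of rooms and corridors comprising $B$, the only way to arrive at the bottom edge $s$ of $C_e$ through a straight vertical descent at an interior abscissa would require a polygon vertex of $B$ sitting strictly above the relative interior of $s$. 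Since every room is strictly wider than each incident corridor, and since by independence the extensions of equally-oriented stacked corridors are disjoint or coincident, no vertex of $B$ lies directly above the open segment $s$; thus no such interior straight shot exists, and routing through $g_1$ or $g_2$ is optimal, establishing concavity of $f$.

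Finally I would combine the two facts. On the portion of $s$ where $f$ has slope $+1$ we get $(f+h)'\ge 0$, and on the portion where $f$ has slope $-1$ we get $(f+h)'\le 0$; hence $f+h$ is non-decreasing then non-increasing along $s$, so its minimum is attained at an endpoint of $s$, i.e.\ at $g_1$ or $g_2$, as desired. The genuine obstacle is the concavity claim for $f$: making precise, by induction on the branch and via the independence hypothesis together with the ``rooms strictly wider than corridors'' condition, that a vertex of $B$ never admits an interior straight descent onto $s$ cheaper than the two corner routes. The Lipschitz bound on $h$ and the cut/separation reduction are routine by comparison.
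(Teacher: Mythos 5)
The paper states this as an \emph{Observation} and never actually proves it (in the appendix it is merely restated and then used), so there is no in-paper argument to measure you against; I am assessing your proposal on its own. Your architecture is sound and the conclusion is correct: the gate segment $s$ separates $B$ from the rest of $\polygon$ (this uses hole-freeness), $h(q)=\delta(q,n_2)$ is $1$-Lipschitz along $s$, and once $f(q)=\delta(n_1,q)$ is known to be the lower envelope of a slope-$(+1)$ line anchored at $g_1$ and a slope-$(-1)$ line anchored at $g_2$, your increasing-then-decreasing argument correctly forces the minimum of $f+h$ onto a gate vertex.

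The genuine weak point is your justification of the envelope formula for $f$. The intermediate claim that \emph{no vertex of $B$ lies directly above the relative interior of $s$} is false as stated: independence only constrains pairs of corridors that can see one another, so a corridor two or more rooms deep in $B$ may well have a vertex whose abscissa lies strictly between those of $g_1$ and $g_2$. What you actually need is that no such vertex has an \emph{unobstructed} axis-parallel shot onto the interior of $s$, and your appeal to ``induction over the tree'' does not supply this; it also silently relies on the folklore fact that shortest rectilinear paths may be assumed to turn only at vertices. The clean repair is purely local to the room $R_u$ of $B$ adjacent to $C_e$: any geodesic from $n_1$ to a point of $s$ must cross the opposite side $s'$ of $C_e$ on $\partial R_u$, and it reaches $R_u$ either from $n_1$ lying on $\overline{R_u}\cup\overline{C_e}$ or through the unique corridor interface of $R_u$ leading to the sub-branch containing $n_1$ (unique by hole-freeness). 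Every such source --- a corner of $R_u$, another corridor on the same edge of $R_u$ as $C_e$, a parallel corridor on the opposite edge (this is where independence enters), or a perpendicular corridor on a side edge (this is where ``rooms strictly wider than corridors'' enters) --- has its entire projection on one side of the $x$-range $[x(g_1),x(g_2)]$ of $C_e$. Hence $\delta(n_1,\cdot)$ restricted to $s'$, and therefore $f$ on $s$, is a single affine function of slope $+1$ or $-1$, which is even stronger than your envelope formula and makes the rest of your argument go through verbatim. With that step replaced, the proof is correct.
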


An example for~\cref{gate_dist_main} is highlighted by the blue paths in~\cref{fig:overview_dp_outline}. 
Moreover, again due to the assumption that $\polygon$ is independent, at least one guard needs to be placed in every corridor, and hence in $C_e$.
Therefore, when considering a fixed placement~$X$, the following holds for one gate vertex~$g$ of~$B_e$:  
when all guards in the chosen configuration maximize their minimum distance to~$g$, the distance to the other gate vertex of~$B_e$ is maximized as well.
For an exemplary illustration, consider the blue vertices in \cref{configurations_main}.
With \cref{gate_dist_main}, this is best possible when using~$X$.

We now want to choose a configuration with these properties. 
To achieve this, we perform a binary search over the $\mathcal{O}(n^2)$ pairwise vertex distances in~$\polygon$, aiming to find the choice of configurations for branches rooted at~$v$ that maximizes the distance to~$g$. 
For each distance~$\ell'$ considered during the current step of the binary search, we discard all configurations of branches rooted at~$v$ where a guard has a distance less than~$\ell'$ to $g$.

\medskip
\descriptionlabel{(4)} 
After addressing the aforementioned aspects, the crucial task is to select one configuration for each branch rooted at~$v$ so that no two guards from different configurations are closer than~$\ell$, or to determine that no such selection exists.

This problem corresponds to finding an independent set of size~$k$ in the following auxiliary graph, where~$k$ denotes the number of branches rooted at~$v$:
Each configuration of a branch rooted at~$v$ is represented by a node. 
Two nodes are connected by an edge if both corresponding configurations $c_1, c_2$ belong to the same branch, or if there exists a guard $g_1 \in c_1$ and a guard $g_2 \in c_2$ such that the distance between them is strictly smaller than $\ell$.

To this end, we propose a straightforward algorithm that iteratively prunes infeasible configurations--those that can never contribute to a valid solution--proceeding until a solution is identified or the algorithm certifies that no solution exists.
The algorithm essentially consists of two phases.
In the first phase, the algorithm selects configurations (with sufficiently large dispersion distance) in two steps for all branches geometrically extending (1)~vertically and (2)~horizontally from the room $R_v$, respectively.
This can be done using a simple greedy approach.
In the second phase, the algorithm combines the solutions for both subproblems, either finding a valid solution, or identifying a configuration that can provably be discarded.
In the latter case, the algorithm is restarted with the updated set of available configurations.

\paragraph*{Runtime of the dynamic programming approach}
To solve the decision problem, we need to solve one subproblem for each node of $G'$. 
The~binary search on the possible dispersion distances to solve the maximization problem takes an additional logarithmic factor.
This yields a total runtime of $\mathcal{O}(t_n \cdot n \log n)$, where~$t_n$ denotes the time needed to solve a subproblem. 
We provide a detailed description of the algorithm for solving a subproblem and prove that its amortized runtime across all subproblem calls is $\mathcal{O}(n^4 \log n)$ in~\cref{alg_subproblems}.
With this, the total runtime of the dynamic programming approach is $\mathcal{O}(n^5 \log^2 n)$. 

Full details of all steps of our approach are given in~\cref{dp_algo_x}.
\section{Computing optimal solutions in practice} \label{satsolver}

Mixed Integer Programming is a popular approach for solving \NP-hard optimization problems.
This problem allows for a simple formulation.
We define a binary variable $x_g\in \mathbb{B}$ for each vertex $g\in V(\mathcal{P})$ in polygon $\mathcal{P}$ to indicate guard placement and a continuous variable $\ell \in \mathbb{R}^+_0$ for the dispersion distance.
The objective maximizes $\ell$, subject to (i) $\forall w\in \mathcal{W}: \sum_{g \in V(\mathcal{P}), w\in Vis(g)} x_g \geq 1$, ensuring every witness in $\mathcal{W}$ is covered, and (ii) $\forall g, g' \in V(\mathcal{P}): x_{g} \wedge x_{g'} \rightarrow \ell \leq \delta(g, g')$, enforcing that no two guards $g,g'\in V(\mathcal{P})$ with distance $\delta(g, g')$ smaller than $\ell$ can be selected.
The second constraint is non-linear but can be reformulated into a linear constraint using the big-M method.
However, this reformulation can be problematic for traditional MIP solvers, as it weakens the linear relaxation and may lead to numerical instabilities.
Alternatively, constraint programming solvers like CP-SAT can handle these constraints more effectively, as they rely less on linear relaxation.

Less obvious is that a SAT solver can also be used to solve this problem, leveraging the fact that there are only $\mathcal{O}(n^2)$ possible objective values to check for feasibility.
For a fixed $\ell$, feasibility is determined by the constraints $\bigwedge_{w \in \mathcal{W}} \left(\bigvee_{g \in V(\mathcal{P}), w \in \text{Vis}(g)} x_g\right)$, ensuring coverage, and $\bigwedge_{g, g' \in V(\mathcal{P}), \delta(g, g') < \ell} \left(\overline{x_{g}} \vee \overline{x_{g'}}\right)$, enforcing the minimum guard distance.
A binary search over $\ell$ then yields the optimal solution.
For efficiency, $\ell$ should be updated based on the actual solution returned by the SAT solver rather than just the probed values.

To obtain a sufficient witness set $\mathcal{W}$, we use atomic visibility polygons (AVPs) as introduced by Couto et al.~\cite{couto}, which yield a shadow witness set.
The key idea is to construct the arrangement of visibility polygons defined by the polygon's vertices, where each face (referred to as AVP) is covered by the same guard set.
From this arrangement, shadow AVPs are identified as the local minima in the partial order of AVPs based on their covering sets.
Selecting a single witness from each shadow AVP ensures full coverage of the polygon, even under $r$-visibility constraints.
For computing the $r$-visibility polygons, we implemented a simple $\mathcal{O}(n \log n)$ algorithm; see~\cref{app:visibility}. 
The basic idea is to partition the visibility region into four quadrants. 
Within each quadrant, we compute a Pareto front over the vertices, which primarily requires sorting followed by a sequential scan.

\subparagraph{Empirical evaluation.}\label{experiments}
We analyze the practical complexity\footnote{Code and data can be found at \url{https://github.com/KaiKobbe/dispersive_agp_solver}.} by addressing the following research questions:

\begin{description}
    \item[RQ1] Among the available SAT solvers, which one performs best for our problem?
    \item[RQ2] Which approach -- MIP, CP-SAT, or SAT -- yields the best performance?
    \item[RQ3] How does the runtime vary across different types of orthogonal polygons?
    \item[RQ4] How is the total runtime distributed among the individual computational steps?
\end{description}

To answer these questions, we use a benchmark of \num{2344} instances, consisting of \num{1599} randomly generated office-like polygons and \num{745} orthogonal polygons from the SBGDB~\cite{Ede&20d}, with up to \num{1600} vertices.
The office-like polygons, generated with and without holes, are created by iteratively placing randomly sized rooms in the plane, ensuring connectivity through corridors.
Additional corridors are added to introduce holes where applicable.
We generated \num{20} instances for each multiple of \num{40} vertices, ranging from \num{40} to \num{1600}.

All instances were executed once for each solver on an Ubuntu Linux workstation equipped with an AMD Ryzen 9 7900 processor and \SI{96}{\giga\byte} of RAM.
The implementation was written in Python 3.12.8 using PySAT 1.8dev14, OR-Tools 9.11.4210, and Gurobi 12.0.1.
Geometric computations were performed in C++ using CGAL 5.6.1 and integrated via PyBind11.

\begin{description}
\item[RQ1] \Cref{fig:satplt}~(top) indicates that all SAT solvers perform similarly, with Glucose4 demonstrating a slight performance advantage. 
Consequently, we select Glucose4 as the default SAT solver for the subsequent experiments.
\item[RQ2]
\Cref{fig:satplt}~(bottom) demonstrates that the SAT-based approach significantly outperforms the MIP and CP-SAT approaches, requiring only a fraction of the time, particularly for orthogonal polygons.
While CP-SAT was still able to solve all instances within \SI{300}{\second}, Gurobi failed to solve \num{30} instances, despite being faster for smaller instances.
\item[RQ3] While MIP and CP-SAT can handle office-like polygons significantly faster than orthogonal polygons, the SAT-based approach shows no significant runtime difference between polygon types.
Overall, neither holes nor the type of orthogonal polygon substantially impact the runtime of the SAT-based approach.
However, these factors do influence the MIP and CP-SAT approaches visibly.
\item[RQ4] For instances with at least \num{1500} vertices, the SAT-based approach spends on average \SI{4.397}{\second} of the overall runtime of \SI{10.475}{\second} on building the witness sets.
On average, \num{14.69} probes of the objective value are required, of which \num{11.56} result in infeasibility.
The~time spent in the SAT solver is only \SI{0.011}{\second} on average, with the remaining time spent on building the models.
\end{description}

\begin{figure}[htb]
    \centering
    \begin{subfigure}[b]{0.9\textwidth}
        \centering
        \includegraphics[width=0.675\columnwidth]{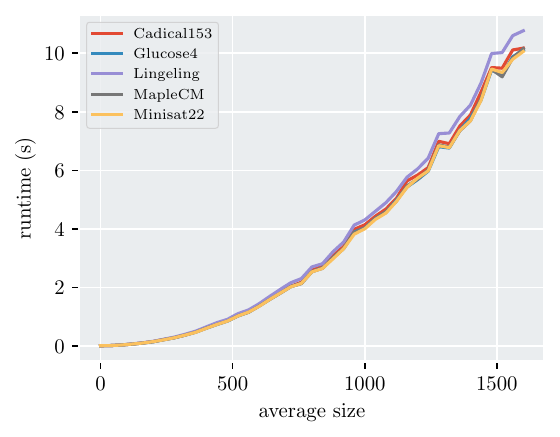}
    \end{subfigure}\\
    \vspace*{1em}
    \begin{subfigure}[b]{0.9\textwidth}
         \centering
         \includegraphics[width=0.675\columnwidth]{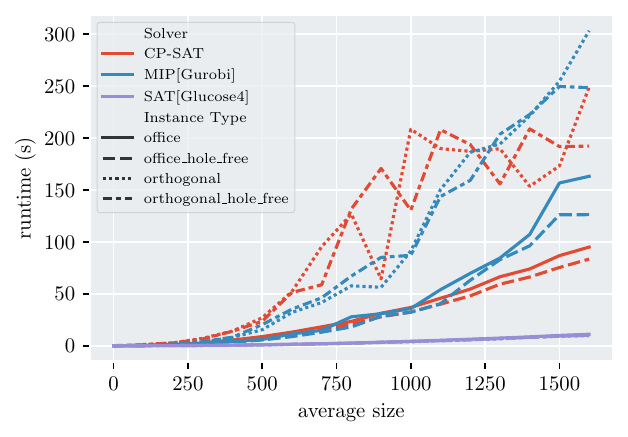}
     \end{subfigure}
    \caption{(top) Comparison of the different SAT solvers. (bottom) Comparison of the runtime (limited to \SI{300}{\second}) of the different approaches on different instance types.}
    \label{fig:satplt}
\end{figure}	
\section{Conclusions}\label{sec:conclusions}
We conducted an extensive investigation of the \dagp within orthogonal polygons, focusing specifically on office-like polygons, and established a range of results. 
To this end, we distinguished between cases where polygon vertices are restricted to integer coordinates and those where they are not.

On the theoretical side, we proved that the problem is \NP-complete, even for independent office-like polygons. 
In particular, we showed hardness of deciding whether there exists a guard set that realizes a dispersion distance of $4$ in office-like polygons with vertices at integer coordinates.
The construction also shows that the same holds true for polyominoes.
Complementarily, we presented polynomial-time algorithms for computing worst-case optimal solutions of dispersion distance $3$.
For independent office-like polygons without holes, we further introduced a dynamic programming approach that efficiently computes the optimal solution in polynomial time.
It~remains an open question whether the dynamic programming approach can be extended to accommodate the case where corridors are not independent.
Moreover, the question of whether guard sets with a dispersion distance of~$3$ can be computed in polynomial time for polyominoes containing holes remains unresolved.

On the practical side, despite the problem's theoretical complexity, we demonstrated its practical tractability for arbitrary orthogonal polygons under the constraint of $r$-visibility, at least for random instances up to a size of \num{1600} vertices.

While research has focused on the problem with vertex guards, the variant involving point guards remains an open question.
   	\bibliography{bibliography}

\begin{thebibliography}{10}

\bibitem{irrational-guards}
Mikkel Abrahamsen, Anna Adamaszek, and Tillmann Miltzow.
\newblock Irrational guards are sometimes needed.
\newblock In {\em Symposium on Computational Geometry (SoCG)}, pages 3:1--3:15,
  2017.
\newblock \href {https://doi.org/10.4230/LIPIcs.SoCG.2017.3}
  {\path{doi:10.4230/LIPIcs.SoCG.2017.3}}.

\bibitem{abrahamsenER}
Mikkel Abrahamsen, Anna Adamaszek, and Tillmann Miltzow.
\newblock The art gallery problem is {$\exists\mathbb{R}$}-complete.
\newblock {\em Journal of the {ACM}}, 69(1):4:1--4:70, 2022.
\newblock \href {https://doi.org/10.1145/3486220} {\path{doi:10.1145/3486220}}.

\bibitem{bae}
Sang~Won Bae and Haitao Wang.
\newblock ${L}_1$ shortest path queries in simple polygons.
\newblock {\em Theoretical Computer Science}, 790:105--116, 2019.
\newblock \href {https://doi.org/10.1016/j.tcs.2019.04.023}
  {\path{doi:10.1016/j.tcs.2019.04.023}}.

\bibitem{BartschiGMTW14}
Andreas B{\"{a}}rtschi, Subir~Kumar Ghosh, Mat{\'{u}}s Mihal{\'{a}}k, Thomas
  Tschager, and Peter Widmayer.
\newblock Improved bounds for the conflict-free chromatic art gallery problem.
\newblock In {\em Symposium on Computational Geometry (SoCG)}, pages 144--153,
  2014.
\newblock \href {https://doi.org/10.1145/2582112.2582117}
  {\path{doi:10.1145/2582112.2582117}}.

\bibitem{BartschiS14}
Andreas B{\"{a}}rtschi and Subhash Suri.
\newblock Conflict-free chromatic art gallery coverage.
\newblock {\em Algorithmica}, 68(1):265--283, 2014.
\newblock \href {https://doi.org/10.1007/s00453-012-9732-5}
  {\path{doi:10.1007/s00453-012-9732-5}}.

\bibitem{baur}
Christoph Baur and S{\'a}ndor~P. Fekete.
\newblock Approximation of geometric dispersion problems.
\newblock {\em Algorithmica}, 30(3):451--470, 2001.
\newblock \href {https://doi.org/10.1007/BFb0053964}
  {\path{doi:10.1007/BFb0053964}}.

\bibitem{berman}
Piotr Berman, Marek Karpinski, and Alex~D. Scott.
\newblock Approximation hardness and satisfiability of bounded occurrence
  instances of {SAT}.
\newblock {\em Electronic Colloquium on Computational Complexity}, TR03, 2003.
\newblock URL: \url{https://eccc.weizmann.ac.il/report/2003/022/}.

\bibitem{cappanera1999survey}
Paola Cappanera.
\newblock A survey on obnoxious facility location problems.
\newblock Technical report, University of Pisa, 1999.
\newblock URL: \url{https://dl.acm.org/doi/abs/10.5555/898115}.

\bibitem{cerioli}
M{\'{a}}rcia~R. Cerioli, Lu{\'{e}}rbio Faria, Talita~O. Ferreira, Carlos A.~J.
  Martinhon, F{\'{a}}bio Protti, and Bruce~A. Reed.
\newblock Partition into cliques for cubic graphs: {P}lanar case, complexity
  and approximation.
\newblock {\em Discrete Applied Mathematics}, 156(12):2270--2278, 2008.
\newblock \href {https://doi.org/10.1016/J.DAM.2007.10.015}
  {\path{doi:10.1016/J.DAM.2007.10.015}}.

\bibitem{couto}
Marcelo~C. Couto, Pedro~Jussieu de~Rezende, and Cid~C. de~Souza.
\newblock An exact algorithm for minimizing vertex guards on art galleries.
\newblock {\em International Transactions in Operational Research},
  18:425--448, 2011.
\newblock \href {https://doi.org/10.1111/j.1475-3995.2011.00804.x}
  {\path{doi:10.1111/j.1475-3995.2011.00804.x}}.

\bibitem{cruz}
Vasco Cruz and Ana~Paula Tom{\'{a}}s.
\newblock On $r$-guarding {SCOT}s - {A} new family of orthogonal polygons.
\newblock In {\em Latin American Symposium on Theoretical Informatics (LATIN)},
  pages 713--729, 2022.
\newblock \href {https://doi.org/10.1007/978-3-031-20624-5_43}
  {\path{doi:10.1007/978-3-031-20624-5_43}}.

\bibitem{Ede&20d}
Günther Eder, Martin Held, Steinþór Jasonarson, Philipp Mayer, and Peter
  Palfrader.
\newblock Salzburg database of polygonal data: {P}olygons and their generators.
\newblock {\em Data in Brief}, 31:105984, 2020.
\newblock \href {https://doi.org/10.1016/j.dib.2020.105984}
  {\path{doi:10.1016/j.dib.2020.105984}}.

\bibitem{erickson2010chromatic}
Lawrence~H. Erickson and Steven~M. LaValle.
\newblock A chromatic art gallery problem.
\newblock Technical report, University of Illinois, 2010.

\bibitem{EricksonL11}
Lawrence~H. Erickson and Steven~M. LaValle.
\newblock An art gallery approach to ensuring that landmarks are
  distinguishable.
\newblock In {\em Robotics: Science and Systems VII}, 2011.
\newblock \href {https://doi.org/10.15607/RSS.2011.VII.011}
  {\path{doi:10.15607/RSS.2011.VII.011}}.

\bibitem{FeketeFHM014}
S{\'{a}}ndor~P. Fekete, Stephan Friedrichs, Michael Hemmer, Joseph S.~B.
  Mitchell, and Christiane Schmidt.
\newblock On the chromatic art gallery problem.
\newblock In {\em Canadian Conference on Computational Geometry (CCCG)}, 2014.
\newblock URL: \url{https://cccg.ca/proceedings/2014/papers/paper11.pdf}.

\bibitem{dispersiveAGP-CCCG24}
S{\'{a}}ndor~P. Fekete, Joseph S.~B. Mitchell, Christian Rieck, Christian
  Scheffer, and Christiane Schmidt.
\newblock Dispersive vertex guarding for simple and non-simple polygons.
\newblock In {\em Canadian Conference on Computational Geometry (CCCG)}, pages
  33--40, 2024.
\newblock URL:
  \url{https://cosc.brocku.ca/~rnishat/CCCG_2024_proceedings.pdf#section.0.4}.

\bibitem{FialaKP05}
Jir{\'{\i}} Fiala, Jan Kratochv{\'{\i}}l, and Andrzej Proskurowski.
\newblock Systems of distant representatives.
\newblock {\em Discrete Applied Mathematics}, 145(2):306--316, 2005.
\newblock \href {https://doi.org/10.1016/J.DAM.2004.02.018}
  {\path{doi:10.1016/J.DAM.2004.02.018}}.

\bibitem{garey}
M.~R. Garey and David~S. Johnson.
\newblock {\em Computers and Intractability: {A} Guide to the Theory of
  NP-Completeness}.
\newblock W. H. Freeman, 1979.

\bibitem{hksvw-ccgoag-18}
Frank Hoffmann, Klaus Kriegel, Subhash Suri, Kevin Verbeek, and Max Willert.
\newblock Tight bounds for conflict-free chromatic guarding of orthogonal art
  galleries.
\newblock {\em Computational Geometry}, 73:24--34, 2018.
\newblock \href {https://doi.org/10.1016/j.comgeo.2018.01.003}
  {\path{doi:10.1016/j.comgeo.2018.01.003}}.

\bibitem{IwamotoI20}
Chuzo Iwamoto and Tatsuaki Ibusuki.
\newblock Computational complexity of the chromatic art gallery problem for
  orthogonal polygons.
\newblock In {\em Conference and Workshops on Algorithms and Computation
  (WALCOM)}, pages 146--157, 2020.
\newblock \href {https://doi.org/10.1007/978-3-030-39881-1_13}
  {\path{doi:10.1007/978-3-030-39881-1_13}}.

\bibitem{iwamoto}
Chuzo Iwamoto and Toshihiko Kume.
\newblock Computational complexity of the $r$-visibility guard set problem for
  polyominoes.
\newblock In {\em Discrete and Computational Geometry and Graphs}, pages
  87--95, 2014.
\newblock \href {https://doi.org/10.1007/978-3-319-13287-7_8}
  {\path{doi:10.1007/978-3-319-13287-7_8}}.

\bibitem{LeeL86}
D.~T. Lee and Arthur~K. Lin.
\newblock Computational complexity of art gallery problems.
\newblock {\em {IEEE} Transactions on Information Theory}, 32(2):276--282,
  1986.
\newblock \href {https://doi.org/10.1109/TIT.1986.1057165}
  {\path{doi:10.1109/TIT.1986.1057165}}.

\bibitem{margalit}
Avraham Margalit and Gary~D. Knott.
\newblock An algorithm for computing the union, intersection or difference of
  two polygons.
\newblock {\em Computers \& Graphics}, 13(2):167--183, 1989.
\newblock \href {https://doi.org/10.1016/0097-8493(89)90059-9}
  {\path{doi:10.1016/0097-8493(89)90059-9}}.

\bibitem{meijer2024irrationalguardsneeded}
Lucas Meijer and Tillmann Miltzow.
\newblock Sometimes two irrational guards are needed, 2024.
\newblock \href {https://doi.org/10.48550/arXiv.2212.01211}
  {\path{doi:10.48550/arXiv.2212.01211}}.

\bibitem{orourke}
Joseph O'Rourke.
\newblock {\em Art gallery theorems and algorithms}.
\newblock Oxford New York, NY, USA, 1987.

\bibitem{papadimitriou}
Christos~H. Papadimitriou.
\newblock {\em Computational complexity}.
\newblock Addison-Wesley, 1994.

\bibitem{rieck-scheffer-dispersiveAGP}
Christian Rieck and Christian Scheffer.
\newblock The dispersive art gallery problem.
\newblock {\em Computational Geometry}, 117:102054, 2024.
\newblock \href {https://doi.org/10.1016/j.comgeo.2023.102054}
  {\path{doi:10.1016/j.comgeo.2023.102054}}.

\bibitem{SchuchardtH95}
Dietmar Schuchardt and Hans{-}Dietrich Hecker.
\newblock Two {NP}-hard {A}rt-{G}allery problems for ortho-polygons.
\newblock {\em Mathematical Logic Quarterly}, 41:261--267, 1995.
\newblock \href {https://doi.org/10.1002/malq.19950410212}
  {\path{doi:10.1002/malq.19950410212}}.

\bibitem{shermer}
Thomas~C. Shermer.
\newblock Recent results in art galleries (geometry).
\newblock {\em Proceedings of the IEEE}, 80(9):1384--1399, 1992.
\newblock \href {https://doi.org/10.1109/5.163407}
  {\path{doi:10.1109/5.163407}}.

\bibitem{toth2004handbook}
Csaba~D. Toth, Joseph O'Rourke, and Jacob~E. Goodman.
\newblock {\em Handbook of Discrete and Computational Geometry}.
\newblock Discrete Mathematics and Its Applications. CRC Press, 2004.

\bibitem{urrutia}
Jorge Urrutia.
\newblock Art gallery and illumination problems.
\newblock In {\em Handbook of Computational Geometry}, pages 973--1027.
  Elsevier, 2000.
\newblock \href {https://doi.org/10.1016/b978-044482537-7/50023-1}
  {\path{doi:10.1016/b978-044482537-7/50023-1}}.

\bibitem{worman}
Chris Worman and J.~Mark Keil.
\newblock Polygon decomposition and the orthogonal art gallery problem.
\newblock {\em International Journal of Computational Geometry \&
  Applications}, 17(02):105--138, 2007.
\newblock \href {https://doi.org/10.1142/S0218195907002264}
  {\path{doi:10.1142/S0218195907002264}}.

\end{thebibliography}

	\appendix
	\section{Omitted details of~\cref{chap:wc_opt}: Worst-case optimality}
We provide omitted details for \cref{chap:wc_opt}.
We start by providing a thorough analysis of a worst-case optimal algorithm that guarantees a dispersion distance of at least~$3$ for office-like polygons with vertices at integer coordinates.
Afterward, for the unrestricted case, we construct office-like polygons in which every guard set has a dispersion distance smaller than~$2+\varepsilon$ for any~$\varepsilon > 0$. 

\subsection{Proof of \cref{thm:worst-case-integer-scots}}
\label{app:worst-case-integer}
\newcommand{\vbl}[0]{v_i^{bl}}
\newcommand{\vtl}[0]{v_i^{tl}}
In \cref{vpic_main}, we already discussed examples of office-like polygons with vertices at integer coordinates for which a dispersion distance of~$3$ is optimal.
We now provide a detailed description and analysis of an algorithm that computes worst-case optimal guard sets.

\worstCaseIntegerScots*

\begin{proof}
    Consider an office-like polygon~$\polygon = (\mathcal R, \mathcal C)$ with vertices at integer coordinates. 
    We construct a guard set for~$\polygon$ with a dispersion distance of at least~$3$ as follows. 
    First, we place guards to cover all vertical corridors. 
    Second, we basically repeat this procedure to cover all horizontal corridors. 
    Third, we place guards at corners of rooms that are not visible to any previously placed guard. 
    
    The union of these guards forms a guard set for~$\polygon$. 
    Moreover, the algorithm never places a guard with a distance less than~$3$ to a previously placed guard.
    As an example, \cref{fig:example_wcopt_algo} illustrates the guard set obtained by the algorithm.

    \begin{figure}[htb]
    \centering
    \includegraphics[page = 2]{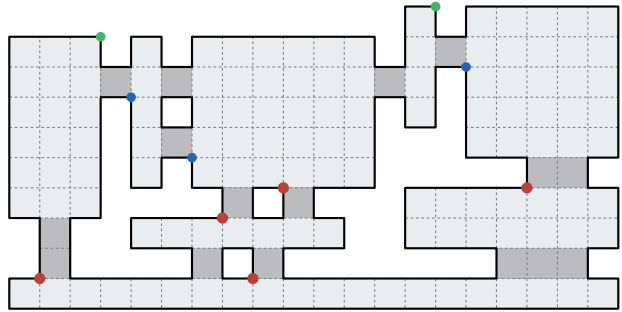}
    \caption{Exemplary office-like polygon with guard set computed by our worst-case optimal algorithm. Red guards are added in the first phase, blue guards in the second phase, and green guards in the third phase.}
    \label{fig:example_wcopt_algo}
    \end{figure}

    \subparagraph{Phase (1): Vertical corridors.} 
    We define a partial order on~$\mathcal R$ such that, for every pair of rooms $R_1, R_2 \in \mathcal R$, we say~$R_1 \prec R_2$ if there is a vertical corridor incident to both, and~$R_1$ is below~$R_2$. 
    The general approach is to select a room~$R$ that is a local maximum among the unprocessed rooms, process it, and then continue with a next local maximum. 
    More precisely, when processing~$R$, we place guards which ensure that all corridors extending to the top from~$R$ are seen. 
    We do so by exclusively placing guards on left walls of these corridors.

    Following, we explain how to place guards when processing room~$R$. 
    Let~$C_1, \dots, C_j$ denote the corridors extending to the top from~$R$, ordered from left to right. 
    We consider~$C_i$ for increasing~$i$, and denote by~$\vbl$ (or~$\vtl$) the bottom-left (or top-left) vertex of~$C_i$. 
    The algorithm works as follows: 

    \begin{enumerate}
        \item If~$C_i$ is seen by a previously placed guard, then do not place a guard in~$C_i$.
        \item Else, if no guard placed so far has distance less than~$3$ to~$\vbl$, then place a guard at~$\vbl$.
        \item Else, place a guard at~$\vtl$.
    \end{enumerate}
        
    Clearly, after the~$j$-th step, the corridors~$C_1, \dots, C_j$ are seen. 
    We analyze the strategy more deeply in order to show that each time the algorithm places a guard~$g$ in corridor~$C_i$, no guard placed so far has a distance of less than~$3$ to~$g$. 
    This is true by construction if a guard is placed at~$\vbl$. We show that the same holds if a guard is placed at~$\vtl$. 
    First, we establish some useful invariants.

    \begin{claim}\label{v1}
        At the moment where the algorithm places a guard at~$\vtl$, it holds that
        \begin{itemize}
            \item $i \geq 2$, i.e., there is another corridor to the left of~$C_i$ that is extending to the top from~$R$,
            \item $\delta(v^{bl}_{i-1}, \vbl) = 2$, and
            \item a guard is placed at~$v^{bl}_{i-1}$.
        \end{itemize}
    \end{claim}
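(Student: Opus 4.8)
The plan is to examine the unique branch of the placement rule that produces a guard at $\vtl$, i.e., rule~(3). Reaching this rule means that rule~(1) failed, so $C_i$ is not yet seen, and that rule~(2) failed, so there is an already placed guard $g^\ast$ with $\delta(g^\ast,\vbl)<3$. Since every guard placed so far in this phase sits at the bottom-left or top-left vertex of some corridor extending upward from a room, it suffices to locate $g^\ast$ and to show that the only possibility is a guard at $v_{i-1}^{bl}$ with $\delta(v_{i-1}^{bl},\vbl)=2$; this yields all three assertions at once. Throughout, I would use that an $L_1$-geodesic is at least as long as the Manhattan distance of its endpoints, and that, by the standing assumption, any two vertices are at distance at least~$1$, so every nonzero horizontal or vertical offset is at least~$1$.

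First I would eliminate every candidate inside the current room $R$ except $v_{i-1}^{bl}$. The vertices $v_k^{bl}$ and $\vbl$ all lie on the top edge of $R$, hence $\delta(v_k^{bl},\vbl)$ equals their horizontal offset, namely the total of the intervening corridor widths and gaps; as each such width and each gap is at least~$1$, one gets $\delta(v_k^{bl},\vbl)\ge 4$ for $k\le i-2$, while $\delta(v_{i-1}^{bl},\vbl)$ is at least~$2$ and strictly below~$3$ precisely when both the width of $C_{i-1}$ and the gap between $C_{i-1}$ and $C_i$ equal~$1$, i.e.\ when $\delta(v_{i-1}^{bl},\vbl)=2$. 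A guard at a top-left vertex $v_k^{tl}$ must first descend the full height of $C_k$ (at least~$1$) and then cover the same horizontal offset (at least~$2$), so $\delta(v_k^{tl},\vbl)\ge 3$ for all $k<i$. Thus inside $R$ only a guard at $v_{i-1}^{bl}$ can be closer than~$3$, and only if $i\ge 2$ and $\delta(v_{i-1}^{bl},\vbl)=2$, which already gives the three invariants provided the blocker lies in $R$.

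It then remains to show that no guard outside $R$ can play the role of $g^\ast$; this is the delicate part. Any such guard sits in another room or one of its upward corridors, hence at height at least that room's top edge, and its geodesic to $\vbl$ must leave $R$ through a corridor. If it is reached through an upward corridor $C_m\ne C_i$, the path is forced into a horizontal detour of at least~$1$ to reach the opening of $C_m$ and then a vertical climb of at least~$2$ through $C_m$ and the room above it, so its length is at least~$3$; guards reached only after passing through the room directly above $R$ and climbing further are even farther. The one remaining case is a guard $g^\ast$ on the top edge of the room $R'$ that lies directly above $R$ through $C_i$, where the vertical offset alone, $\operatorname{height}(C_i)+\operatorname{height}(R')$, may be as small as~$2$. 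Here I would split on the $x$-coordinate of $g^\ast$: if it lies outside the $x$-range of $C_i$, integrality forces an additional horizontal offset of at least~$1$, so the Manhattan bound again gives $\delta(g^\ast,\vbl)\ge 3$; if it lies within the $x$-range of $C_i$, then, since $C_i$ is strictly contained in the bottom edge of $R'$ and $g^\ast$ lies above all of $C_i$, every axis-aligned rectangle spanned by $g^\ast$ and a point of $C_i$ stays inside $R'\cup C_i\subseteq\polygon$, so $g^\ast$ sees all of $C_i$ under $r$-visibility. But then rule~(1) would have applied and no guard would have been attempted in $C_i$, contradicting that we reached rule~(3).

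Combining the two parts, the blocker $g^\ast$ must be a guard placed at $v_{i-1}^{bl}$ realizing $\delta(v_{i-1}^{bl},\vbl)=2$, which in particular forces $i\ge 2$, establishing the claim. I expect the main obstacle to be exactly this last excluded configuration---a guard in the room immediately above $C_i$---since it is the only one that can be geometrically close to $\vbl$ without lying in $R$; ruling it out is where the integrality of the coordinates (to discretize the horizontal offset) and the $r$-visibility model (to trigger rule~(1)) genuinely come together, whereas all remaining distance estimates are routine.
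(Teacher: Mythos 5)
Your proof is correct and follows essentially the same strategy as the paper's: both enumerate the candidate positions of a previously placed guard within distance less than $3$ of $v_i^{bl}$, and eliminate all of them using the facts that phase-one guards sit only on left walls of upward corridors, that not-yet-processed corridors contain no guards, and that a guard seeing $C_i$ would have triggered rule~(1) -- leaving $v_{i-1}^{bl}$ at distance exactly $2$ as the only possibility. The paper organizes this case analysis around a color-coded figure of nearby integer positions rather than your textual inside-$R$/outside-$R$ split, but the logical content, including the decisive $r$-visibility argument for positions in the room above $C_i$, is the same.
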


    \begin{claimproof}
        Recall that the algorithm places a guard at~$\vtl$ only if some already placed guard has a distance of at most~$2$ to~$\vbl$. 
        To this end, we identify possible guard positions in distance~$2$ from~$\vbl$. 
        For a schematic illustration, consider~\cref{fig:disp3v1}, where the dark-red point corresponds to~$\vbl$, and the other colored points indicate all positions (at integer coordinates) where a guard would be in distance~$2$ from~$\vbl$. 
        The rooms and corridors may appear slightly different than depicted, but the illustration is intended to provide orientation for the positions. 
        Moreover, note that not all depicted positions must correspond to actual vertext positions. 
        We observe the following:

        First, no guard is placed at a green position because, so-far, no guard has been placed at another vertex in~$C_i$, in a corridor~$C_k$ for~$k > i$, at a corner of a room, or in any corridor extending to the bottom from~$R$. 
        Second, no guard is placed at a turquoise position because such a guard would see~$C_i$, and hence the algorithm would not place a guard at~$\vtl$, a contradiction to the assumption. 
        Third, no guard is placed at any of the light-red positions, as these (assuming they are vertex positions) must be at a corner of a room or on the right wall of~$C_{i-1}$. 
        Recall that we only place guards on left walls of vertical corridors. 
        Therefore, the only possible guard position remaining is indicated in blue. 
        Because there must be a guard in distance~$2$ from~$\vbl$, it follows that a guard is placed exactly at the blue position, which must therefore be a vertex. 
        It is easy to verify that the blue position corresponds to~$v_{i-1}^{bl}$. 
        This also certifies that~$i \geq 2$.
    \end{claimproof}
        
    Building on~\cref{v1}, we show that if the algorithm places a guard at~$\vtl$, then this is a reasonable decision.
        
    \begin{claim}\label{v2}
        If the algorithm places a guard at~$\vtl$, then no previously placed guard has a distance of less than~$3$ to~$\vtl$.
    \end{claim}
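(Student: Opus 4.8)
The plan is to show that every previously placed guard lies at geodesic distance at least $3$ from $\vtl$. Since all coordinates are integral and $L_1$-geodesics in a rectilinear domain with integer vertices turn only at integer reflex vertices, every pairwise geodesic distance is an integer; hence ``distance less than $3$'' coincides with ``distance at most $2$'', and it suffices to rule out any guard within distance $2$ of $\vtl$. Throughout I use the conclusions of \cref{v1}: the unique already-placed guard within distance $2$ of $\vbl$ sits at $v^{bl}_{i-1}$ with $\delta(v^{bl}_{i-1},\vbl)=2$, while every other already-placed guard satisfies $\delta(\cdot,\vbl)\ge 3$. Place coordinates so that $\vbl=(0,0)$ and $\vtl=(0,h)$, where $h\ge 1$ is the height of $C_i$ and the corridor occupies $[0,w]\times[0,h]$ with $w\ge 1$.

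First I would treat all guards $g$ with $y(g)\le 0$ (at or below the foot of the corridor), which includes $v^{bl}_{i-1}$, by a funnel argument. The corridor connects to the region $\{y\le 0\}$ only through its bottom edge $[0,w]\times\{0\}$, so any path from such a $g$ to $\vtl$ crosses this edge at some $q=(x_q,0)$ with $x_q\in[0,w]$. Using $\delta(q,\vtl)=x_q+h$ inside the (convex) corridor rectangle together with the triangle inequality $\delta(g,q)\ge\delta(g,\vbl)-x_q$, I obtain
\[
\delta(g,\vtl)\ \ge\ \delta(g,\vbl)+h\ \ge\ 2+1\ =\ 3,
\]
which disposes of $v^{bl}_{i-1}$ (distance $2+h$) and of every other low guard (distance $\ge 3+h$) simultaneously.

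The remaining, and genuinely harder, case is guards $g$ with $y(g)>0$, i.e.\ guards lying in the room above $C_i$ or beyond. Here I would argue, mirroring the colored-position enumeration of \cref{v1}, that no such guard can sit within distance $2$ of $\vtl$ at all. The left wall of $C_i$ is a straight edge, so it carries no phase-(1) guard strictly between $\vbl$ and $\vtl$, which eliminates all candidate positions on $\{x=0,\ 0<y<h\}$. For a candidate lying above the corridor mouth (roughly $x(g)\in[0,w]$, $y(g)\ge h$), the axis-aligned rectangle spanned by $g$ and an interior point of $C_i$ stays inside the room above $y=h$ and inside the corridor below it, so $g$ would $r$-see into $C_i$; but we only reached step~$3$ because $C_i$ is still unseen, a contradiction. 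The delicate sub-case is a candidate immediately up and to the side of $\vtl$ (e.g.\ $(-1,h)$ or $(-1,h+1)$): here I would combine the fact that phase-(1) guards occupy only left walls of vertical corridors with the structural information from \cref{v1}---namely that $C_{i-1}$ sits two units to the left of $C_i$ with a solid wall strip separating them---to show that no left-wall vertex can occur at such a position, or that reaching $\vtl$ from it forces a detour of length at least $3$.

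The main obstacle is exactly this last case analysis for guards above $\vtl$: unlike the low guards, one cannot bound $\delta(g,\vtl)$ monotonically in terms of $\delta(g,\vbl)$, so one must instead certify, position by position, that every integer point at distance at most $2$ from $\vtl$ is either (i) not a legal phase-(1) guard position (not on a left wall, or a room corner), or (ii)~would already have rendered $C_i$ visible. Once each such position is eliminated, combining with the funnel bound for the low guards yields $\delta(g,\vtl)\ge 3$ for every previously placed guard, which proves the claim.
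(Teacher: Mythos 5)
Your proposal correctly identifies its own weakness: the sub-case you defer as ``the main obstacle'' --- a previously placed guard sitting above $v_i^{tl}$ at geodesic distance exactly $2$ --- is precisely the crux of the claim, and neither of the two outs you sketch for it works. Such a position \emph{can} be a legal phase-(1) left-wall guard position: when the room $R'$ above $C_i$ has height $1$, the point two units above $v_i^{tl}$ (on the same vertical line, not at $(-1,h)$ or $(-1,h+1)$) can be the top-left vertex $w_k^{tl}$ of a corridor $C_k$ leaving $R'$ upward, and it lies at distance exactly $2$, so no ``detour of length at least $3$'' is forced. The paper eliminates this position by a second, nested application of \cref{v1}: a guard at $w_k^{tl}$ could only have been placed if a companion guard sits at $w_{k-1}^{bl}$ with $\delta(w_k^{bl},w_{k-1}^{bl})=2$; since also $\delta(v_i^{bl},v_{i-1}^{bl})=2$ and $v_i^{tl}$, $w_k^{tl}$ share a vertical line, that companion sees $C_{i-1}$, so the algorithm would never have placed the guard at $v_{i-1}^{bl}$ whose existence \cref{v1} guarantees --- a contradiction. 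This chained use of the invariant is the missing idea; without it the claim is not proved. (You should also handle the ``magenta'' position at distance $2$ to the upper left of $v_i^{tl}$, which is excluded because a guard there would have seen $C_{i-1}$ before it was processed.)

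Separately, your funnel bound for guards with $y(g)\le 0$ has a gap. You assert that every path from such a guard to $v_i^{tl}$ crosses the bottom edge of $C_i$, but $v_i^{tl}$ is also a vertex of the room $R'$ above $C_i$: a geodesic may climb through a neighbouring corridor (e.g.\ $C_{i-1}$, if it also leads into $R'$), enter $R'$ there, and reach $v_i^{tl}$ from inside $R'$ without ever touching $C_i$. The inequality $\delta(g,v_i^{tl})\ge\delta(g,v_i^{bl})+h$ therefore does not follow as stated. The conclusion happens to be true (ruling out the short alternative routes again reduces to a positional enumeration near $v_i^{bl}$), and the paper avoids the issue entirely by enumerating positions around $v_i^{tl}$ rather than reasoning through $v_i^{bl}$, but as written this step does not stand on its own.
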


    \begin{claimproof}
        For a schematic illustration, consider~\cref{fig:disp3v2}, where the dark-red point corresponds to~$\vtl$. 
        The other colored points indicate all positions (at integer coordinates) where a guard would be in distance~$2$ from~$\vtl$. 
        
        Similar to the above proof, no guard is currently placed at a green position. 
        Moreover, no guard is placed at a turquoise position because such a guard would see~$C_i$, and hence the algorithm would not place a guard at~$\vtl$, a contradiction. 
        
        From \cref{v1}, it follows that a guard must be placed at~$v_{i-1}^{bl}$, and hence no guard is placed at one of the light-red positions; note that we place at most one guard in each corridor. 
        
        Additionally, a guard at the magenta position would see~$C_{i-1}$ and hence no guard would have been placed at~$v_{i-1}^{bl}$, a contradiction to \cref{v1}. 
        Here we use the fact that~$\delta(\vbl, v^{bl}_{i-1}) = 2$, and that every corridor has a width of at least~$1$. 
        
        It remains to be shown that no guard is placed at the blue position. 
        If the blue position is in the same room~$R'$ as~$\vtl$ (i.e., $R'$ has height~$2$), then a guard at the blue position would also see~$C_i$, and hence the algorithm would not place a guard at~$\vtl$, a contradiction. 
        Following, we assume that~$R'$ has height~$1$, because otherwise the blue position can never correspond to a vertex.
        To this end, consider \cref{fig:disp3v3}, where~$w_k^{tl}$ corresponds to the blue position in \cref{fig:disp3v2}.
        Following \cref{v1}, if a guard is placed at~$\vtl$, then there is also a guard at~$v^{bl}_{i-1}$. 
        An analogous statement can be found for a guard at~$w^{tl}_k$, whose placement implies that another guard must be at~$w^{bl}_{k-1}$. 
        These two guards have been placed in a previous iteration, when the room~$R'$ was a local maximum. 
        Since~$2 = \delta(\vbl, v^{bl}_{i-1}) = \delta(w^{bl}_k, w^{bl}_{k-1})$, and because~$\vtl$ and~$w^{tl}_k$ lie on the same vertical line, it follows that the guard at~$w^{bl}_{k-1}$ sees~$C_{i-1}$. 
        As a consequence, no guard will be placed in~$C_{i-1}$, and hence at~$v^{bl}_{i-1}$, a contradiction.

        We have shown that no previously placed guard is in distance~$2$ from~$\vtl$, which proves the statement.
    \end{claimproof}

    \begin{figure}[htb]
        \centering
        \begin{subfigure}[b]{0.45\textwidth}
            \centering
            \includegraphics[page=1]{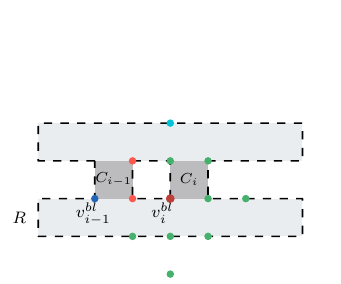}
            \subcaption{}
            \label{fig:disp3v1}
        \end{subfigure}
        \hfill
        \begin{subfigure}[b]{0.45\textwidth}
            \centering
            \includegraphics[page=2]{figures/disp3_distances.pdf}
            \subcaption{}
            \label{fig:disp3v2}
        \end{subfigure}
        \hfill
        \begin{subfigure}[b]{0.45\textwidth}
            \centering
            \includegraphics[page=3]{figures/disp3_distances.pdf}
            \subcaption{}
            \label{fig:disp3v3}
        \end{subfigure}
        \caption{Schematic illustrations for the proof of \cref{v1,v2}.}
        \label{disp3_distances}
    \end{figure}

    This concludes the first phase. 
    Clearly, after processing all rooms, every vertical corridor is seen by a guard, all guards are placed exclusively on left walls of vertical corridors, and no two guards have a distance of less than~$3$. 
    For an exemplary illustration, the red guards in \cref{fig:example_wcopt_algo} correspond to the guards placed up to this point.

    \subparagraph{Phase (2): Horizontal corridors.} 
    Similar to the procedure for vertical corridors, we define a partial order on~$\mathcal R$ such that~$R_1 \prec R_2$ if both rooms are connected by a horizontal corridor, and~$R_1$ is to the right of~$R_2$. 
    We apply a similar approach as before, but this time, we place guards exclusively on lower walls of horizontal corridors.
    One could view the second phase as rotating~$\polygon$ by \ang{90} clockwise and then rerunning the first phase.
    Therefore, we omit further details.
    With similar reasoning as in the first phase, one can show that every pair of guards placed in the second phase has a distance of at least~$3$. 
    It remains to be shown that no guard placed in the second phase has a distance smaller than~$3$ to a guard placed in the first phase.
    We begin by proving an auxiliary claim.

    \begin{claim}\label{aux_step2}
        Let~$C$ denote the leftmost corridor extending to the bottom from any room~$R$. The algorithm will never place a guard at the top-left vertex of~$C$.
    \end{claim}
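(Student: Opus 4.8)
The plan is to argue by contradiction, extracting the needed structural information from \cref{v1}. First I would pin down the only way the algorithm can place a guard at the top-left vertex of $C$. Since $C$ extends to the bottom from $R$, this vertex lies on the bottom edge of the upper room $R$, and it is a vertex of $C$ itself. During the first phase it can be selected only while processing the lower room $R'$ incident to $C$: at that moment $C$ appears as one of the up-corridors $C_i$, and its top-left vertex is precisely the step-(3) position $v_i^{tl}$. So I would suppose, for contradiction, that a guard is placed at $v_i^{tl}$.

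By \cref{v1}, this placement forces $i \ge 2$, a previously placed guard at $v_{i-1}^{bl}$, and $\delta(v_{i-1}^{bl}, v_i^{bl}) = 2$. Here $C_{i-1}$ and $C_i$ are distinct up-corridors of $R'$ whose left walls are only two units apart. Since every corridor has width at least $1$, I would first note that $C_{i-1}$ must have width exactly $1$: a larger width would make $C_{i-1}$ and $C_i$ touch or overlap, forcing them to merge into a single corridor. Let $R''$ denote the room that $C_{i-1}$ reaches above $R'$.

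The heart of the argument is to show that $R'' = R$, and for this I would use a packing argument that exploits integrality together with the office-like structure. Because $R''$ strictly contains the width-$1$ top of $C_{i-1}$, its right edge must reach at least to the left wall of $C$; symmetrically, $R$ strictly contains the top of $C$, so its left edge lies strictly to the left of that wall. Hence $R$ and $R''$ necessarily overlap in their $x$-ranges. If they were distinct rooms they would have to be disjoint, and therefore separated in the $y$-direction, so I would examine the two possible separations and rule out each: placing $R$ at or below the bottom of $R''$ forces $R$ either to overlap the corridor $C_{i-1}$ or to share a wall with it, while placing $R''$ below the bottom of $R$ forces $R''$ either to overlap the corridor $C$ or to share its left wall with $C$. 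In both sub-cases the recurring tool is that having polygon interior on both sides of a wall contradicts $\polygon$ being a valid (office-like) region, and that the integrality of all coordinates eliminates the intermediate configurations.

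Having excluded $R'' \neq R$, I conclude that $C_{i-1}$ connects $R'$ to $R$; that is, $C_{i-1}$ is a corridor extending to the bottom from $R$ whose left wall lies two units to the left of $C$. This contradicts the choice of $C$ as the \emph{leftmost} such corridor, which completes the proof. I expect the main obstacle to be exactly the geometric case analysis establishing $R'' = R$: one must argue carefully, leaning on strict containment, disjointness of rooms, and the prohibition of shared walls, that no vertical separation of $R$ and $R''$ is consistent with both corridors emanating from $R'$ at horizontal distance $2$.
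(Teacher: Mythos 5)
Your proof is correct and follows essentially the same route as the paper's: both argue by contradiction, reduce the claim to \cref{v1} (which forces $i \ge 2$ and $\delta(v_{i-1}^{bl}, v_i^{bl}) = 2$), and then contradict the leftmostness of~$C$. The paper packages the key geometric fact in contrapositive form---corridors extending upward from $R'$ into \emph{different} rooms must have left walls at least $3$ apart---and delegates it to \cref{fig:case3_union}, whereas you spell out the corresponding packing/overlap argument showing that at distance $2$ the two corridors must enter the same room; the mathematical content is the same.
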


    \begin{claimproof}
        Towards a contradiction, assume the algorithm places a guard $g_1$ at the top-left vertex of~$C$.  
        Let~$R'$ denote the room connected to~$R$ through~$C$. 
        Note that~$g_1$ was placed during the first phase of the algorithm, specifically in the iteration where~$R'$ was a local maximum. 
        Let~$C_1, \dots, C_j$ denote the corridors extending to the top from~$R'$, from left to right. 
        Moreover, let~$C \equiv C_i$.
        For an exemplary illustration, consider \cref{fig:case3_union}.
        
        Returning to the previous notation from the first phase, the guard~$g_1$ is placed at the vertex~$\vtl$. 
        Observe that either~$i=1$, or else~$\delta(v_{i-1}^{bl}, v_i^{bl}) \geq 3$, because~$C_{i-1}$ leads to a different room; see \cref{fig:case3_union}. 
        For both these cases, it follows from \cref{v1} that the algorithm will not place~$g_1$ at~$\vtl$, i.e., at the top-left vertex of~$C_i$.
    \end{claimproof}

    Equipped with \cref{aux_step2}, we now show that distances between guards are sufficiently large after the second phase.

    \begin{claim}\label{claim:step2_dist}
        No guard placed in the first phase has a distance that is smaller than~$3$ to a guard placed in the second phase.
    \end{claim}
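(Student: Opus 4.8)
The plan is to show directly that any phase-1 guard $g_1$ and any phase-2 guard $g_2$ satisfy $\delta(g_1,g_2)\geq 3$, splitting the argument according to whether the two guards are incident to a common room or to distinct rooms. I will use the structural facts already established for the two phases: every phase-1 guard lies on a \emph{left} wall of a vertical corridor, hence on a \emph{horizontal} edge of some room (as the bottom-left vertex of a corridor extending upward, or as the top-left vertex of a corridor extending downward), whereas every phase-2 guard lies on a \emph{lower} wall of a horizontal corridor, hence on a \emph{vertical} edge of some room. Since each guard sits strictly inside a single room edge, it is incident to exactly one room. Throughout I rely on two facts: the geodesic $L_1$-distance is always at least the straight $L_1$-distance $|\Delta x|+|\Delta y|$, and, by strict containment of corridors with integral coordinates, every corridor wall is at distance at least $1$ from the parallel room corners while corridor lengths and widths are at least $1$.

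First I would treat guards incident to the same room $R=[0,W]\times[0,H]$; here $\delta(g_1,g_2)=|\Delta x|+|\Delta y|$ because $R$ is convex and no shorter detour through a corridor can exist. The guard $g_1$ lies on the top edge ($y=H$) or bottom edge ($y=0$) with $x$-coordinate in $[1,W-2]$, and $g_2$ lies on the left edge ($x=0$) or right edge ($x=W$) with $y$-coordinate in $[1,H-2]$ (its corridor's upper and lower walls being at distance at least $1$ from $R$'s top and bottom). This gives four corner sub-cases, three of which are immediate: top/left yields $x_1+(H-y_2)\geq 1+2=3$; top/right yields $(W-x_1)+(H-y_2)\geq 2+2$; and bottom/right yields $(W-x_1)+y_2\geq 2+1=3$.

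The delicate sub-case is bottom/left, where $g_1=(x_1,0)$ and $g_2=(0,y_2)$ give only $x_1+y_2\geq 1+1=2$ from containment alone. This is precisely the configuration that \cref{aux_step2} is designed to rule out: $g_1$ is the top-left vertex of a corridor extending downward from $R$, and by \cref{aux_step2} this corridor cannot be the leftmost such corridor. Hence a further downward corridor lies to its left; since every corridor wall is at distance at least $1$ from $R$'s left edge, this leftmost corridor has its left wall at $x\geq 1$ and its right wall at $x\geq 2$, and the mandatory unit gap between consecutive corridors then forces $x_1\geq 3$. Consequently $\delta(g_1,g_2)=x_1+y_2\geq 3+1=4$, completing the same-room case.

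Finally, for $g_1$ and $g_2$ incident to distinct rooms I would bound $\delta(g_1,g_2)\geq|\Delta x|+|\Delta y|$ and argue that any short connecting path must traverse a corridor entirely. Because $g_1$ sits on a horizontal room edge while $g_2$ sits on a vertical room edge, at least one of the two is separated from the mouth of the corridor joining their rooms; the strict containment bounds (corridor length at least $1$, walls at least $1$ inside the incident room's edges) then force one of $|\Delta x|,|\Delta y|$ to be at least $2$ while the other contributes at least $1$, so the sum is at least $3$ in each arrangement of a vertical or horizontal neighbor, and non-adjacent rooms only increase the distance. The main obstacle is the bottom-left same-room corner: without \cref{aux_step2} the two guards could be only $2$ apart, so the whole argument hinges on that auxiliary claim, which is exactly why it was isolated beforehand.
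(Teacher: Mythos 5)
Your proof is correct and follows essentially the same route as the paper's: reduce to the case of a common room, split according to which room edges the two guards occupy, observe that the strict-containment bounds already give distance at least $3$ in all but the bottom-left configuration, and dispose of that remaining configuration via \cref{aux_step2}. Your explicit coordinate bounds simply make quantitative what the paper conveys pictorially in \cref{fig:case1_union,fig:case2_union,fig:case3_union}, and your sketch of the distinct-rooms case matches the paper's brief dismissal of it as immediate.
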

    
    \begin{claimproof}
        Consider two guards $g_1, g_2$ where~$g_1$ is placed during the first phase (i.e., in a vertical corridor), and~$g_2$ is placed during the second phase (i.e., in a horizontal corridor). 
        Without loss of generality, assume that~$g_1,g_2$ are placed in the same room~$R$, as otherwise their distance clearly is as least~$3$.
        We distinguish three cases.
    
        First, assume the corridor in which~$g_2$ is placed extends to the right from~$R$. 
        As shown in \cref{fig:case1_union}, we observe that~$g_2$ has a distance of at least~$3$ to all guards in corridors extending to the top or bottom from~$R$. 
        Recall that guards in vertical corridors are exclusively placed on left walls. 
        Hence, for the remainder, assume the corridor in which~$g_2$ is placed extends to the left from~$R$.

        Second, assume~$g_1$ is placed in a corridor extending to the top from~$R$. 
        Because~$g_2$ is placed on the lower wall of its respective corridor, it is again easy to see that the distance is at least~$3$; see~\cref{fig:case2_union}.
    
        Third, assume~$g_1$ is placed in a corridor extending to the bottom from~$R$. 
        Note that the only possible scenario where~$g_1$ and~$g_2$ could be in distance~$2$ is shown in \cref{fig:case3_union}: 
        both guards have a distance of~$1$ to the bottom-left corner of~$R$.
        This implies that~$g_1$ is placed at the top-left vertex of the leftmost corridor that extends to the bottom from~$R$. 
        However, as proved in \cref{aux_step2}, this case will not occur.

        We have covered all possible cases, and hence we get~$\delta(g_1, g_2) \geq 3$.
    \end{claimproof}
    
    In summary, all guards placed so far ensure that every corridor in~$\polygon$ is seen.
    Moreover, no two guards have a distance that is smaller than~$3$.
    The blue guards in \cref{fig:example_wcopt_algo} were added during the second phase.

    \begin{figure}[htb]
    \centering
    \begin{subfigure}[b]{0.3\textwidth}
        \includegraphics[page=3]{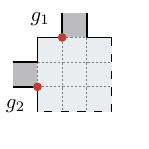}
        \subcaption{}
        \label{fig:case1_union}
    \end{subfigure}
    \hfill
    \begin{subfigure}[b]{0.3\textwidth}
        \includegraphics[page=1]{figures/union_horizontal_vertical.pdf}
        \subcaption{}
        \label{fig:case2_union}
    \end{subfigure}
    \hfill
    \begin{subfigure}[b]{0.3\textwidth}
         \includegraphics[page = 2]{figures/union_horizontal_vertical.pdf}
         \subcaption{}
         \label{fig:case3_union}
    \end{subfigure}
    \caption{Exemplary illustrations for the proof of \cref{aux_step2,claim:step2_dist}.}
    \label{union_figures}
\end{figure}

    \subparagraph{Phase (3): Rooms.} 
    In this phase, we place an additional guard in every room that has not been guarded so far.
    Specifically, we place a guard at the respective top-right corner of these rooms. 
    Since guards in vertical (or horizontal) corridors are exclusively placed on left (or bottom) walls, it is easy to see that these newly placed guards always have a distance of at least~$3$ to all others. 
    The green guards in \cref{fig:example_wcopt_algo} are added during this phase.

    This concludes the algorithm and its proof.
\end{proof}

\subsection{Packed corridors and proof of \cref{lem:disp_2eps_example}}
The main result of this subsection is as follows.

\disptwoeps*

To prove \cref{lem:disp_2eps_example}, we introduce a family of (partial) polygons that are office-like and force small dispersion distances through densely packed corridors.

\begin{definition}[$c$-$\varepsilon$-packing] 
    For $c \in \mathbb{N}$ and $\varepsilon \in (0,1)$, a $c$-$\varepsilon$-packing $\polygon'$ is a subpolygon of an office-like polygon~$\polygon$, consisting of three long rooms placed below each other, and~$c$ many unit-square corridors alternatingly placed between the top and bottom room.
    We refer to \cref{fig:wcopt_twoeps} for an illustration.
    More precisely, $c$-$\varepsilon$-packings are constructed as follows. 
    \begin{itemize}
        \item For every $0 \leq i \leq c-1$, place a unit-square corridor with center $(i + i \cdot \tau, 2)$ if $i$ is even, and $(i + i \cdot \tau , 4)$ if~$i$ is odd.
            In this regard, $\tau$ can be chosen arbitrarily in range $(0, \varepsilon /2)$.
        \item Place three rooms such that the first room is incident to all corridors with center $(\cdot, 2)$, the second room is incident to all corridors, and the third room is incident to all corridors with center $(\cdot, 4)$.
            Note that this implies the second room has a height of~$1$. Aside from these constraints, the rooms may have arbitrary size.
    \end{itemize}
    In \cref{fig:wcopt_twoeps}, the red rectangles indicate the $\tau$-spacing between corridors. 
    We require that every corridor in $\polygon'$ is independent of every other corridor in~$\polygon$.
\end{definition}

Following, we show that no $11$-$\varepsilon$-packing hosts a guard set with a dispersion distance of at least~$2+\varepsilon$. 
It is then easy to construct office-like polygons with arbitrarily many vertices, for which every guard set has a dispersion distance of at most~$2+\varepsilon$.
Even though this is sufficient to prove \cref{lem:disp_2eps_example}, we also consider $9$-$\varepsilon$-packings and $10$-$\varepsilon$-packings, as they will be useful later to prove \NP-hardness.

\begin{lemma}
    For $11$-$\varepsilon$-packings, there is no guard set with a dispersion distance of at least~$2 + \varepsilon$.
\end{lemma}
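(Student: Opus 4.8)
The plan is to argue by contradiction: assume some $11$-$\varepsilon$-packing $\polygon'$ admits a guard set $\guardset$ of dispersion distance at least $2+\varepsilon$, and derive an impossible configuration. The starting point is independence. Since every corridor is independent of every other corridor, and since the corridors are strictly narrower than their incident rooms (so no guard sitting in a room can $r$-see into a corridor), each of the $11$ unit-square corridors $C_0,\dots,C_{10}$ must carry a guard placed on one of its own four vertices. Writing $x_i = i(1+\tau)$ for the horizontal center of $C_i$, the even corridors hang downward into the common bottom room and the odd corridors rise upward into the common top room, while all corridors meet the height-$1$ middle room. I would first record, for each corridor, its four candidate vertices together with the room (middle, or outer) on whose boundary each vertex lies.

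Next I would compute the pairwise geodesic $L_1$-distances between guard candidates in nearby corridors. The only short connections run through a shared room, so a path between a vertex of $C_i$ and a vertex of $C_{i+1}$ is forced through the middle room, and a path between two same-parity corridors $C_i,C_{i+2}$ runs through the middle room or the shared outer room. Each such distance decomposes as an integer base plus a small multiple of $\tau$; since $\tau<\nicefrac{\varepsilon}{2}$, the $\tau$-term never reaches $\varepsilon$, so a pair of candidate positions is \emph{forbidden} (closer than $2+\varepsilon$) precisely when its base is at most $2$, and \emph{admissible} when the base is at least $3$. Carrying this out yields an explicit list of forbidden vertex-pairs for index-adjacent and index-$2$ corridors; for instance, two guards placed at the middle-room vertices of consecutive corridors are admissible only in the single ``spread-apart'' configuration (left corridor on its left vertex, right corridor on its right vertex), and likewise for same-parity neighbors sharing a room.

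The heart of the argument is a forcing chain. I would fix attention on the central corridor $C \equiv C_5$ and consider, in turn, each of its four possible guard vertices. Using the forbidden-pair list, each choice constrains the admissible vertices of $C_4$ and $C_6$, which in turn constrain $C_3$ and $C_7$, and so on outward to both ends. The only way a corridor can break out of the spread-apart pattern—which otherwise collapses after three corridors, as the middle-room-only analysis already shows—is to \emph{escape} by placing its guard on an outer-room vertex; but such an escape immediately tightens the constraint with the next same-parity corridor sharing that outer room, so the number of escapes available before the pattern becomes inconsistent is bounded. Propagating the forced positions, I would show that for every initial choice at $C$ the chain runs out of admissible vertices at an end corridor (the analogue of $C_1$ or $C_2$ in \cref{fig:2eps_necessary_main}), leaving it impossible to guard. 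Left--right reflection together with the up/down symmetry of the packing collapses the four initial choices into essentially one case, so a single contradiction suffices.

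The main obstacle is the bookkeeping of the forcing chain rather than any single deep idea: I must make the propagation genuinely deterministic, which requires taming the apparent freedom coming from (i)~the outer-room escape vertices and (ii)~the coverage of the three long rooms, since a room can be covered for free at a far corner and I have to verify that this never reopens a blocked corridor. Keeping these under control means checking that each claimed geodesic actually equals the Manhattan distance along a monotone staircase contained in $\polygon'$, and that $11$ is exactly the length at which the escapes are exhausted on both sides of $C$ simultaneously. I expect the cleanest presentation to fix the distance threshold via the $\tau<\nicefrac{\varepsilon}{2}$ reduction once and for all, tabulate the forbidden pairs, and then drive the contradiction through the figure, using symmetry aggressively to keep the case analysis finite and short.
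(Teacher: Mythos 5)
Your proposal follows essentially the same route as the paper: independence forces one guard per corridor, you anchor the argument at the central corridor $C$, use left--right symmetry to cut the cases, and propagate a forcing chain of guard positions outward (each step leaving only one admissible vertex in the next corridor because all others would be within $2+2\tau < 2+\varepsilon$ of an already-forced guard), until an end corridor such as $C_2$ or $C_3$ has no admissible vertex left, contradicting that it must contain a guard. The paper's proof is exactly this chain, encoded by the arcs in its figure, so your plan is a faithful (if more verbosely bookkept) version of the same argument.
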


\begin{proof}
    Consider some fixed~$\varepsilon \in (0,1)$.
    For a schematic illustration, refer to the $11$-$\varepsilon$-packing~$\polygon$ depicted in \cref{fig:wcopt_twoeps}.
    Recall that all corridors are unit-squares, the middle room has height~$1$, and the red rectangles have width~$0 < \tau < \varepsilon / 2$.
    Since all corridors are independent, every guard set for~$\polygon$ must place at least one guard in each corridor.

    Towards a contradiction, assume that there exists a guard set~$\guardset$ for~$\polygon$ with a dispersion distance of at least~$2+\varepsilon$.
    For symmetry reasons, we can assume that~$\guardset$ contains a guard at a vertex on the left wall of corridor~$C$.
    However, if~$\guardset$ contains a guard at the green (or blue) vertex of~$C$, then all guards depicted green (or blue) must be included in~$\guardset$:
    In \cref{fig:wcopt_twoeps}, the incoming arcs of every guard~$g$ indicate a set of predecessor guards that leave only~$g$ as guard position in the corridor where~$g$ is placed, unless implying a dispersion distance of at most~$2+2\tau < 2 + \varepsilon$.
    Note that the graph is cycle-free and that the only guard without predecessor is the one placed in~$C$, respectively.

    To finish the proof, observe that if~$\guardset$ contains all the green (or blue) guards, then each vertex in $C_2$ (or $C_3$) has a distance of at most $2+2\tau < 2 + \varepsilon$ to some guard in~$\guardset$.
    This implies that~$\guardset$ does not contain a guard in every corridor, what yields the contradiction to~$\guardset$ being a guard set.
\end{proof}

We make use of the above argumentation to reason about guard sets in $10$-$\varepsilon$-packings.

\begin{corollary} \label{10-piece}
    For $10$-$\varepsilon$-packings, there exists exactly one guard set with a dispersion distance of at least~$2 + \varepsilon$.
\end{corollary}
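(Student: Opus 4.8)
The plan is to leverage the preceding lemma on $11$-$\varepsilon$-packings and adapt its argument to the slightly smaller $10$-$\varepsilon$-packing, now aiming to establish a uniqueness statement rather than an impossibility one. First I would recall the structural facts: all corridors are unit squares, they alternate between attaching to the top room and the bottom room, the middle room has height $1$, and consecutive corridors are separated by the small $\tau$-spacing with $0 < \tau < \varepsilon/2$. Since all corridors are independent, any $(2+\varepsilon)$-realizing guard set must place at least one guard in every one of the $10$ corridors. I would then invoke the same \emph{forced-propagation} mechanism used in the $11$-case: once a guard is placed on a particular wall of one corridor, the requirement of keeping pairwise distances at least $2+\varepsilon$ (together with the bound $2+2\tau < 2+\varepsilon$) leaves only one admissible vertex in each neighbouring corridor, so the placement propagates deterministically along the chain of corridors.

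The key difference to exploit is parity: with $10$ (rather than $11$) corridors, the propagation chain no longer closes into the contradiction obtained in the $11$-case. Concretely, I would argue that starting from the leftmost corridor, the constraint of distance at least $2+\varepsilon$ forces a unique choice of vertex in that corridor (the one maximizing separation from the next corridor), and this choice then forces a unique vertex in the second corridor, and so on down the line. I would show that at each step exactly one of the two candidate wall vertices survives the distance constraint, so the entire guard set is determined with no remaining freedom. The removal of one corridor compared to the $11$-packing is exactly what converts the earlier contradiction into a consistent, uniquely-determined placement, so I would emphasize checking that this surviving chain is genuinely feasible (all $\binom{10}{2}$ pairwise distances are at least $2+\varepsilon$) and that no alternative initial choice yields a second valid set.

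The main obstacle I anticipate is ruling out \emph{all} alternative guard sets, i.e. establishing uniqueness rather than mere existence. This requires a careful case analysis at the two ends of the corridor chain, where the forcing argument has fewer neighbouring constraints: I must verify that any deviation from the forced vertex in the first corridor immediately propagates to a violation further along, mirroring how the $11$-case derived its contradiction. I would handle this by reusing the symmetry reduction from the $11$-proof (assume without loss of generality a guard on a fixed wall of the first corridor), then tracing the forced implications corridor by corridor and showing that every branch other than the unique surviving one forces two guards within distance $2+2\tau < 2+\varepsilon$ in some later corridor, contradicting the dispersion requirement. Once both the feasibility of the forced set and the impossibility of any deviation are established, the ``exactly one'' conclusion of \cref{10-piece} follows directly.
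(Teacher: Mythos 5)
Your overall strategy matches the paper's: independence forces at least one guard into each of the ten corridors, the bound $2+2\tau<2+\varepsilon$ makes each placed guard force a unique admissible vertex in a neighbouring corridor, and the removal of one corridor from the $11$-$\varepsilon$-packing turns the closed chain of contradictions into exactly one consistent propagation. The paper implements this by seeding the case analysis at the distinguished corridor $C$ (the unique source of the forcing DAG from the $11$-packing proof) and reusing the green and blue witness chains verbatim, whereas you seed it at the leftmost corridor; that difference is cosmetic, but note that each corridor contributes four candidate vertices, not two, so the case analysis at the seed corridor has four branches.

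The one substantive problem is your appeal to ``the symmetry reduction from the $11$-proof (assume without loss of generality a guard on a fixed wall of the first corridor).'' The $10$-$\varepsilon$-packing is obtained by deleting a corridor ($C_3$) from the $11$-packing, and this deletion breaks the left--right symmetry; indeed, that asymmetry is exactly what makes the valid guard set \emph{unique} -- in a mirror-symmetric polygon, any valid guard set would come paired with its reflection, contradicting the claim. If you fix one wall WLOG, you either wrongly conclude that the mirrored seed also yields a valid set, or you silently skip the branches you must rule out. The paper is careful here: the bottom-left and bottom-right seeds both fail via (mirrored) green chains, the top-left seed propagates to the unique valid guard set, but the top-right seed requires a \emph{separate} argument -- its propagation terminates with corridor $C_1$ left uncovered. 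Your plan already gestures at ``checking that no alternative initial choice yields a second valid set,'' which is the right instinct; you just cannot discharge that obligation by symmetry and must trace the asymmetric branch explicitly. With that repair, the argument goes through as in the paper.
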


\begin{proof}
    Reconsider the $11$-$\varepsilon$-packing depicted in \cref{fig:wcopt_twoeps}.
    We obtain a $10$-$\varepsilon$-packing by deleting $C_3$.
    The green guards witness that every guard set containing the bottom-left vertex (or by symmetry the bottom-right vertex) of $C$ has a dispersion distance that is smaller than~$2+\varepsilon$.
    The blue guards witness the existence of a unique guard set having a dispersion distance of at least~$2+\varepsilon$ among all guard sets that place a guard at the top-left vertex of~$C$.
    Moreover, it is easy to see that every guard set containing the top-right vertex of~$C$ will have a dispersion distance smaller than~$2+\varepsilon$. 
    (One could construct a witnessing family of guards similar to the blue one by starting with a guard at the top-right vertex of~$C$, and then propagating enforced guard positions. In this case, $C_1$ will remain uncovered.)
\end{proof}

Similarly, we investigate $9$-$\varepsilon$-packings.

\begin{corollary} \label{9-piece}
    For $9$-$\varepsilon$-packings, there exist exactly two guard sets with a dispersion distance of at least~$2 + \varepsilon$.
\end{corollary}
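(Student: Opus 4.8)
The plan is to reuse the structural analysis already developed for the $11$- and $10$-$\varepsilon$-packings and simply track how the forced-propagation argument plays out when we remove two corridors. Recall from \cref{fig:wcopt_twoeps} that the distinguished corridor $C$ admits four candidate guard positions (its four corners), and by the left-right symmetry of the construction it suffices to understand the two left-wall positions: the bottom-left vertex and the top-left vertex of $C$. The key mechanism, established in the proof of the $11$-packing lemma, is that fixing a guard at one of these vertices forces a unique chain of subsequent guard placements (the green or blue families) via the $\tau$-spacing constraint, since any deviation would create a pair of guards at distance at most $2+2\tau < 2+\varepsilon$.

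First I would obtain the $9$-$\varepsilon$-packing from the $11$-$\varepsilon$-packing of \cref{fig:wcopt_twoeps} by deleting both $C_2$ and $C_3$ (equivalently, starting from the $10$-packing of \cref{10-piece} and deleting $C_2$ as well). The effect of removing these outermost corridors is precisely to remove the obstruction that caused the contradiction in the previous two statements: in the $11$-packing the green (respectively blue) chain ran into $C_2$ (respectively $C_3$) and left a corridor either over-covered or uncovered, and in the $10$-packing only the blue chain survived while the green one still collided with $C_1$. By excising the corridors responsible for these collisions, I expect both the chain starting from the bottom-left vertex of $C$ and the chain starting from the top-left vertex of $C$ to complete without contradiction, each determining a unique feasible guard set.

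The core of the argument is therefore a case analysis over the four corner positions of $C$. For each of the two left-wall positions I would verify that the forced propagation (following the incoming arcs in the figure, exactly as in the $11$-packing proof) terminates in a fully realized guard set covering every remaining corridor with all pairwise distances at least $2+\varepsilon$; this yields two guard sets. By the symmetry of the packing about its vertical axis, the two right-wall positions of $C$ produce guard sets that are mirror images of these, so they do not contribute genuinely new solutions and I must argue that they coincide with (or are subsumed by) the two already found rather than adding to the count. Finally, since every guard set must place a guard in $C$ (all corridors are independent, so each needs its own guard), and since the choice of the guard in $C$ drives the entire configuration, these exhaust all $2+\varepsilon$-realizing guard sets.

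The main obstacle I anticipate is the bookkeeping in the symmetry step: I must be careful that the mirror-image guard sets arising from the right-wall positions of $C$ are genuinely counted once, so that the total is exactly two and not four. Concretely, whether a left-wall choice and its mirrored right-wall choice yield the same guard set or two distinct ones depends on the parity and placement of the surviving corridors after deletion, so I would double-check against \cref{fig:wcopt_twoeps} that the two surviving solutions are distinct from each other yet each is invariant (or the pair is swapped) under the reflection. Establishing this cleanly, rather than the routine propagation along the arcs, is where the care is needed.
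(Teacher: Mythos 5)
There is a genuine error in your setup: you obtain the $9$-$\varepsilon$-packing by deleting $C_2$ and $C_3$, whereas the paper deletes $C_1$ and $C_3$. This is not a cosmetic difference. Since the $10$-$\varepsilon$-packing of \cref{10-piece} arises by deleting $C_3$ alone and the $9$-$\varepsilon$-packing by deleting $C_1$ and $C_3$, the corridors $C_1$ and $C_3$ must be the two \emph{outermost} corridors of the $11$-$\varepsilon$-packing (only then does removing them yield a shorter packing of the same alternating form, and only then is the resulting $9$-packing mirror-symmetric about $C$). The corridor $C_2$, by contrast, is an interior corridor --- it is the one that the green chain squeezes --- so excising it leaves a gap in the alternating $\tau$-spaced pattern and the result is not a $c$-$\varepsilon$-packing at all. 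You also misremember the $10$-packing argument: there the green chain still collides with $C_2$ (which is present), while $C_1$ is the corridor left uncovered by the \emph{mirror} of the blue chain started at the top-right vertex of $C$; it is not the obstruction to the green chain.

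This error propagates into your identification of the two surviving guard sets. You predict that the bottom-left and top-left chains both complete, but in the actual $9$-$\varepsilon$-packing the corridor $C_2$ is still present, so the green chain (bottom-left vertex of $C$) and, by symmetry, the bottom-right choice both still fail exactly as in the $11$- and $10$-packings. The two guard sets the corollary counts are the unique one containing the \emph{top-left} vertex of $C$ (the blue chain with the $C_1$-guard removed) and its mirror image containing the \emph{top-right} vertex. Your anticipated difficulty about whether the count is two or four is a symptom of this: in the correct configuration the bookkeeping is clean, because both bottom choices are excluded outright and the two top choices yield two distinct, mutually mirror-symmetric guard sets. Note also that the mirror symmetry of this pair is used later (in the widening gadget), so identifying the pair as bottom-left/top-left, which are not mirror images of each other, would break that downstream argument.
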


\begin{proof}
    Reconsider the $11$-$\varepsilon$-packing depicted in \cref{fig:wcopt_twoeps}.
    We obtain a $9$-$\varepsilon$-packing by deleting $C_1$ and $C_3$.
    The green guards witness that every guard set containing the bottom-left vertex of $C$ has a dispersion distance that is smaller than~$2+\varepsilon$.
    The blue guards (except for the one in~$C_1$) witness the existence of a unique guard set having a dispersion distance of at least~$2+\varepsilon$ among all guard sets placing a guard at the top-left vertex of~$C$.
    For symmetry reasons, there is no guard set containing the bottom-right vertex and exactly one guard set containing the top-right vertex of~$C$, while having a sufficient dispersion distance.
\end{proof}

\begin{figure}[htb]
    \centering
    \includegraphics[page=2]{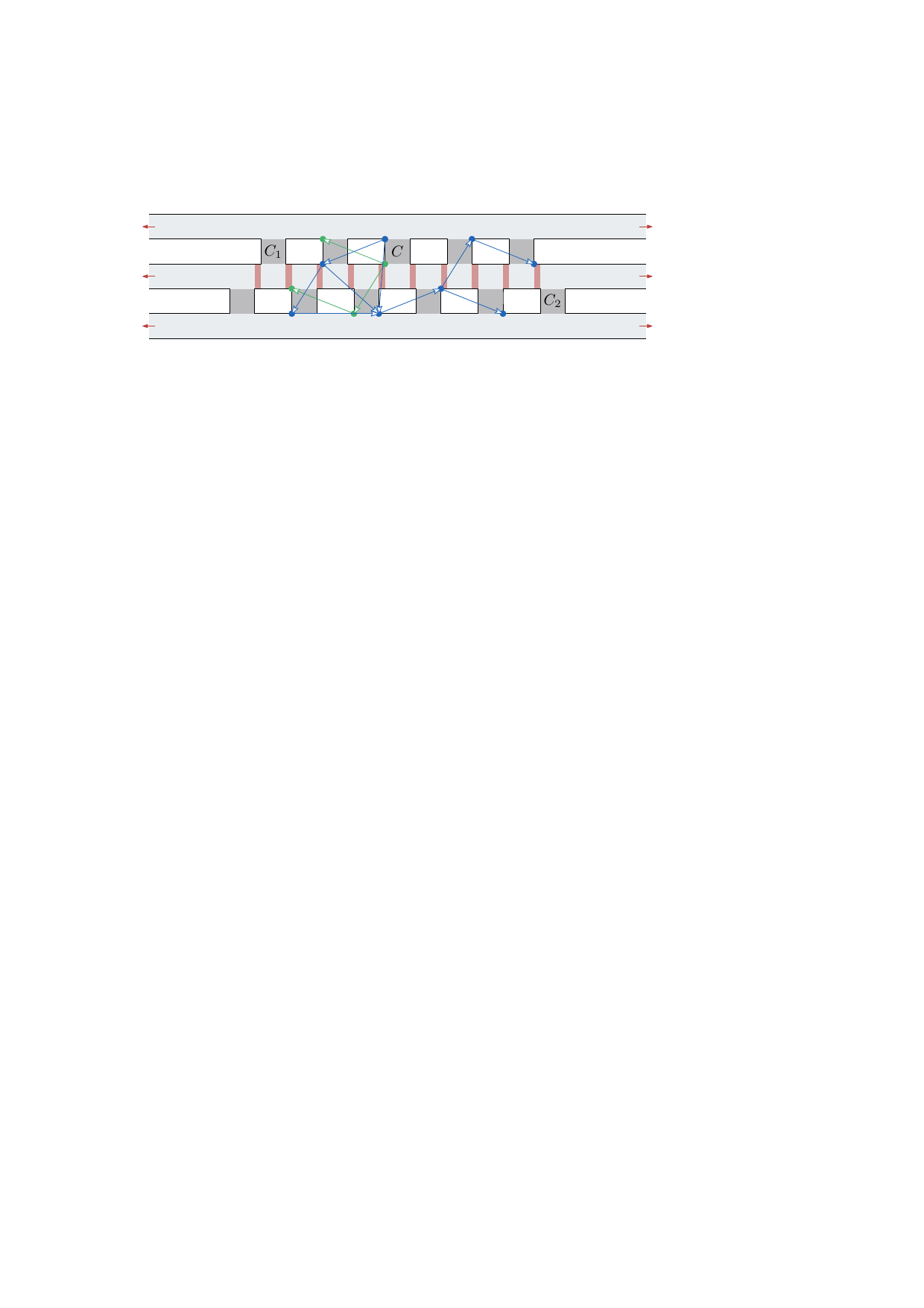}
    \caption{Schematic illustration of an $11$-$\varepsilon$-packing. Every guard set has a dispersion distance that is smaller than~$2+\varepsilon$.}
    \label{fig:wcopt_twoeps}
\end{figure}
	\section{Omitted details of~\cref{complexity}: Computational complexity}
\label{app:complexity-details}

In this section, we provide the complete technical details and any previously omitted content related to the computational complexity of the problem.

\subparagraph{\textsc{Planar 3SAT With Exactly Three Occurrences per Variable} (P3SAT$_{\overline{\underline{3}}}$).} An instance of \textsc{3Sat} consists of a set of boolean variables $\mathcal X$ and a set of clauses $\mathcal C$. For a variable $x \in \mathcal X$, we say $x$ and $\overline{x}$ are literals over $\mathcal X$ such that $x$ is \texttt{true} if and only if $\overline{x}$ is \texttt{false}. 
A clause $C \in \mathcal C$ is a set of either two or three literals over $\mathcal X$. An instance of \textsc{3Sat} is \emph{planar} if the bipartite \emph{clause-variable incidence graph} $G = (\mathcal X \cup \mathcal C, E)$ is planar, where $\{x, C\} \in E$ if and only if $\{x, \overline{x}\} \cap C \neq \emptyset$. 
Moreover, an instance of \psat is an instance of \textsc{Planar 3Sat} where every variable occurs exactly once as positive literal and twice as negative literal in the clauses.

A clause~$C$ is satisfied if, for a given truth assignment to $\mathcal X$, at least one of the literals in~$C$ evaluates to \texttt{true}. 
The problem \psat asks for a truth assignment to $\mathcal X$ such that every clause is satisfied.

We use the following instance of \psat for exemplary illustrations. 
\begin{itemize}
	\item Variables $V = \{x_1, x_2, x_3, x_4\}$
	\item Clauses $\mathcal C = \{C_1, C_2, C_3, C_4\}$, with $C_1 := \{x_1, x_2, \overline{x_3}\}$, $C_2 := \{\overline{x_1}, \overline{x_2}, x_4\}$,\\ ${C_3 := \{\overline{x_1}, x_3, \overline{x_4}\}}$, and $C_4 := \{\overline{x_2}, \overline{x_3}, \overline{x_4}\}$
\end{itemize}
A satisfying assignment is $(x_1, x_2, x_3, x_4) = (1, 0, 1, 0)$. 
The corresponding clause-variable incidence graph is depicted with a rectilinear embedding in~\cref{fig:drawing_graph}. 

\begin{figure}[htb]
	\centering
	\includegraphics[page=5]{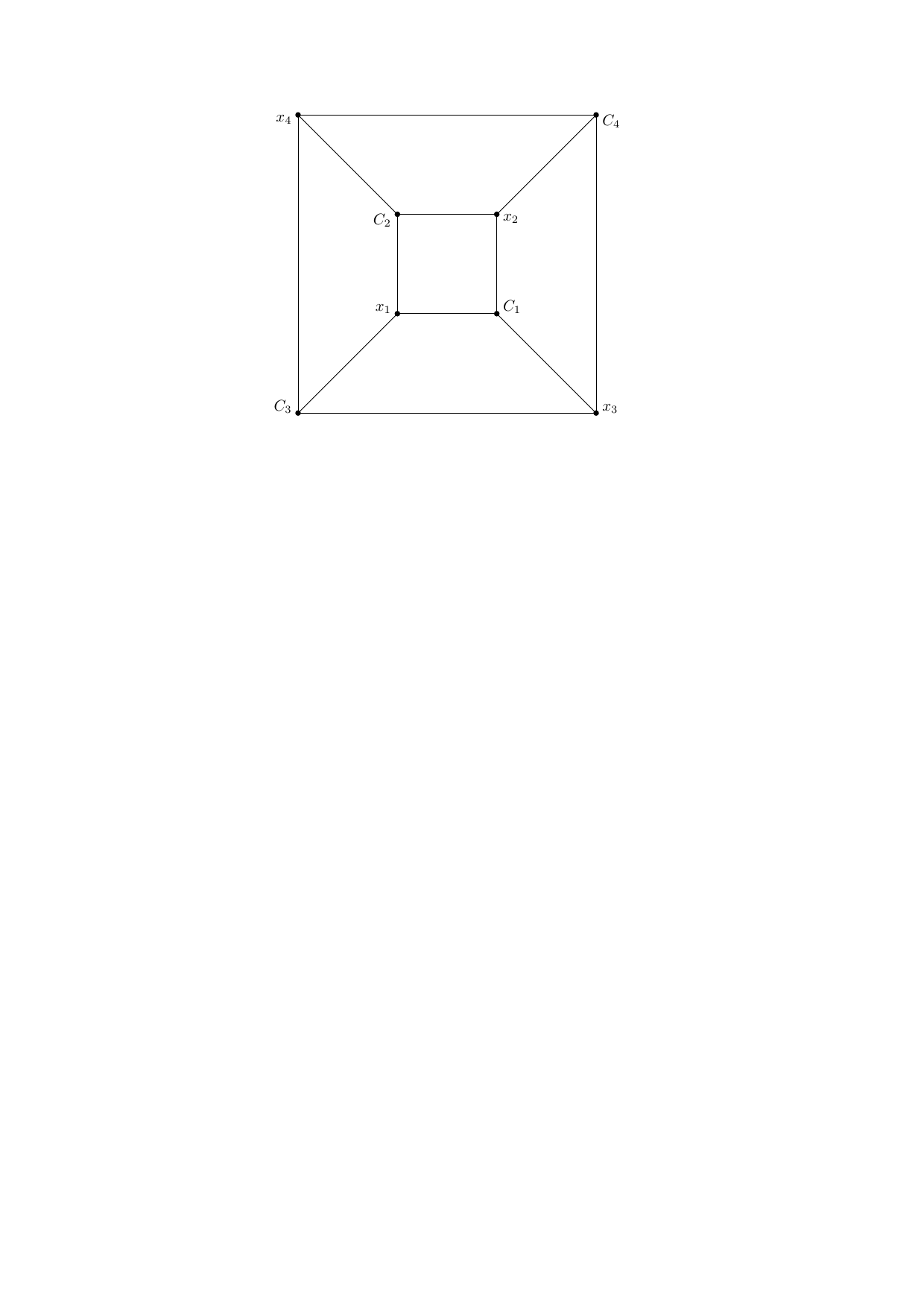}
	\caption{Clause-variable incidence graph with rectilinear embedding.}
	\label{fig:drawing_graph}
\end{figure}

\subsection{Gadgets for office-like polygons with vertices at arbitrary rational coordinates} \label{app:gadgets-hardness-arbitrary}
We explain the gadgets used to prove that it is \NP-hard to decide if a dispersion distance of~$2+\varepsilon$ is realizable for every fixed choice of $\varepsilon \in (0,1)$. 
For completeness, we highlight that larger~$\varepsilon$ are clearly possible by scaling up the entire construction.

Let $0 < \tau < \varepsilon / 2$ and let $\ell := 2 + \varepsilon$. 
In the subsequent figures, we use the following conventions:
\begin{itemize}
	\item The long sides of red rectangles have length $1$, and the short sides have length~$\tau$.
	\item All corridors fully contained within a single gadget are unit squares. They are indicated by dark-gray squares and are exactly the corridors that are not marked with a black arrow. The short side of every corridor marked with a black arrow has length~$1$.
	\item Additional important lengths are specified inside the respective figure. If no information is given for some length, it is either trivially determined by the other lengths, or else should be chosen as short as possible. In particular, this implies that the short sides of most rooms have length~$1$.
\end{itemize}

Subsequently, we introduce several gadgets and discuss guard sets for them. 
For clarity, a guard set is not required to cover rooms or corridors with open ends, though it may do so. 

Similar to the reduction for office-like polygons with vertices at integer coordinates, we introduce a \emph{banning-type subpolygon}; see~\cref{fig:banning_valid_placement}. 
Its main use is to forbid certain guard positions. 
Note that the gadget is essentially a $10$-$\varepsilon$-packing, as discussed in~\cref{10-piece}.
We observe the following:

\begin{observation}\label{obs:banning_arbitrary}
	For the banning-type subpolygon, there exists a unique guard set with a dispersion distance of at least~$\ell$; see \cref{fig:banning_valid_placement}. Therefore, no additional guard can be placed within a distance of~$\ell$ from~$v$.
\end{observation}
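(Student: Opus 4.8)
The plan is to reduce the whole statement to \cref{10-piece}, which already pins down the guard sets of $10$-$\varepsilon$-packings completely. First I would verify the geometric identification claimed in the text: the banning-type subpolygon of \cref{fig:banning_valid_placement} is, up to relabeling of vertices, exactly a $10$-$\varepsilon$-packing in the sense of the definition preceding \cref{10-piece} --- three long rooms stacked below one another, a middle room of height~$1$, ten unit-square corridors alternating between the top and bottom room, and consecutive corridors separated by the $\tau$-rectangles with $0<\tau<\varepsilon/2$. The independence hypothesis required by that definition is inherited from the global assumption that $\polygon$ is independent. Granting this identification, \cref{10-piece} applies directly and shows that the subpolygon admits exactly one guard set with dispersion distance at least $\ell=2+\varepsilon$ --- the one drawn in \cref{fig:banning_valid_placement} --- which establishes the first assertion.

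For the second ("therefore") assertion I would track where this unique guard set is forced to place a guard near $v$. By the analysis in \cref{10-piece}, the unique $\ell$-realizing guard set is the one placing a guard at the top-left vertex of the central corridor; by the construction of the gadget this forced guard $g^\ast$ coincides with $v$ (as drawn in \cref{fig:banning_valid_placement}). Now suppose for contradiction that a guard set of dispersion distance at least~$\ell$ contained an additional guard $g \neq g^\ast$ with $\delta(g,v)<\ell$. Since $g^\ast$ is forced and sits at $v$, the triangle inequality gives $\delta(g,g^\ast)\le \delta(g,v)+\delta(v,g^\ast)=\delta(g,v)<\ell$, contradicting that the guard set has dispersion distance at least~$\ell$. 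Hence no additional guard can be placed within distance~$\ell$ of~$v$.

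The main obstacle I anticipate is the geometric bookkeeping rather than the logic. One has to confirm rigorously that the figure really is a $10$-$\varepsilon$-packing --- in particular that each of its corridors is independent of every other corridor of $\polygon$, so that every corridor must be guarded from within and \cref{10-piece} is genuinely applicable --- and one must pin down the exact position of the forced guard $g^\ast$ relative to~$v$. If $g^\ast$ turns out to sit at a small positive distance $d\le\tau$ from $v$ rather than exactly at~$v$, then the argument above only forbids guards within distance $\ell-d$; in that case one has to check, using $\tau<\varepsilon/2$, that this weaker exclusion still suffices for the way the gadget is used in the reduction. A final, more routine point to confirm is that attaching the packing to the rest of $\polygon$ at~$v$ does not create any new $\ell$-realizing guard set beyond the unique one of the isolated packing, which again follows from independence.
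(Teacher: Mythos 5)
Your proposal matches the paper's own justification: the paper proves this observation simply by noting that the banning-type subpolygon ``is essentially a $10$-$\varepsilon$-packing'' and invoking \cref{10-piece}, exactly the reduction you carry out, with the ``therefore'' clause left implicit just as you make it explicit via the forced guard at~$v$. Your flagged caveats (independence of the corridors, exact location of the forced guard relative to~$v$) are the right things to check against \cref{fig:banning_valid_placement}, but they do not change the argument.
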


\begin{figure}[htb]
	\centering
	\includegraphics[page=6]{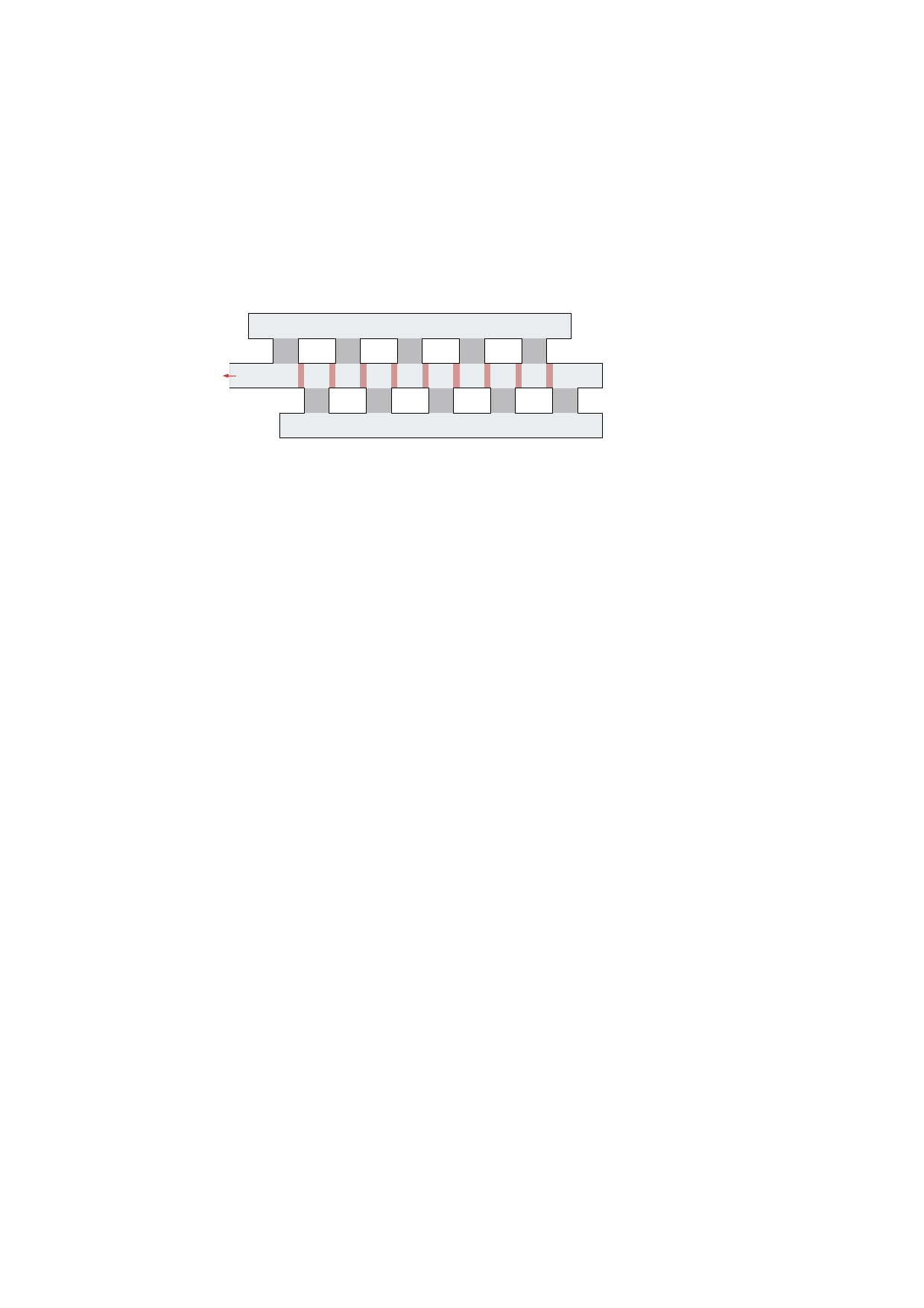}
	\caption{Banning-type subpolygon.}
	\label{fig:banning_valid_placement}
\end{figure}

We now explain the used gadgets.

\subparagraph{Variable gadget.}
To represent variables that can be set to either \texttt{true} or \texttt{false}, we introduce a variable gadget with two $\ell$-realizing guard sets representing the respective states of the variable; see~\cref{fig:2eps_var_gadget}. 
If a variable is set to \texttt{true}, the corresponding guard set will see the corridor labeled with \texttt{true}, but not the two corridors labeled with \texttt{false}. 
Conversely, if the variable is set to \texttt{false}, the corresponding guard set will see both corridors labeled with \texttt{false}, but not the corridor labeled with \texttt{true}. 
Note that the gadget has one \texttt{true}-corridor and two \texttt{false}-corridors because, in \psat, every variable occurs exactly once as positive literal and twice as negative literal in the clauses.

\begin{figure}[htb]
	\centering
	\includegraphics[page=7]{figures/hardness_2eps.pdf}
	\caption{Variable gadget.}
	\label{fig:2eps_var_gadget}
\end{figure}

\begin{lemma}\label{vargadget}
	In the variable gadget, there exist multiple $\ell$-realizing guard sets. 
	One of them covers the corridor labeled with \texttt{true}, but none of the corridors labeled with \texttt{false}. 
	The other one covers both corridors labeled with \texttt{false}, but not the corridor labeled with \texttt{true}. 
	Moreover, no $\ell$-realizing guard set covers the corridor labeled with \texttt{true}, and a corridor labeled with \texttt{false}.
\end{lemma}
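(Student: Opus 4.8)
The plan is to reduce the entire analysis to the behavior of the two banning-type subpolygons flanking the gadget, and then to enumerate the few remaining guard placements on the three corridors (the single \texttt{true}-corridor and the two \texttt{false}-corridors). First I would apply \cref{obs:banning_arbitrary} to each banning subpolygon: each one forces its unique internal $\ell$-realizing guard set and, crucially, forbids any further guard within geodesic distance $\ell$ of its distinguished vertex~$v$. I would argue that, exactly as the blocked vertices $f, f'$ in the integer variable gadget of \cref{fig:4_var_gadget}, these bans rule out precisely the vertices from which a single guard could see into two consecutive corridors. Hence every $\ell$-realizing guard set of the gadget is determined, up to the choices made on the corridor walls, by the forced banning guards.

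Next I would establish existence by exhibiting the two advertised configurations explicitly. For the \texttt{true}-configuration I place guards covering the \texttt{true}-corridor together with the forced banning guards, and verify (i)~that the \texttt{true}-corridor is $r$-visible from a placed guard while neither \texttt{false}-corridor is, and (ii)~that all pairwise geodesic $L_1$-distances are at least $\ell$. The \texttt{false}-configuration is treated symmetrically, covering both \texttt{false}-corridors and leaving the \texttt{true}-corridor unguarded. The distance checks are routine but must track the $\tau$-spacing encoded by the red rectangles: any two placed guards lying in distinct corridors or rooms are separated by a geodesic of length at least~$2$, and the red-rectangle offsets only increase this, so the bound $\ell = 2+\varepsilon$ is met with room to spare.

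The crux is the last claim, that no $\ell$-realizing guard set simultaneously covers the \texttt{true}-corridor and a \texttt{false}-corridor. Here I would run the same forced-propagation argument used in \cref{10-piece,9-piece}: assuming a guard sees the \texttt{true}-corridor and an adjacent \texttt{false}-corridor, the bans near the banning vertices leave only positions in which the two covering guards are forced to lie within geodesic distance $2 + 2\tau < 2 + \varepsilon = \ell$, contradicting the dispersion requirement. Concretely, because the middle connecting room has height~$1$ and the corridors are unit squares placed with horizontal $\tau$-spacing, a horizontal geodesic between corresponding corridor vertices is tight, so any choice of covering vertices on the relevant walls produces such a too-close pair.

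I expect the main obstacle to be the bookkeeping in this final step: one must carry out a short case distinction over which wall (left or right) of each corridor carries the covering guard and confirm that every mixed covering either intrudes into a banned region or yields a pair at distance below~$\ell$. Exploiting the independence of the corridors---so that at least one guard is forced into each corridor---and the symmetry of the gadget should collapse these cases to a single representative configuration, which I would verify directly against the geometry of \cref{fig:2eps_var_gadget}.
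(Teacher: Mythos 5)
Your proposal is correct and follows essentially the same route as the paper: invoke \cref{obs:banning_arbitrary} to force the banning guards and exclude the two ``bridge'' vertices, exhibit the two guard sets explicitly, and then rule out mixed coverage by a direct distance check on the remaining corridor vertices. The only cosmetic difference is that you frame the final step as a forced-propagation argument in the style of \cref{10-piece,9-piece}, whereas the paper needs no further propagation there---once $v_1$ and $v_6$ are banned, it simply observes that a guard at $v_2$ (or $v_5$) is within distance $\ell$ of both $v_3$ and $v_4$.
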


\begin{proof}
	First, we show that the two guard sets exist. 
	Let $R$ denote the set of points depicted red in~\cref{fig:2eps_var_gadget}. 
	Clearly, $R \cup \{v_2, v_5\}$ results in an $\ell$-realizing guard set that sees both corridors labeled with \texttt{false}. 
	Moreover, $R \cup \{v_3\}$ is an $\ell$-realizing guard set that sees the corridor labeled with \texttt{true}.
	
	It remains to be shown that no $\ell$-realizing guard set can see a corridor labeled with \texttt{false} and the corridor labeled with \texttt{true}. 
	To this end, note that the variable gadget is constructed by two banning-type subpolygons that share their vertically middle room. 
	From \cref{obs:banning_arbitrary}, it follows that every guard set contains~$R$. 
	Consequently, no guard can be placed at~$v_1$ or~$v_6$. 
	Observe that when placing a guard at~$v_2$, no guard can be placed at~$v_3$ or~$v_4$ because both vertices are too close. 
	The same holds for a guard placed at~$v_5$, which proves the statement.
\end{proof}

\subparagraph{Clause gadget.}
After discussing the variable gadget, we introduce a gadget that represents clauses.  
We construct the clause gadget such that all incoming corridors except for one can be seen from inside the gadget.
Recall that \psat is in \P~for the special case where every clause consists of exactly three literals~\cite{papadimitriou}. 
Therefore, we also introduce a clause gadget that represents clauses consisting of two literals. 
We begin with the $3$-clause gadget, which is depicted in~\cref{fig:2eps_3clause_gadget}.

\begin{figure}[htb]
	\centering
	\includegraphics[page=9]{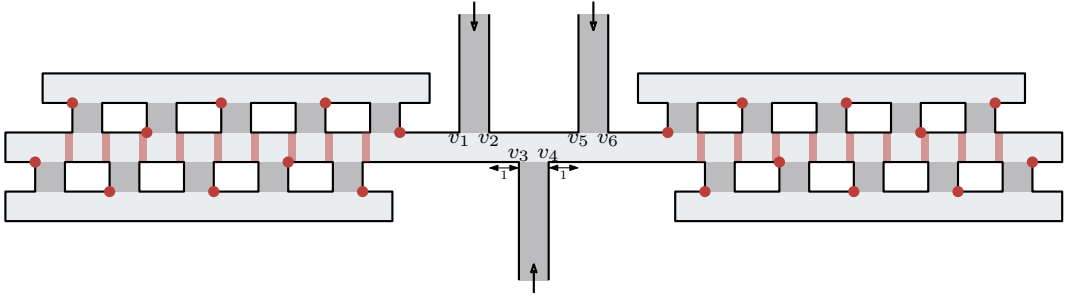}
	\caption{3-literal clause gadget.}
	\label{fig:2eps_3clause_gadget}
\end{figure}

\begin{lemma}
	In the 3-clause gadget, for every pair of incoming corridors, there exists an $\ell$-realizing guard set that sees both of them. 
	Moreover, no $\ell$-realizing guard set can see all three incoming corridors.
\end{lemma}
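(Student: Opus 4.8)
The plan is to prove the two directions separately, both building on the forced guard structure imposed by the banning-type subpolygons embedded in the gadget. Recall that $\ell = 2+\varepsilon$ and $0 < \tau < \varepsilon/2$, so that $2 + 2\tau < 2 + \varepsilon = \ell$; this inequality will be the engine of the impossibility direction, exactly as in the packing analysis.

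For the existence direction, I would proceed by an explicit case analysis over the three pairs of incoming corridors. By \cref{obs:banning_arbitrary}, each banning-type subpolygon inside the gadget admits a unique $\ell$-realizing guard set, so a fixed ``skeleton'' set $R$ of guards is present in every $\ell$-realizing guard set for the whole gadget. To each chosen pair of incoming corridors I would then add a small number of guards placed at suitable vertices of those two corridors (and, where needed, of the central room), chosen so that each of the two target corridors becomes $r$-visible from one of the added guards. I would then verify the two required properties: (a) coverage of the two corridors, which is immediate from the placement, and (b) that all pairwise geodesic $L_1$-distances among the skeleton guards and the newly added guards are at least $\ell$. Since the three pairs are related by the gadget's symmetry, it suffices to check one or two representative cases in detail and invoke symmetry for the remaining ones.

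For the impossibility direction, I would argue by contradiction. Suppose some $\ell$-realizing guard set $\guardset$ covers all three incoming corridors simultaneously. As every corridor is independent and is itself a unit square, $\guardset$ must place at least one guard at a vertex of each incoming corridor. The banning-type subpolygons again fix the skeleton $R$ via \cref{obs:banning_arbitrary} and, crucially, forbid any further guard within distance $\ell$ of their blocking vertices; this rules out most candidate positions and confines the guards covering the three corridors to a narrow band of vertices. The decisive step is a distance computation in the densely packed region: I would show that, once guards are forced into all three incoming corridors under these restrictions, two of them necessarily lie at geodesic distance at most $2 + 2\tau$. Because $2+2\tau < \ell$, this contradicts the assumption that $\guardset$ is $\ell$-realizing. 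This is structurally the same propagation-of-forced-positions argument used for the packings in \cref{10-piece,9-piece}, now applied to the merged corridor layout of the clause gadget.

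I expect the main obstacle to be the bookkeeping in the impossibility direction: precisely enumerating which vertices remain available for the guards of the three corridors after the banning subpolygons have blocked their neighborhoods, and then certifying that \emph{every} admissible way of covering all three corridors produces a conflicting pair at distance at most $2+2\tau$. The existence direction should be routine once the explicit guard sets are written down, so the bulk of the care goes into the geometric distance estimates and the exhaustive (but symmetry-reduced) case check in the hardness-enforcing direction.
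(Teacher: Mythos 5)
Your proposal is correct and follows essentially the same route as the paper: the existence direction is handled by exhibiting explicit $\ell$-realizing guard sets (the forced skeleton $R$ plus two corridor vertices per pair, with symmetry reducing the case check), and the impossibility direction uses that the banning-type subpolygons exclude $v_1$ and $v_6$, that independence forces a guard in each incoming corridor, and that no triple among the remaining candidates $\{v_2,v_3,v_4,v_5\}$ attains pairwise distance $\ell$. The paper states this last step more directly as a finite check over the four surviving vertices rather than as a propagation argument, but the substance is the same.
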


\begin{proof}
	Consider~\cref{fig:2eps_3clause_gadget}. 
	Let $R$ denote the set of points depicted red. 
	Then, the guard sets $R \cup \{v_2, v_4\}$, $R \cup \{v_3, v_5\}$, and $R \cup \{v_2, v_5\}$ are $\ell$-realizing. 
	Each pair of incoming corridors is seen by one of them.
	
	It remains to be shown that no guard set sees all three incoming corridors. 
	Observe that the two banning-type subpolygons prevent guards from being placed at $v_1$ and $v_6$. 
	The statement follows directly: the incoming corridors are independent and no triple of points from $\{v_2,v_3,v_4,v_5\}$ allows to realize a dispersion distance of~$\ell$.
\end{proof}

Next, we proceed with the $2$-clause gadget.

\begin{figure}[htb]
	\centering
	\includegraphics[page=8]{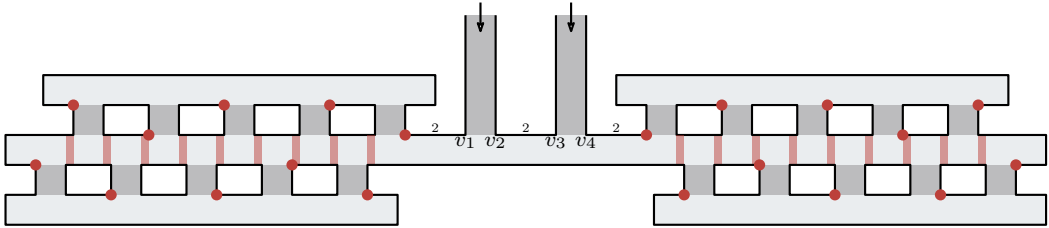}
	\caption{2-literal clause gadget.}
	\label{fig:2eps_2clause_gadget}
\end{figure}

\begin{lemma}
	In the 2-clause gadget, for every incoming corridor, there exists an $\ell$-realizing guard set that sees this corridor. 
	Moreover, no $\ell$-realizing guard set can see both incoming corridors.
\end{lemma}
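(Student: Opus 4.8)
The plan is to mirror the structure of the preceding $3$-clause lemma, adapting it to the two incoming corridors. First I would invoke \cref{obs:banning_arbitrary} on the banning-type subpolygon(s) built into the gadget of \cref{fig:2eps_2clause_gadget} to pin down a set $R$ of forced guards and, crucially, to certify that no guard can be placed within distance $\ell$ of the vertex (or vertices) $v$ blocked by each banning structure. This eliminates the ``outer'' placements and leaves only a constant number of candidate free vertices through which any $\ell$-realizing guard set must route its coverage of the two incoming corridors.

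For the existence direction I would exhibit, for each of the two incoming corridors separately, an explicit guard set of the form $R \cup \{\cdot\}$: take the forced set $R$ and add the single free vertex whose $r$-visibility region contains the chosen corridor. Verifying $\ell$-realizability then reduces to a distance check: since $R$ is exactly the banning configuration of \cref{obs:banning_arbitrary}, all pairwise distances inside $R$ are already at least $\ell = 2+\varepsilon$, so the only new distances to inspect are those between the added vertex and its neighbours in $R$, and these are $\ge \ell$ because of the $0 < \tau < \varepsilon/2$ spacing. This is the routine part.

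For the impossibility direction, the key observation is that the two incoming corridors are independent, so any guard set that sees both must contain a guard in (a free vertex covering) each corridor. The banning-type subpolygon forbids the remaining vertices, so the only way to cover both corridors at once is to select two admissible free vertices that lie at geodesic $L_1$-distance strictly less than $\ell$ from one another, contradicting $\ell$-realizability. I would make this precise by enumerating the constantly many admissible pairs and showing that each such pair has distance at most $2+2\tau < 2+\varepsilon = \ell$, exactly as in the $9$-, $10$-, and $11$-$\varepsilon$-packing arguments.

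The main obstacle I anticipate lies entirely in this impossibility direction: I must rule out not merely the ``obvious'' pair of corridor vertices, but \emph{every} admissible combination of free vertices, including placements that attempt to see one corridor from a distance inside the central room. Handling these hinges on the banning subpolygon having been dimensioned so that any vertex seeing one incoming corridor is forced within distance $2+2\tau$ of any vertex seeing the other; establishing this geometric separation bound, rather than the existence construction, is where the real work of the proof concentrates.
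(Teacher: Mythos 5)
Your proposal matches the paper's proof: the paper likewise uses the two banning-type subpolygons (via \cref{obs:banning_arbitrary}) to forbid guards at $v_1$ and $v_4$, exhibits $R \cup \{v_2\}$ and $R \cup \{v_3\}$ as the two $\ell$-realizing guard sets for the existence direction, and concludes impossibility from $\delta(v_2,v_3) < \ell$ together with the independence of the incoming corridors. The extra care you flag about enumerating all admissible placements is a reasonable refinement, but the paper's (and your) core argument is the same.
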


\begin{proof}
	Consider~\cref{fig:2eps_2clause_gadget}. 
	Let $R$ denote the set of points depicted red. 
	Then, $R \cup \{v_2\}$ and $R \cup \{v_3\}$ are $\ell$-realizing guard sets that each see exactly one incoming corridor.
	
	Observe that the two banning-type subpolygons prevent guards from being placed at~$v_1$ and~$v_4$, and that $\delta(v_2, v_3) < \ell$. 
	As a matter of fact, no $\ell$-realizing guard set sees both incoming corridors. 
\end{proof}

\subparagraph{Bending gadget.}
When connecting the outgoing corridors of variable gadgets with incoming corridors of clause gadgets, bends may be necessary. 
To this end, we introduce a bending gadget that propagates the signal correctly; see~\cref{fig:2eps_bend_gadget}.

\begin{figure}[htb]
	\centering
	\includegraphics[width=0.6\textwidth, page=10]{figures/hardness_2eps.pdf}
	\caption{Bending gadget.}
	\label{fig:2eps_bend_gadget}
\end{figure}

\begin{lemma}
	For the bending gadget, there exist multiple $\ell$-realizing guard sets. 
	One of them sees the incoming corridor, but not the outgoing corridor. 
	Another one sees the outgoing corridor, but not the incoming corridor. 
	Moreover, no $\ell$-realizing guard set sees both, the incoming and outgoing corridor.
\end{lemma}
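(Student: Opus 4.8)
The plan is to follow the same three-part template that worked for the variable and clause gadgets: first pin down the guards that are \emph{forced} by the banning-type subpolygons, then exhibit the two desired guard sets explicitly, and finally rule out any guard set covering both corridors by a short-distance argument. Throughout, recall that $\ell = 2+\varepsilon$, that all internal corridors are unit squares, and that the spacing parameter satisfies $0 < \tau < \varepsilon/2$, so the ``conflict distance'' $2+2\tau$ arising from the packing satisfies $2+2\tau < 2+\varepsilon = \ell$.

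First I would invoke \cref{obs:banning_arbitrary}. The bending gadget of \cref{fig:2eps_bend_gadget} is assembled from banning-type subpolygons, so every $\ell$-realizing guard set must contain the corresponding unique red set $R$, and no further guard may be placed within distance $\ell$ of each banning vertex. In particular this blocks one of the two vertices of the outgoing corridor (the vertex playing the role of $v$ in the integer construction), so that \emph{inside} the gadget the outgoing corridor can only be covered by placing a guard at its single remaining free vertex, call it $a$. By the same reasoning, the incoming corridor can only be covered from inside by a guard at a single free vertex $b$. The standard ``green/turquoise/light-red/blue'' case analysis used in the packing lemmas certifies that $a$ and $b$ are genuinely the \emph{only} in-gadget candidates for covering the respective corridors.

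Next I would exhibit the two guard sets. Taking $\guardset_{\mathrm{in}} := R \cup \{b\}$ (together with any companion guards forced by $b$) yields an $\ell$-realizing guard set that sees the incoming corridor but, since the outgoing vertex is blocked, leaves the outgoing corridor uncovered; symmetrically $\guardset_{\mathrm{out}} := R \cup \{a\}$ sees the outgoing corridor but not the incoming one. For each set I would verify that all pairwise geodesic $L_1$-distances are at least $\ell$, which follows from the unit widths and the $\tau$-spacing exactly as in the proof of \cref{10-piece} and in the variable-gadget argument.

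The crux, and the main obstacle, is the last claim: no $\ell$-realizing guard set covers both corridors. Since $a$ and $b$ are the only in-gadget vertices capable of covering the outgoing and incoming corridors respectively, it suffices to prove $\delta(a,b) < \ell$. Here the geodesic must negotiate the $\ang{90}$ bend, so its length is not merely a coordinate difference; I expect it to evaluate to $2 + 2\tau$, whence $\delta(a,b) \le 2+2\tau < 2+\varepsilon = \ell$. Consequently $a$ and $b$ cannot both lie in an $\ell$-realizing set, so the incoming and outgoing corridors cannot both be covered from within the gadget, which completes the proof and guarantees correct propagation of the truth value across the turn. The delicate step is this exact geodesic computation around the corner, together with confirming that the $\tau$-spacing is laid out so that the turn does not create any unintended shortcut that would let a single free vertex see \emph{both} corridors.
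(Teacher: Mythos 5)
Your proposal is correct and follows essentially the same route as the paper's proof: invoke the banning-type subpolygons (via \cref{obs:banning_arbitrary}) to force the red set $R$ and forbid the vertices $v_1$ and $v_4$, exhibit $R \cup \{v_2\}$ and $R \cup \{v_3\}$ as the two $\ell$-realizing guard sets, and conclude from $\delta(v_2,v_3) < \ell$ that both corridors cannot be covered simultaneously. The only difference is that you make the geodesic bound around the bend explicit as $2+2\tau < \ell$, whereas the paper simply asserts $\delta(v_2,v_3)<\ell$ from the figure; your identification of this as the one step requiring verification against the gadget's geometry is exactly right.
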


\begin{proof}
	Consider~\cref{fig:2eps_bend_gadget}. 
	Let $R$ denote the set of points depicted red. 
	Clearly, $R \cup \{v_2\}$ and $R \cup \{v_3\}$ are $\ell$-realizing guard sets that either see the incoming or the outgoing corridor.
	
	Moreover, note that there are essentially two banning-type subpolygons that forbid to place guards at~$v_1$ and~$v_4$. 
	Because $\delta(v_2, v_3) < \ell$, no $\ell$-realizing guard set sees both, the incoming and the outgoing corridor.
\end{proof}

\subparagraph{Widening gadget.}
While, in principle, all gadgets needed to encode a given instance of \psat have been introduced, there is a difficulty when connecting these gadgets later:
The bending gadget has too large width compared to the space between incoming and outgoing corridors of the variable and clause gadgets.
To address this, we introduce a widening gadget that, similar to the bending gadget, correctly propagates the signal and thereby creates space between the corridors; see~\cref{fig:widening_valid_placement}. 
This gadget is narrow enough to be connected to the incoming and outgoing corridors of variable and clause gadgets.
As depicted in~\cref{fig:widening_use}, the widening gadget can be used in combination with the variable gadget to create space for the bending gadgets.

\begin{lemma}
	For the widening gadget, there exists an $\ell$-realizing guard set that sees the incoming corridor and another $\ell$-realizing guard set that sees the outgoing corridor. 
	Moreover, no $\ell$-realizing guard set sees both, the incoming and the outgoing corridor.
\end{lemma}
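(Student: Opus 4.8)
The plan is to follow the same template used for the bending gadget, since the widening gadget plays an analogous role of propagating a single signal while merely stretching the available space. First I would isolate the forced part of any guard set: by construction the gadget embeds banning-type subpolygons, so by \cref{obs:banning_arbitrary} every $\ell$-realizing guard set must contain the corresponding unique forced guards. I denote this forced set by $R$ (the red points in \cref{fig:widening_valid_placement}) and let $v_2$ and $v_3$ be the two free vertices lying on the incoming and outgoing corridor, respectively.

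For the existence part, I would exhibit the two guard sets explicitly, namely $R \cup \{v_2\}$ and $R \cup \{v_3\}$. It then remains to check that each is genuinely $\ell$-realizing, that $R \cup \{v_2\}$ sees the incoming corridor, and that $R \cup \{v_3\}$ sees the outgoing corridor. Since the corridors are unit squares and independent, the seeing claims are immediate from $r$-visibility; the dispersion claims reduce to verifying that the added vertex is at geodesic $L_1$-distance at least $\ell = 2+\varepsilon$ from every point of $R$, which follows from the $\tau$-spacing between nested corridors exactly as in the banning-type analysis underlying \cref{10-piece}.

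For the mutual-exclusion part, I would again appeal to \cref{obs:banning_arbitrary} to rule out guards at the two ``blocking'' vertices adjacent to the banning subpolygons (the analogues of $v_1$ and $v_4$ in the bending gadget). Once these positions are forbidden, the only remaining candidates that could cover the incoming and outgoing corridors are $v_2$ and $v_3$; the gadget is dimensioned precisely so that $\delta(v_2,v_3) < \ell$, whence no single $\ell$-realizing guard set can contain both, and therefore none can see both corridors simultaneously.

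The main obstacle I anticipate is geometric rather than logical: one must confirm that the gadget can be made narrow enough to attach to the comparatively tight incoming and outgoing corridors of the variable and clause gadgets, while still keeping $\delta(v_2,v_3) < \ell$ and preserving the independence of all involved corridors from the rest of $\polygon$. Since widening necessarily increases one dimension of the gadget, the delicate point is to ensure that this enlargement creates neither an unintended short path that would let one guard see both corridors, nor an unintended long path that would break the forced-guard argument of \cref{obs:banning_arbitrary}. I expect this to reduce to careful but routine bookkeeping of the lengths specified in \cref{fig:widening_valid_placement,fig:widening_use}.
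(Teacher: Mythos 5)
There is a genuine gap: you have transplanted the bending-gadget template onto a gadget that does not have the bending gadget's structure. The widening gadget is \emph{not} assembled from banning-type subpolygons (which are $10$-$\varepsilon$-packings with a \emph{unique} $\ell$-realizing guard set, \cref{obs:banning_arbitrary}); as the paper's proof observes, it consists essentially of a \emph{$9$-$\varepsilon$-packing} plus the two open corridors. By \cref{9-piece}, a $9$-$\varepsilon$-packing admits \emph{exactly two} $\ell$-realizing guard sets, which are two mutually symmetric chains of guards rather than a common forced core $R$ together with two free vertices. Consequently, your first step -- ``every $\ell$-realizing guard set must contain the unique forced set $R$'' -- is false for this gadget, and the two exhibited guard sets are not of the form $R\cup\{v_2\}$ and $R\cup\{v_3\}$ differing in a single vertex; they differ along the whole packing. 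Your mutual-exclusion argument inherits the same problem: there are no two distinguished vertices $v_2,v_3$ with $\delta(v_2,v_3)<\ell$ doing the work. Instead, the exclusion in the paper follows because whichever of the two packing guard sets one commits to, that entire chain leaves no vertex at distance at least $\ell$ from which the \emph{other} open corridor could be covered.

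The ``main obstacle'' you flag at the end -- making the gadget narrow enough to attach to the tight corridors of the variable and clause gadgets -- is in fact the reason the gadget cannot be built your way: full banning-type subpolygons (as in the bending gadget) make the construction too wide, which is precisely why the widening gadget is introduced and why it must rely on the weaker structural fact of \cref{9-piece} (two admissible guard sets) rather than the uniqueness statement of \cref{obs:banning_arbitrary}. To repair your proof, replace the appeal to \cref{obs:banning_arbitrary} by \cref{9-piece}, take the two guard sets of the $9$-$\varepsilon$-packing (each extended by one guard covering the incoming, respectively outgoing, corridor) as the two witnesses, and argue the exclusion by checking that each packing guard set blocks every candidate position for covering the opposite open corridor.
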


\begin{proof}
	\cref{fig:widening_valid_placement} depicts an $\ell$-realizing guard set that sees the incoming corridor. 
	Due to symmetry, it is easy to obtain an $\ell$-realizing guard set that sees the outgoing corridor. 
	
	It remains to be shown that no $\ell$-realizing guard set can see both, the incoming and the outgoing corridor. 
	To this end, observe that the widening gadget basically consists of a $9$-$\varepsilon$-packing, and the two open corridors. 
	From \cref{9-piece}, we know that the $9$-$\varepsilon$-packing has exactly two $\ell$-realizing guard sets, one of which is depicted in \cref{fig:widening_valid_placement} (excluding the leftmost guard), and the other one is symmetric. 
	Clearly, the guard set in \cref{fig:widening_valid_placement} prevents placing another guard at a sufficient distance to cover the outgoing corridor.
\end{proof}

\begin{figure}[htb]
	\centering
	\includegraphics[page=13]{figures/hardness_2eps.pdf}
	\caption{Widening gadget.}
	\label{fig:widening_valid_placement}
\end{figure}

\begin{figure}[htb]
	\centering
	\includegraphics[page=16]{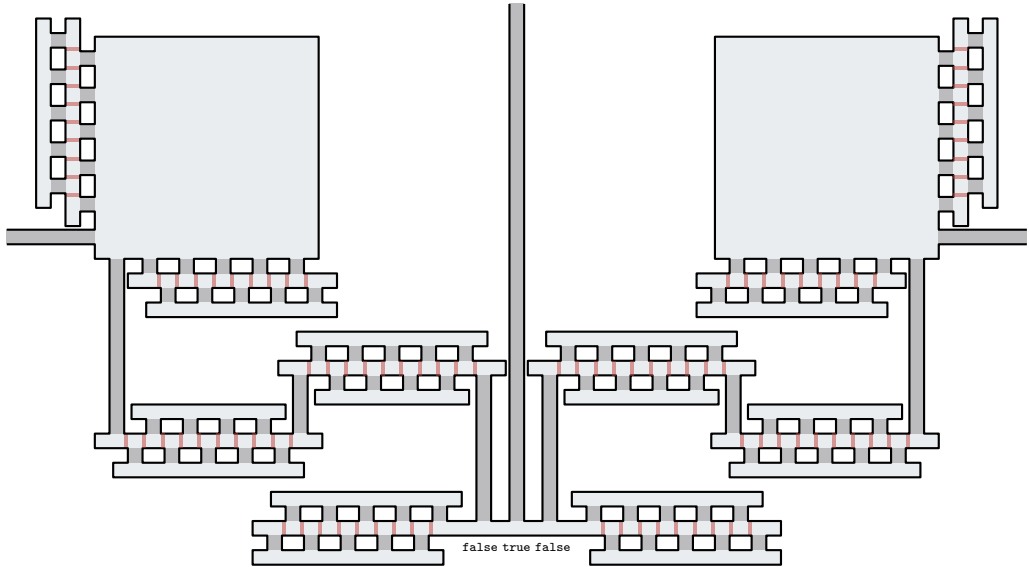}
	\caption{Widening gadgets ensure that there is enough space for the bending gadgets.}
	\label{fig:widening_use}
\end{figure}

\subsection{Completing the construction and proof} \label{hardness_finish}
After introducing the gadgets, it remains to complete the construction.
We describe the procedure for office-like polygons with unrestricted vertex coordinates.
For polyominoes and office-like polygons with vertices at integer coordinates, the construction works similar.

\begin{lemma}\label{hardnes2eps_proof}
	For office-like polygons with holes, it is \NP-hard to decide if a dispersion distance of $2+\varepsilon$ is realizable for any rational $\varepsilon > 0$.
\end{lemma}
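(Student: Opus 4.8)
The plan is to construct, for any given instance $\varphi$ of \psat, an office-like polygon $\polygon_\varphi$ with holes and rational vertex coordinates such that $\varphi$ is satisfiable if and only if $\polygon_\varphi$ admits an $\ell$-realizing guard set, where $\ell := 2+\varepsilon$. Since $\varepsilon \in (0,1)$ is fixed, we set $0 < \tau < \varepsilon/2$ and use the gadget library assembled in \cref{app:gadgets-hardness-arbitrary}. First I would recall that the entire construction can be laid out from the rectilinear embedding of the clause-variable incidence graph $G = (\mathcal X \cup \mathcal C, E)$ (cf.~\cref{fig:drawing_graph}), which is computable in polynomial time and has polynomial size. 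The reduction then replaces each variable vertex by a variable gadget, each clause vertex by a $2$- or $3$-clause gadget, each $\ang{90}$ turn of an edge by a bending gadget, and inserts widening gadgets wherever the incoming/outgoing corridors are too narrowly spaced to accommodate a bend. Corridors connecting consecutive gadgets along an edge of the embedding encode truth values, as established in \cref{vargadget} and the subsequent clause/bending/widening lemmas.

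The correctness argument splits into the two implications. For the forward direction, given a satisfying assignment, I would place, inside each variable gadget, the guard set corresponding to its truth value (by \cref{vargadget}, one covers the \texttt{true}-corridor and leaves the two \texttt{false}-corridors uncovered, or vice versa), and propagate this choice through the bending and widening gadgets along each edge so that each literal-corridor is covered from within its variable gadget precisely when the literal is \texttt{false}. In each clause gadget we then select the $\ell$-realizing guard set (guaranteed to exist by the clause lemmas) covering the two incoming corridors whose literals are \texttt{false}; the remaining incoming corridor corresponds to a literal that is \texttt{true}, and since the clause is satisfied at least one such literal exists, so that corridor is already covered from the variable side. Because each gadget's guard set is $\ell$-realizing and every connecting corridor is independent, the union covers all of $\polygon_\varphi$ while maintaining pairwise dispersion at least $\ell$. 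For the converse, I would observe that in any $\ell$-realizing guard set each gadget must be guarded by one of its admissible configurations (the banning-type subpolygons of \cref{obs:banning_arbitrary}, which are $10$-$\varepsilon$-packings, pin down the forced guards and forbid the critical vertices $v_1,v_6$ etc.), so every variable gadget reads off a consistent truth value; the key point is that no $\ell$-realizing guard set can cover all three incoming corridors of a clause gadget, forcing at least one incoming corridor to be covered from the variable side, i.e. by a \texttt{true} literal, which means every clause is satisfied.

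The main obstacle I expect is the geometric realizability of the assembled polygon as a legitimate office-like polygon, and in particular guaranteeing independence of all corridors. Because the gadgets are substantially larger here (the nested-corridor constructions of \cref{fig:wcopt_twoeps} require $9$, $10$, or $11$ corridors), the bending gadget is too wide relative to the spacing between corridors of the variable and clause gadgets; this is exactly why the widening gadget is introduced, and the delicate part is verifying that widening gadgets can always be inserted to create enough horizontal and vertical room (cf.~\cref{fig:widening_use}) without two distinct edges' corridors coming within distance $\ell$ of one another, and without any corridor violating the office-like definition (each corridor strictly narrower than both incident rooms, no coincident vertices). I would handle this by scaling the grid of the rectilinear embedding by a sufficiently large polynomial factor before substituting gadgets, so that each grid cell comfortably contains one gadget together with its widening attachments, and by choosing all free lengths ``as short as possible'' as stipulated in \cref{app:gadgets-hardness-arbitrary} so that inter-gadget distances are dominated by the grid spacing. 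Once independence and the office-like property are secured, the local gadget lemmas combine modularly to yield the global equivalence, completing the reduction; membership in \NP\ was already argued in \cref{complexity}, so this establishes \NP-completeness and hence \cref{cor:hardness-arbitrary}.
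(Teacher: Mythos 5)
Your proposal is correct and follows essentially the same route as the paper's proof: the same reduction from \psat via a rectilinear embedding, the same substitution of variable, clause, bending, and widening gadgets, and the same two-directional correctness argument based on signal propagation and the fact that no $\ell$-realizing guard set can cover all incoming corridors of a clause gadget from within. One small slip: a literal's corridor is covered from within its variable gadget precisely when the literal evaluates to \texttt{true} (so that the clause it feeds is satisfied), not \texttt{false} --- your own clause-gadget argument in the next sentence uses the correct convention, so this is merely a wording inconsistency rather than a gap.
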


\begin{proof} 
	As mentioned earlier, without loss of generality, we assume that~$\varepsilon \in (0,1)$.
	Let $\varphi$ be a formula of \psat.
	We construct an office-like polygon~$\polygon_\varphi$ that can be guarded with a dispersion distance of at least $2+\varepsilon$ if and only if $\varphi$ is satisfiable. 
	We obtain $\polygon_\varphi$ as follows: For the clause-variable incidence graph of~$\varphi$, compute a rectilinear embedding (with polynomially many edge bends). 
	Given the embedding, replace vertices corresponding to variables by variable gadgets, and vertices corresponding to clauses by clause gadgets.
	We choose the correct clause gadget based on the degree of the respective vertex. 
	If a variable occurs in a clause, connect their corresponding gadgets. 
	To this end, follow the edges of the embedded graph and introduce bends where necessary. 
	To create enough space for placing bending gadgets, use widening gadgets before or after the respective variable or clause gadget, as shown in~\cref{fig:widening_use}.
	All remaining (straight) connections between two gadgets are made by extending the incoming and outgoing corridors until they meet. 
	Of course, when a variable occurs as positive literal in a clause, then the outgoing corridor labeled with \texttt{true} should be connected to the respective clause gadget. 
	Vice versa, if the variable occurs as negative literal in the clause, then the variable gadget should be connected by a corridor labeled with \texttt{false} to the respective clause gadget. 
	For an exemplary overview, consider the polygon depicted in~\cref{fig:exemp_scot}.
	
	The construction has polynomial size: 
	For an instance of \psat consisting of~$n$ variables, there are at most $\frac{3n}{2}$ clauses and hence the clause-variable incidence graph has at most~$\frac{5n}{2}$ vertices. 
	Each gadget has constant size. Each vertex is replaced by a clause or variable gadget and a constant number of widening and bending gadgets. 
	To this end, note that four widening gadgets and five bending gadgets always suffice to ensure the outgoing corridors have the correct orientation; see~\cref{fig:exemp_scot}. 
	Additionally, there are only polynomially many edge-bends in the clause-variable incidence graph; thus, we need only polynomially many bending gadgets.
	
	\begin{claim}
		If $\varphi$ is satisfiable, then there exists a guard set that realizes a dispersion distance of $2+\varepsilon$ for $\polygon_\varphi$.
	\end{claim}
	\begin{claimproof}
	Consider an assignment that satisfies $\varphi$. 
	We show how to obtain a guard set for $\polygon_\varphi$ that realizes a dispersion distance of $2+\varepsilon$. 
	For each variable that is set to \texttt{true}, place guards in the respective variable gadget such that they see the corridor labeled with \texttt{true}. 
	For each variable set to \texttt{false}, place guards in the respective variable gadget such that they see both corridors labeled with \texttt{false}. 
	Such guard placements with sufficient dispersion distances exist by~\cref{vargadget}. 
	After placing the guards in each variable gadget, we propagate the signal until it reaches the respective incoming corridor of the clause gadget. 
	More specifically, for each outgoing corridor that is seen from insde the variable gadget, we use the previously explained guard positions of the widening and bending gadgets, ensuring that the incoming corridor to the next clause gadget is guarded from outside. 
	After propagating all signals, each clause must have at least one incoming corridor that is already guarded. 
	Here we use that the given assignment satisfies $\varphi$.
	By construction, we can then always find a guard set for the respective clause gadget that also sees all so-far uncovered incoming corridors. 
	We choose such a guard set for each clause gadget. 
	Finally, for each incoming corridor of clause gadgets that is only seen by a guard placed inside the respective clause gadget, we propagate the signal (in the other direction) through the widening and bending gadgets, until a variable gadget is reached. 
	By construction, every point of $\polygon_\varphi$ is seen by at least one guard and the dispersion distance is at least $2+\varepsilon$. 
	\end{claimproof}
	
	\begin{claim}
		If there exists a guard set that realizes a dispersion distance of $2+\varepsilon$ for $\polygon_\varphi$, then $\varphi$ is satisfiable.
	\end{claim}
	
	\begin{claimproof}
	From a guard set with dispersion distance at least $2+\varepsilon$ for $\polygon_\varphi$, we determine a satisfying assignment for $\varphi$. 
	To this end, we consider all variable gadgets and the guards placed in them. 
	For each variable gadget, if the corridor labeled with \texttt{true} is seen by some guard placed inside the variable gadget, then set the corresponding variable to \texttt{true}. 
	Otherwise, set the variable to \texttt{false}. 
	
	We show that the resulting assignment satisfies $\varphi$. 
	To this end, recall that at least one incoming corridor of each clause gadget must be seen by some guard outside the respective clause gadget. 
	Since widening gadgets and bending gadgets only propagate signals, it follows that there must be a guard placed inside some variable gadget that initializes the signal. 
	As a consequence, the clause is satisfied by our assignment.
	\end{claimproof}
	
	This concludes the proof.
\end{proof}

Interestingly, the \agp turned out to be easy for independent office-like polygons~\cite{cruz}. 
In~particular, for an independent office-like polygon $\polygon = (\mathcal R, \mathcal C)$, a smallest guard set always has size $|\mathcal R| = |\mathcal C| + 1$ if~$\polygon$ is hole-free, and otherwise has size $|\mathcal C|$.

\begin{figure}
	\centering
	\includegraphics[page=17, scale=0.9]{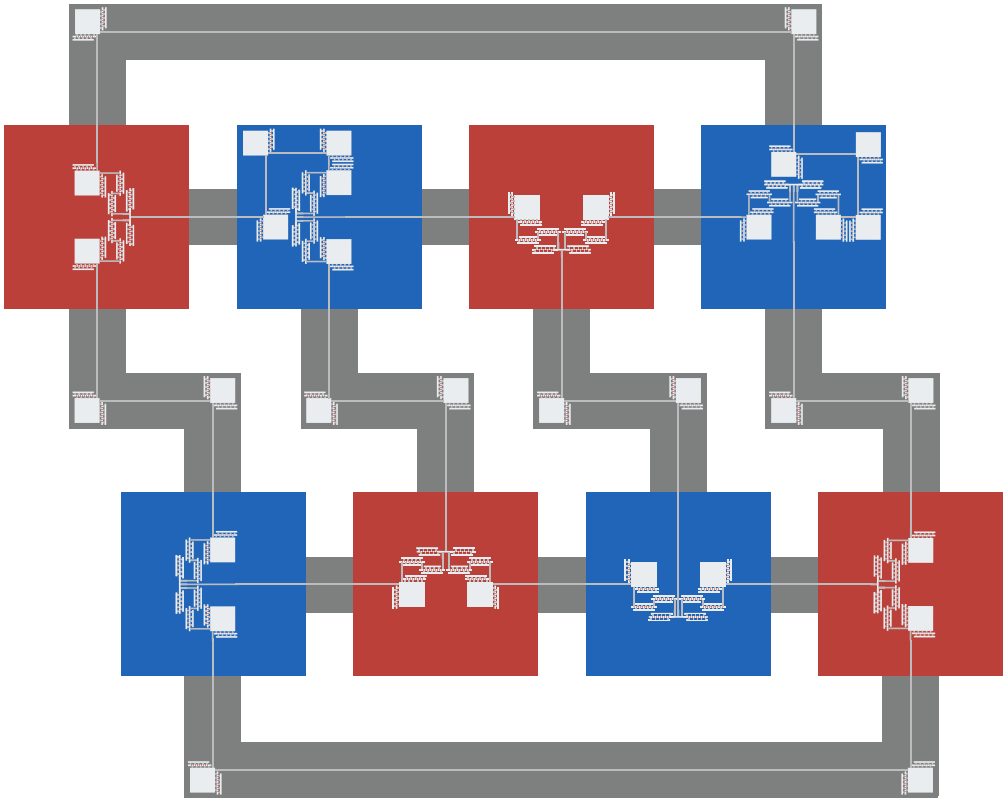}
	\caption{The office-like polygon as a result of the \NP-hardness construction for the Boolean formula $\{x_1 \vee x_2 \vee \overline{x_3}\} \wedge \{\overline{x_1} \vee \overline{x_2} \vee x_4\} \wedge \{\overline{x_1} \vee x_3 \vee \overline{x_4}\} \wedge \{\overline{x_2} \vee \overline{x_3} \vee \overline{x_4}\}$; we refer to~\cref{fig:drawing_graph} for the respective rectilinear embedding of the clause-variable incidence graph.}
	\label{fig:exemp_scot}
\end{figure}
	\section{Omitted details of~\cref{dp_algo}: Dynamic programming}
\label{dp_algo_x}
Computing optimal solutions for the \dagp turns out to be challenging. 
The only non-trivial class of polygons for which a polynomial-time algorithm is known are polyominoes whose dual graph is a tree~\cite{rieck-scheffer-dispersiveAGP}. 
Additionally, the complexity results imply that the existence of a polynomial-time algorithm is unlikely, even for independent office-like polygons.

However, we now give a polynomial-time dynamic programming algorithm that computes the optimal solution in hole-free independent office-like polygons.
The main component of this approach is to solve geometric independent set problems.

\dpalgo*

We start by stating some preliminary assumptions.  
Afterward, in \cref{alg_overview}, we introduce the overall structure of the algorithm and define proper subproblems, which we then solve recursively. 
Furthermore, we explain how to solve each subproblem to obtain the desired solution in~\cref{alg_subproblems}

\subparagraph{Preliminaries.}
Consider a hole-free independent office-like polygon~$\polygon$. 
We assume that~$\polygon$ is given in a way that allows access to the clockwise order of corridors around each room in constant time.
If not provided, this information can easily be obtained; for example by computing the visibility polygon of room corners (i.e., of convex vertices), and then keeping track of which vertices lie on the boundary.

We define $G(\polygon) = (V,E)$ as the graph representing~$\polygon$ as follows: 
For each room $R_v$ of $\polygon$ there is a node $v \in V$. 
Furthermore, there is an edge $e=\{v,w\} \in E$ if the rooms $R_v$ and~$R_w$ are connected by a corridor $C_e$, see \cref{fig:overview_dp}. 
Note that, since $\polygon$ is hole-free, $G(\polygon)$ is a tree.

Furthermore, we assume that the distance between each pair of vertices in~$\polygon$ is given. 
This information can be computed in linear time~\cite{bae}.

\subsection{High-level overview} \label{alg_overview}
Consider a hole-free independent office-like polygon~$\polygon$ with~$n$ vertices. 
We solve the decision problem whether a guard set with a dispersion distance of at least $\ell$ exists. 
Afterward, we do a binary search over the $\mathcal{O}(n^2)$ many possible dispersion distances (implied by the pairwise vertex distances in $\polygon$) to solve the corresponding maximization problem, i.e., finding a guard set realizing a maximum dispersion distance by using our algorithm for the decision problem as a subroutine.

The algorithm works as follows: 
Consider an arbitrary in-arborescence $G'$ for $G(\polygon)$; for example, consider \cref{fig:overview_dp}. 
We define a subproblem for each node in $G'$ and solve it once all predecessor nodes have been marked as processed.

For the runtime, to solve the decision problem, we have to solve a subproblem at every node of $G'$. 
Moreover, we add a logarithmic factor to the runtime for the binary search on the possible dispersion distances to obtain an optimal solution for the maximization problem. 
Hence, we result with a total runtime in $\mathcal{O}(t_n\cdot n \log n)$, where $t_n$ denotes the time needed to solve the subproblem.
To this end, in \cref{alg_subproblems}, we discuss an algorithm for the subproblem, which has an amortized runtime in $\mathcal{O}(n^4 \log n)$ across all subproblem calls. 
As a consequence, the total runtime of our algorithm is in $\mathcal{O}(n^5 \log^2 n)$.

Before we explain the subproblem in detail, we introduce important concepts.

\begin{figure}[htb]
    \centering
    \includegraphics[page=2]{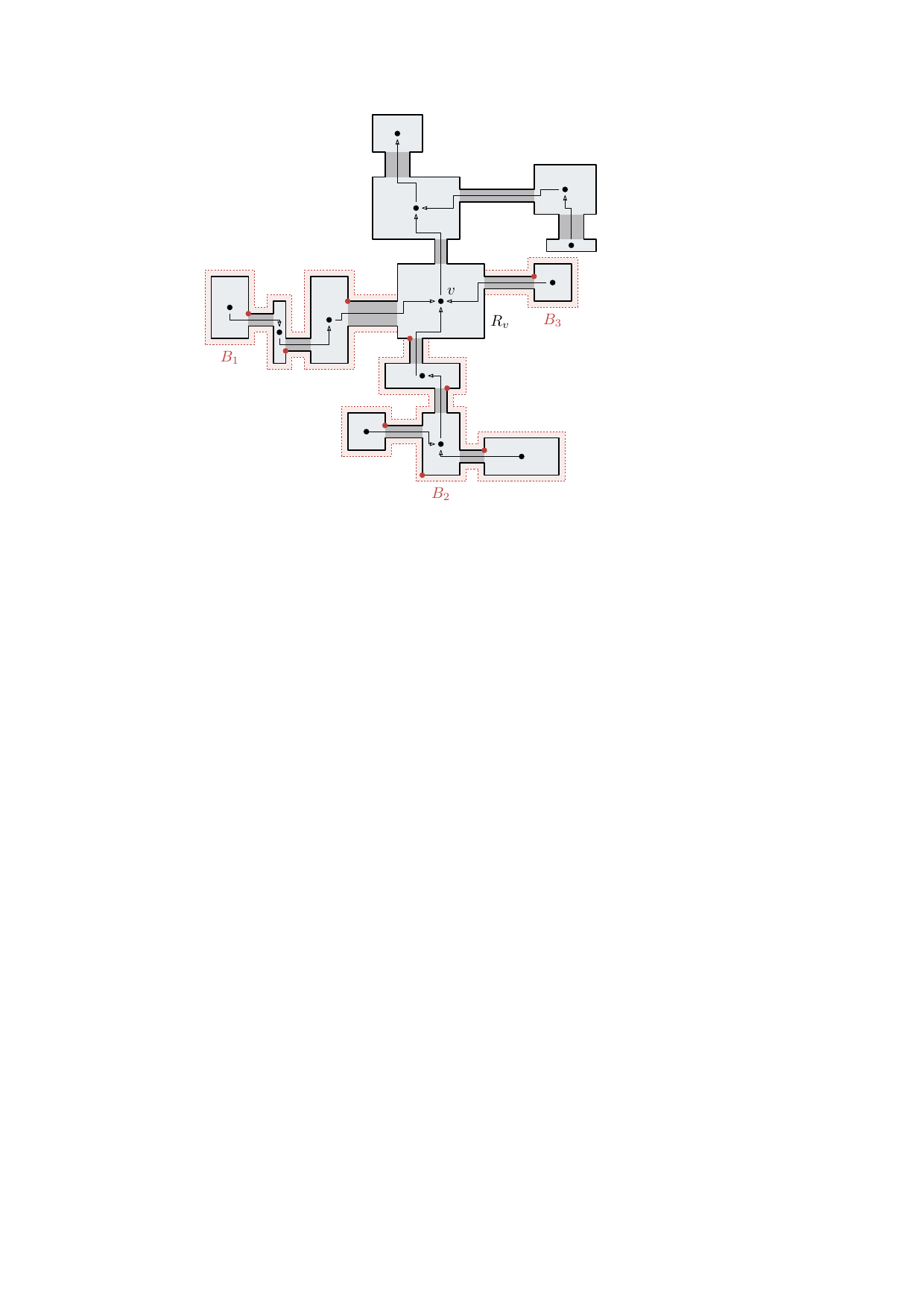}
    \caption{Hole-free independent office-like polygon~$\polygon$ with in-arborescence $G'$ for $G(\polygon)$ and branches $B_1,B_2$ that are rooted at $v$. Note that this is a copy of~\cref{fig:overview_dp_outline}.}
    \label{fig:overview_dp}
\end{figure}

\subparagraph{Branches and configurations.}
Consider a directed edge $e=(u,v)$ in $G'$ and the maximal subtree~$T$ of~$G'$ rooted at~$u$. 
The \emph{branch}~$B$ for~$e$ consists of~$C_e$ and all rooms and corridors corresponding to nodes and edges of~$T$. 
We refer to the two vertices that are shared by~$C_e$ and~$R_v$ as \emph{gate} vertices for~$B$. 
For convenience, we define the branches rooted at~$v$ as those for edges of the form $(\cdot,v)$. 
In \cref{fig:overview_dp}, the rooms and corridors outlined in red correspond to the branches rooted at~$v$, and gate vertices are marked white. 

A guard set for~$B$ contains guards exclusively on vertices within~$B$, ensuring coverage of every room and corridor in~$B$. 
Note that some guard sets for~$B$ also cover~$R_v$ (by placing a guard at a gate vertex), while others do not.
A set of \emph{configurations}~$C$ for~$B$ is a collection of guard sets for~$B$ with a dispersion distance of at least~$\ell$, and with the following property: 
If existent, consider an arbitrary guard set~$\guardset$ for~$\polygon$ with a dispersion distance of at least~$\ell$. 
Let $\guardset_B \subseteq \guardset$ denote the guards that are placed at vertices within~$B$. 
Then, there always exists a configuration $c \in C$ such that $(\guardset \setminus \guardset_B) \cup c$ forms a guard set for~$\polygon$ with a dispersion distance of at least~$\ell$.

\subparagraph{Subproblems.} 
When processing the root~$r$ of $G'$, we solve the entire decision problem. 
For every other node~$v$ in~$G'$, we define a subproblem as follows. 
Let~$w$ denote the unique node such that there exists an edge from~$v$ to~$w$ in~$G'$. 
Find a set of configurations for the branch of $(v,w)$. 
Note that, by the time we process~$v$, all predecessor nodes have already been processed, and we therefore know a set of configurations for each branch rooted at~$v$. 
We discuss later that it suffices to consider configuration sets of constant size. 
Moreover, if at some point a computed configuration set is empty, this implies that no guard set with a dispersion distance of at least~$\ell$ exists for~$\polygon$, not even within this branch.

\subsection{Details on subproblems} \label{alg_subproblems}
We now discuss how to solve the previously stated subproblem for a node~$v$ in~$G'$.
More specifically, if~$v$ is the root of~$G'$, we solve the entire decision problem.
Otherwise, we compute a configuration set for the branch~$B$ belonging to the (unique) edge $e = (v,w)$.
We will discuss the case where~$v$ is the root of~$G'$ later; for now, we consider the alternative.
Using that~$\polygon$ is independent, a crucial observation is the following.

\gateVerticesShortestPath*

Because $\polygon$ is independent, at least one guard needs to be placed in every corridor.
Furthermore, it suffices to place 
\begin{itemize}
    \item at most two guards in every corridor, and if so, in different rooms.
    \item at most one guard at corners of each room.
\end{itemize}
This holds because otherwise there would always be two guards with identical visibility polygons.
Building on these observations, we define a configuration set of constant size for~$B$.

\subparagraph{Configuration set.}
Clearly, there is only a constant number of combinations to place guards at the corners of~$R_v$ and in~$C_e$; some of these can even be ignored due to the above observations, or because there are guards in distance~$\ell$.
In \cref{configurations}, sensible guard placements are depicted in red.
This list is complete, up to symmetric placements and alternative choices for placing a guard at a corner of~$R_v$.
We refer to the set of these placements as~$\mathcal{X}$.

For each placement $X \in \mathcal{X}$, we aim to construct a configuration, i.e., a special guard set for $B$.
The configuration is basically constructed as follows: choose one configuration for every branch rooted at~$v$ such that their union, together with $X$, is a guard set for~$B$. 
If~$X$ does not contain a guard in~$R_v$, then we need to ensure that at least one of the chosen configurations (for branches rooted at~$v$) contains a guard at a gate vertex of its respective branch. 
However, there are many possibilities to choose from, and some may be better than others.
In particular, in our constructed configuration, no two guards should have a distance that is smaller than $\ell$.
If this is not possible, then we do not need to construct a configuration for~$X$.
If there are even options remaining, then we choose configurations so that the minimum distance of a guard placed in any of the chosen configurations to the respective blue vertex in~$C_e$, which is a gate vertex of~$B$, is maximized; see \cref{configurations}.
(For the case where two guards are placed on diagonal vertices of $C_e$, this is clearly not necessary.)
Note that, if the distance to the blue vertex (which is a gate vertex of~$B$) is maximized, so is the distance to the other gate vertex of~$B$.
By \cref{gate_dist_main}, this is best possible when using~$X$.

\begin{figure}[htb]
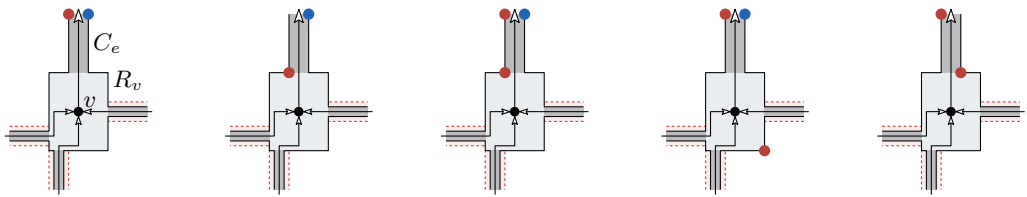

    \centering
    \begin{subfigure}[b]{0.18\textwidth}
         \centering
         \includegraphics[page = 1]{figures/configurations.pdf}
         \label{conf1-app}
     \end{subfigure}
     \hfill
     \begin{subfigure}[b]{0.18\textwidth}
         \centering
         \includegraphics[page=2]{figures/configurations.pdf}
     \end{subfigure}
     \hfill
     \begin{subfigure}[b]{0.18\textwidth}
         \centering
         \includegraphics[page = 3]{figures/configurations.pdf}
     \end{subfigure} 
     \hfill
     \vspace{0.5cm}
     \begin{subfigure}[b]{0.18\textwidth}
         \centering
         \includegraphics[page = 4]{figures/configurations.pdf}
         \label{corner_place-app}
     \end{subfigure}
     \hfill
     \begin{subfigure}[b]{0.18\textwidth}
         \centering
         \includegraphics[page = 5]{figures/configurations.pdf}
         \label{no_blue-app}
     \end{subfigure}
    \caption{Schematic illustration of sensible guard placements at corners of $R_v$ and in $C_e$ (depicted red).
        When fixing one of the options, it suffices to find a guard set for $e$'s branch that has a dispersion distance of at least~$\ell$ and, among those, maximizes the distance to the respective blue vertex.
        Note~that this figure is the same as~\cref{configurations_main}, shown again for clarity and convenience.}
    \label{configurations}
\end{figure}

It remains to discuss how to compute such a configuration for $X$ in polynomial time. 
 
\subparagraph{Solving subproblems.}
First, for every branch rooted at~$v$, we discard all configurations in which a guard has a distance of less than~$\ell$ to a guard in~$X$. 
This can be done in linear time: 
There are at most two guards in~$X$, and at most $\mathcal{O}(n)$ many guards in configurations to consider, as there are a constant number of configurations for each branch rooted at~$v$. 
For every pair of guards, one from~$X$ and one from the respective configuration, we check if the distance is less than~$\ell$. 
If so, we discard the entire configuration.

Second, we perform a binary search over the $\mathcal{O}(n^2)$ pairwise vertex distances in~$\polygon$, aiming to find the choice of configurations for branches rooted at~$v$ that maximizes the distance to the blue vertex, as discussed above. 
For each distance~$\ell'$ considered during the current step of the binary search, we discard all configurations of branches rooted at~$v$ where a guard has a distance of less than~$\ell'$ to the blue vertex. 
This can be done in linear time, similar to the procedure above. 
Moreover, as mentioned before, depending on whether a guard in~$X$ sees~$R_v$, we may need to ensure that one of the chosen configurations contains a guard at a gate vertex of its branch. 
This can be done by selecting a fixed configuration that contains a guard at a gate vertex of its branch~$B'$, then discarding all other configurations belonging to~$B'$. 
Of course, we must check every possible such configuration, separately.

The crucial part is now to select configurations (from the remaining ones), one for each branch rooted at $v$, such that no two guards in different configurations have a distance of less than~$\ell$, or decide that no such configurations exist. 
In this case, we no longer need to consider~$X$.
Note that this problem corresponds to finding an independent set of size~$k$, where~$k$ denotes the number of branches rooted at~$v$, for the following graph:
Each (non-discarded) configuration of a branch rooted at~$v$ is a node. 
Two nodes are connected by an edge if either both corresponding configurations $c_1, c_2$ belong to the same branch, or if there exists a guard $g_1 \in c_1$ and a guard $g_2 \in c_2$ such that $\delta(g_1,g_2) < \ell$.
Below, we state a simple algorithm that solves the independent set problem in polynomial time.\\

We are left to consider the case where~$v$ is the root of~$G'$. 
This case is nearly identical to the others. 
The only difference is the set of positions~$\mathcal X$. 
Here, we only have to consider placements where either none or exactly one guard is at a corner of~$R_v$, since there is no corridor remaining that we need to take care of. 
In the first case, we again need to ensure that we choose at least one configuration for a branch rooted at~$v$ that places a guard at a gate vertex.

We discuss the overall runtime for the subproblem associated with node~$v$ in~$G'$.
There are constantly many placements in~$\mathcal{X}$ to consider.
For each placement, we have to solve the independent set problem at most $\mathcal{O}(k \cdot \log n)$ times, where~$k$ is the number of corridors incident to~$R_v$, and the remainder represents the binary search.
Note that the~$k$ comes from fixing configurations that ensure coverage of~$R_v$.
On average, among all subproblem calls for nodes in~$G'$, $k$ clearly is constant.
Each independent set problem can be solved with a runtime in~$\mathcal O(n^4)$, which we discuss below.
Therefore, the problem associated with node~$v$ can be solved with the already mentioned amortized runtime in~$\mathcal{O}(n^4 \log n)$.

\subsection{Computing independent sets} \label{alg_independent_sets}
\newcommand{\BVe}[0]{BV}
\newcommand{\BHo}[0]{BH}
We discuss a simple algorithm to solve the above stated independent set problem, making use of the given geometric representation.
We begin by introducing important terminology, followed by discussing a simple subroutine performing a greedy selection of configurations.
Afterward, we explain the full procedure.

The distance between two configurations $c_1, c_2$ (for different branches) is the minimum pairwise distance between a guard from~$c_1$ and a guard from~$c_2$.
Similarly, the distance between a configuration~$c$ and a point~$p$ is the minimum distance between any guard in~$c$ and $p$.
Note that we can compute any of these distances in constant time for configurations belonging to branches rooted at~$v$, since we already know the smallest distance from a guard in such a configuration~$c$ to its respective gate vertex:
the information was obtained at the moment when~$c$ was computed.

Among the branches rooted at~$v$, $\BVe$ consists of those that extend vertically, i.e., either to the top, or to the bottom from~$R_v$ (meaning their corridors incident to~$R_v$ extend in this direction).
Similarly, $\BHo$ contains all branches that extend horizontally.

Recall that, at this point, we have a configuration set~$C_i$ of constant size for each branch~$B_i$ rooted at~$v$.
We denote by~$C_{\BVe}$ (or~$C_{\BHo}$) the collection of all configuration sets for branches in~$\BVe$ (or~$\BHo$).
With a slight abuse of notation, deleting a configuration~$c_i$ for branch~$B_i$ from~$C_{\BVe}$ actually means deleting it from~$C_i \in C_{\BVe}$.

For~$B_1, B_2 \in \BVe$, we say that~$B_1$ is to the left of~$B_2$ if there exists a vertical line such that the two gate vertices of~$B_1$ lie to its left, while the two gate vertices of~$B_2$ lie to its right.

For $B_i \in \BVe$, consider its configurations $C_i$.
Let $v, v'$ denote the left and right gate vertex of~$B_i$.
For~$c_i \in C_i$, $\delta_{c_i}(v)$ is the shortest distance from a guard in~$c_i$ to~$v$; $\delta_{c_i}(v')$ is defined similarly.
Consider two configurations $c_i, c_j \in C_i$. 
Without loss of generality, we assume that either $\delta_{c_i}(v) > \delta_{c_j}(v)$ or $\delta_{c_i}(v') > \delta_{c_j}(v')$, since otherwise~$c_j$ would always be preferable, allowing us to discard~$c_i$ immediately. 
We say that~$c_i$ is \emph{smaller} than~$c_j$ if $\delta_{c_i}(v') > \delta_{c_j}(v')$. 

Next, we introduce a simple subroutine that greedily selects configurations for branches extending to opposite sides from~$R_v$.

\subparagraph{Greedy selection.}
We assume either $(\BVe, C)$ or $(\BHo, C)$ as input for the greedy routine, where~$C$ contains configuration sets for the branches in $\BVe$ or $\BHo$, respectively.
We now explain the routine for~$(\BVe, C)$; note that the other case is identical, except that~$\polygon$ is effectively rotated~\ang{90} clockwise.

Consider the branches $B_1, \dots, B_k$ in $\BVe$, sorted from left to right.
Let $C_i \in C$ denote the set of configurations for~$B_i$. 
We select a configuration for each branch in~$\BVe$ as follows: 
Consider~$B_i$ for increasing~$i$, and select the smallest configuration in~$C_i$ that has a distance of at least~$\ell$ to all previously selected configurations.

If no configuration can be selected at any point, return \texttt{false}.
After selecting a configuration for each branch in~$\BVe$, return \texttt{true}.

Next, we analyze the runtime of the greedy routine.

\begin{claim} \label{cl:greedy_runtime}
    The greedy selection can be implemented with a runtime in~$\mathcal{O}(n)$.
\end{claim}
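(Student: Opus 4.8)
The plan is to charge a constant amount of work to each branch in $\BVe$, so that the total is linear in the number of branches and hence in $n$. First I would record the two facts that make this possible: by construction every configuration set $C_i$ has constant size, and the branches $B_1,\dots,B_k$ are already given in left-to-right order; moreover, by the remark preceding the claim, the distance between any two configurations rooted at $v$, and the distance from a configuration to a gate vertex, are both available in $\mathcal{O}(1)$ time. The naive implementation — testing each candidate configuration for $B_i$ against \emph{every} previously selected configuration — costs $\mathcal{O}(k)$ per branch and thus $\mathcal{O}(k^2)$ overall, so the crux is to avoid rechecking all earlier selections.

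The structural fact I would exploit is that all branches rooted at $v$ attach to the same room $R_v$, so their gate vertices lie on the boundary of the rectangle $R_v$, and by \cref{gate_dist_main} a shortest $L_1$-geodesic between a guard in $B_i$ and a guard in an earlier branch $B_j$ passes through the gates. Hence the distance decomposes as $\delta(c_i,c_j)=\delta_{c_i}(v_i)+d(v_i,v'_j)+\delta_{c_j}(v'_j)$, where $v_i$ is the left gate of $B_i$, $v'_j$ the right gate of $B_j$, and $d(v_i,v'_j)$ is the geodesic distance inside $R_v$. Because the branches are sorted left to right, the horizontal part of $d(v_i,v'_j)$ is monotone in $j$, while its vertical part is either $0$ or the height $h$ of $R_v$ according to whether $B_j$ extends to the same side of $R_v$ as $B_i$ or to the opposite side. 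Consequently all previously selected configurations can be summarized by a constant amount of state, namely the maximal rightward reach $\max_{j<i}\big(x(v'_j)-\delta_{c_j}(v'_j)\big)$, with $x(\cdot)$ the horizontal position, kept separately for top- and for bottom-extending branches; call these $M_{\mathrm{top}}$ and $M_{\mathrm{bot}}$.

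With this state maintained, processing branch $B_i$ reduces to iterating over its $\mathcal{O}(1)$ configurations and, for each, testing the single inequality $\delta_{c_i}(v_i)\ge \ell-\min\!\big(x(v_i)-M_{\mathrm{top}},\,x(v_i)+h-M_{\mathrm{bot}}\big)$ in $\mathcal{O}(1)$ time, selecting the smallest feasible configuration in the sense of $\delta_{c_i}(v')$, and updating the two running maxima in $\mathcal{O}(1)$; if no configuration is feasible we return \texttt{false}. Each branch therefore incurs only $\mathcal{O}(1)$ work, giving $\mathcal{O}(k)=\mathcal{O}(n)$ in total. The case of $\BHo$ is identical after rotating $\polygon$ by $\ang{90}$ clockwise, as noted in the routine's description.

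The main obstacle I anticipate is justifying that this constant-size running state genuinely captures every earlier constraint — that is, proving the domination statement that feasibility against $M_{\mathrm{top}}$ and $M_{\mathrm{bot}}$ is equivalent to feasibility against all previously selected configurations. This requires the careful bookkeeping of the same-side-versus-opposite-side vertical offset, and confirming that the ``smallest'' tie-breaking rule keeps the invariant $M_{\mathrm{top}},M_{\mathrm{bot}}$ correctly updated as each selection is made. Once this monotonicity invariant is established, the linear time bound is immediate.
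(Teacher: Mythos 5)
Your proposal is correct and follows essentially the same route as the paper: both arguments reduce the per-branch feasibility test to constant time by summarizing all previously selected configurations with a constant amount of state kept separately for top- and bottom-extending branches, using the fact that geodesics between branches pass through gate vertices on the boundary of $R_v$ and that the branches are processed left to right. Your running maxima $M_{\mathrm{top}},M_{\mathrm{bot}}$ of $x(v'_j)-\delta_{c_j}(v'_j)$ are just an algebraic rephrasing of the paper's bookkeeping of the previously selected configuration with minimum distance to the right gate vertex of the rightmost top (respectively bottom) branch seen so far.
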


\begin{claimproof}
    In the iteration where~$B_i$ is considered, the corresponding configuration can be selected in constant time:
    Let $B^t$ (or $B^b$) denote the rightmost branch to the left of~$B_i$ among those that extend to the top (or bottom) from~$R_v$.
    Observe that a shortest geodesic $L_1$-path from a vertex in~$B_i$ to any vertex in the branches of~$\BVe$ to the left of~$B_i$ always passes through the right gate vertex of either~$B^t$ or~$B^b$.
    Therefore, it suffices to keep track of which previously selected configurations have a minimum distance to these gate vertices.
    Then, we can select the smallest configuration for~$B_i$ that has a distance of at least~$\ell$ to the configurations stored with~$B^t$ and~$B^b$; this can be done in constant time, as there are only a constant number of configurations available for~$B_i$.

    Note that, in iteration $i+1$, $B_i$ will replace either $B^t$ or $B^b$ as reference for the next iteration.
    Updating the stored configurations is straightforward, as the configuration with the minimum distance to the respective gate vertex will either be the one from the previous iteration, or the one added in this iteration.

    Since there are~$\mathcal{O}(n)$ configurations to consider, the linear runtime follows immediately.\claimqedhere
\end{claimproof}

By construction, if a feasible solution exists, the greedy algorithm will find it.
It is easy to see that the selected configurations have the property that no smaller configuration can be chosen for any of the branches while still obtaining a feasible solution.
This implies that the greedily selected configurations maximize the distance to the top-right and bottom-right corners of~$R_v$.

We proceed by presenting the algorithm in full detail.

\subparagraph{Algorithm for the independent set problem.}
Let~$D$ denote the set of all pairwise geodesic $L_1$-distances between vertices of~$\polygon$.
For every $p \in D$, we apply the following routine:
\begin{enumerate}
    \item Compute $C'_{\BVe}$ (or $C'_{\BHo}$) by removing all configurations from $C_{\BVe}$ (or $C_{\BHo}$) that have a distance of less than~$p$ (or $\ell - p$) to the bottom-left corner of~$R_v$.
    \item \label{enum:greedy} Apply the greedy selection for~$(\BVe, C_{\BVe}')$ and~$(\BHo, C_{\BHo}')$. 
        If any of the greedy selections returns \texttt{false}, then there is no solution for~$p$.
        Hence, continue with the next probe.
    \item If every pair of greedily selected configurations has a distance of at least~$\ell$, then we have found a solution and terminate the algorithm.
            Otherwise, let~$c_i, c_j$ denote two selected configurations for branches~$B_i, B_j$ with a distance smaller than~$\ell$.
            \begin{enumerate}
                \item If $B_i$ extends to the bottom from $R_v$, discard $c_j$ from $C_{\BHo}'$ and return to step~\ref{enum:greedy}.
                \item If $B_i$ extends to the top from $R_v$
                \begin{enumerate}
                    \item and $B_j$ extends to the right from~$R_v$, then there is no solution for~$p$; proceed with the next probe.
                    \item and $B_j$ extends to the left from~$R_v$, then discard $c_i$ from $C_{\BVe}'$ and return to step~\ref{enum:greedy}.
                \end{enumerate}
            \end{enumerate}
            
\end{enumerate}
If no feasible solution was reported after the last probe for~$p$, then terminate with \texttt{false}.

It remains to discuss the correctness and runtime.

\begin{claim}
    The above algorithm can be implemented to solve the independent set problem with a runtime in~$\mathcal{O}(n^4)$.
\end{claim}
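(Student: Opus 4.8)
The plan is to bound the work per probe and multiply by the number of probes. Since $D$ is the set of all pairwise geodesic $L_1$-distances between vertices, $|D| \in \mathcal{O}(n^2)$, so the outer loop performs $\mathcal{O}(n^2)$ iterations. For a fixed probe $p$, the filtering in the first step inspects each of the $\mathcal{O}(n)$ available configurations once (a constant number per branch, over $\mathcal{O}(n)$ branches rooted at~$v$), and each invocation of the greedy selection costs $\mathcal{O}(n)$ by \cref{cl:greedy_runtime}. The decisive observation is that the loop back to the greedy step is entered \emph{only} after discarding exactly one configuration from $C'_{\BVe}$ or $C'_{\BHo}$; as configurations are never reinstated within a probe and there are only $\mathcal{O}(n)$ of them, at most $\mathcal{O}(n)$ such returns occur. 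Hence each probe costs $\mathcal{O}(n)\cdot\mathcal{O}(n)=\mathcal{O}(n^2)$, and the total runtime is $\mathcal{O}(n^4)$; this simultaneously establishes termination.

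To keep a single pass through the conflict-detection step within $\mathcal{O}(n)$ (rather than inspecting all $\binom{k}{2}$ pairs of selected configurations), I would exploit the geometric structure. After any greedy run the selection is conflict-free within $\BVe$ and within $\BHo$, so a remaining conflict is necessarily between a vertical and a horizontal branch. By \cref{gate_dist_main} and the rectangular shape of $R_v$, a shortest $L_1$-path between a guard in a vertical branch and a guard in a horizontal branch routes through one of the four corners of $R_v$, and the distance of the two configurations is then the sum of their distances to that corner. Thus, for each corner I track the selected configuration minimizing the distance to it among $\BVe$ and among $\BHo$, and report a conflict whenever two such minima sum to less than~$\ell$. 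This scan is linear and also yields the direction pair needed for the case distinction.

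For correctness, soundness is immediate: the algorithm reports success only when all greedily selected configurations are pairwise at distance at least~$\ell$, i.e., form a valid independent set. The substance is completeness. Given any feasible selection~$S$, let $p^{*}$ be the distance from the $\BVe$-guard of~$S$ closest to the bottom-left corner of~$R_v$ to that corner; since both are vertices, $p^{*}\in D$. I would show that filtering with $p=p^{*}$ retains every configuration used by~$S$: the $\BVe$-configurations have distance at least~$p^{*}$ to the corner by choice of $p^{*}$, and the $\BHo$-configurations have distance at least~$\ell-p^{*}$ because they are compatible with the extremal $\BVe$-guard through the bottom-left corner. This neutralizes exactly the conflicts routing through the bottom-left corner, leaving the other three corners to the resolution logic of the conflict step, which cleanly splits as: bottom-left handled by filtering, bottom-right by case~(a), top-right by case~(b)(i), and top-left by case~(b)(ii).

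The heart of the argument, and the step I expect to be the main obstacle, is a \emph{discard-safety invariant}: whenever the algorithm removes a configuration, no feasible selection for the current probe uses it, so the configurations of~$S$ survive to a successful greedy pass. I would prove this by an exchange argument anchored on the fact (noted after \cref{cl:greedy_runtime}) that greedy selects the smallest, i.e.\ rightmost, admissible configuration, so the chosen configurations are extremal with respect to the top-right and bottom-right corners (and, under the $90^\circ$ rotation, the analogous statement holds for $\BHo$). In case~(a) the bottom-left conflicts are already removed, so the conflict routes through the bottom-right corner where the vertical choice is extremal, hence the conflicting horizontal configuration clashes with every admissible vertical choice and may be discarded; in case~(b)(i) both choices are extremal for the top-right corner, so failure for~$p$ is correct; in case~(b)(ii) the top-left corner is not among the corners greedy optimizes for the vertical branch, so its current (rightmost, hence left-corner–close) configuration can be traded for one farther from the top-left corner, justifying the discard. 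Verifying these exchanges rigorously in all direction combinations, and confirming that the four-corner decomposition is exhaustive, is the delicate part; assembling the cases yields the invariant, which together with the runtime bound completes the proof of the claim.
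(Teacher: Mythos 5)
Your proposal is correct and follows essentially the same route as the paper's proof: the same $\mathcal{O}(n^2)\times\mathcal{O}(n)\times\mathcal{O}(n)$ runtime accounting (probes, discard-bounded iterations, linear greedy passes with corner-based constant-time conflict detection), and the same correctness argument via the extremality of the greedy choices with respect to the corners of $R_v$ and the four-corner case analysis. The only addition is that you make explicit, via the triangle inequality through the bottom-left corner, why a probe value preserving a feasible solution must exist — a point the paper asserts without elaboration.
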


\begin{claimproof}
    For the correctness: Assume that a feasible solution exists; otherwise, there is nothing to prove.
    Considering the optimal solution, there must exist a~$p \in D$ such that all chosen configurations for~$\BVe$ (or~$\BHo$) have a distance of at least~$p$ (or $\ell - p$) to the bottom-left corner of~$R_v$.
    We show that the algorithm will find a feasible solution when probing for exactly this value of~$p$:

    Clearly, the computed $C'_{\BVe}$ and $C'_{\BHo}$ in step~$1$ preserve the possibility of finding a feasible solution (for~$p$); i.e., the chosen configurations of a feasible solution for all branches are still among $C'_{\BVe}$ and $C'_{\BHo}$.
    Assume that $C'_{\BVe}$ and $C'_{\BHo}$ still contain the necessary configurations to obtain a feasible solution before the algorithm invokes the greedy selection in step~$2$ for the $i$-th time.
    We show that, unless the algorithm finds a feasible solution when entering step~$3$ for the next time, this condition holds before the algorithm invokes the greedy selection for the $(i+1)$-th time.

    Clearly, both subsequent greedy selections will return \texttt{true}, because, by assumption, there is a feasible solution among $C'_{\BVe}$ and $C'_{\BHo}$.
    Note that no two greedily selected configurations, belonging to branches extending in opposite or identical directions, can have a distance smaller than~$\ell$.
    Moreover, we leverage the observation mentioned above:
    For $B_i \in \BVe$, the greedily chosen configuration~$c$ maximizes the distance to the top-right and bottom-right corner of $R_v$, among all sensible configurations for~$B_i$.
    In particular, any available configurations for~$B_i$ that has a larger distance to these corners than~$c$ can never be part of a feasible solution.
    Similarly, the configurations chosen greedily for branches in~$\BHo$ maximize the distance to the top-right and top-left corner of~$R_v$.
   
    These properties are crucial for establishing the correctness of the decisions made in step~$3$ of the algorithm.
    However, if the the algorithm finds a feasible solution in step~$3$, then there is nothing to show.
    Otherwise, there exist configurations~$c_i$ and~$c_j$ for branches~$B_i$ and~$B_j$ with a distance smaller than~$\ell$.
    We distinguish all possible combinations of the directions in which~$B_i$ and~$B_j$ could extend from~$R_v$.

    \begin{enumerate}
        \item \textbf{$B_i$ extends to the top, and $B_j$ extends to the right:} 
            Note that a shortest path between guards in $c_i$ and $c_j$ visits the top-right corner of~$R_v$.
            But $c_i$ and $c_j$ maximize the distance to the top-right corner of $R_v$ (among the sensible possibilities for $B_i, B_j$).
            Hence, no solution can exist.
        \item \textbf{$B_i$ extends to the top, and $B_j$ extends to the left:}
            A shortest path between guards in $c_i$ and $c_j$ visits the top-left corner of~$R_v$.
            Since $c_j$ maximizes the distance to the top-left corner, no valid solution can contain $c_i$.
        \item \textbf{$B_i$ extends to the bottom, and $B_j$ extends to the right:}
            A shortest path between guards in $c_i$ and $c_j$ visits the bottom-right corner of~$R_v$.
            Since $c_i$ maximizes the distance to the bottom-right corner, no valid solution can contain $c_j$.
        \item \textbf{$B_i$ extends to the bottom, and $B_j$ extends to the left:}
            This case will not happen due to the construction of $C'_{\BVe}$ and $C'_{\BHo}$ in step~$1$ of the algorithm.
    \end{enumerate}

    Therefore, each decision made by the algorithm in step~$3$ either yields a valid certificate that no solution exists, or discards a configuration that can never be part of a feasible solution.
    Since the algorithm clearly will terminate, this proves the correctness.

    For the runtime: There are~$\mathcal{O}(n^2)$ possible values to probe for.
    Computing~$C_{\BVe}'$ and~$C_{\BHo}'$ in step~$1$ for every probe is straightforward in linear time each.
    During each probe, there are~$\mathcal{O}(n)$ iterations (consisting of repeated applications of steps~$2$ and~$3$), as in each iteration, either a configuration is deleted, or the probe terminates.

    Each iteration has a runtime in~$\mathcal{O}(n)$: two greedy selections are performed, and it is determined whether there exists a pair of configurations with a distance smaller than~$\ell$ among the greedily chosen ones.
    The greedy selections, as discussed by \cref{cl:greedy_runtime}, have a runtime in~$\mathcal{O}(n)$.
    Finding two configurations $c_i$ and $c_j$ for branches $B_i$ and $B_j$ with a distance smaller than~$\ell$, if they exist, is straightforward by using the observation that shortest paths between guards in~$c_i$ and~$c_j$ must pass through a corner of a room:
    For example, during the greedy selection, keep track of which two selected configurations have minimum distance to the corners of~$R_v$, respectively.
    Then, finding a pair $c_i, c_j$ with a distance smaller than~$\ell$ amounts to computing the distance between the two configurations stored at each corner of~$R_v$.
    This can clearly be done in constant time.
\end{claimproof}

This finishes the explanation of our algorithm.
	\section{Omitted details of~\cref{satsolver}}
\label{app:visibility}

Computing the $r$-visibility polygon $\text{Vis}(q)$ in an orthogonal polygon is essential for implementing any of the models described in \cref{satsolver}.
Since no existing implementation was available, we introduce a simple $\mathcal{O}(n \log n)$ algorithm, where the dominant cost is sorting.
We describe the algorithm in a slightly more general setting, where $\text{Vis}(q)$ is computed given~$n$ axis-parallel line segments in the plane.

The basic idea is to partition $\text{Vis}(q)$ into four quadrants $Q_1, \dots, Q_4$ using a vertical and horizontal line through $q$ (\cref{fig:rvis_quadrants}).
Each quadrant is processed similarly; we describe~$Q_1$. 
It is based on the observation that the quadrant~$Q_1$ consists of concatenated rectangles with decreasing height from left to right (\cref{concat_q1}).
To construct $Q_1$, we identify a \emph{canonical} point set $C$ containing the top-left corners of these rectangles and the bottom-right corner of the rightmost rectangle.

To find the canonical point set, initialize $C$ as an empty set and process each segment $s$ with at least one endpoint in $Q_1$:
If $s$ is fully contained in $Q_1$ and horizontal (or vertical), add its left (or bottom) endpoint to $C$.
If $s$ intersects the vertical or horizontal line through~$q$, add the intersection to $C$.

If no point in $C$ lies on the horizontal (or vertical) line through $q$, add $(\infty, q.y)$ (or $(q.x, \infty)$).
Then, filter $C$ to retain only Pareto-optimal points:
discard any $(x_1, y_1)$ dominated by $(x_2, y_2)$ with $x_2 \leq x_1$ and $y_2 \leq y_1$.

To construct $Q_1$, sort $C$ by increasing $x$-coordinate.
Traverse $C$, inserting $(x_2, y_1)$ between consecutive points $(x_1, y_1), (x_2, y_2)$.
Prepend $q$ to complete $Q_1$. 

\begin{figure}[htb]
    \centering
     \begin{subfigure}[b]{0.45\textwidth}
         \centering
         \includegraphics[page=2]{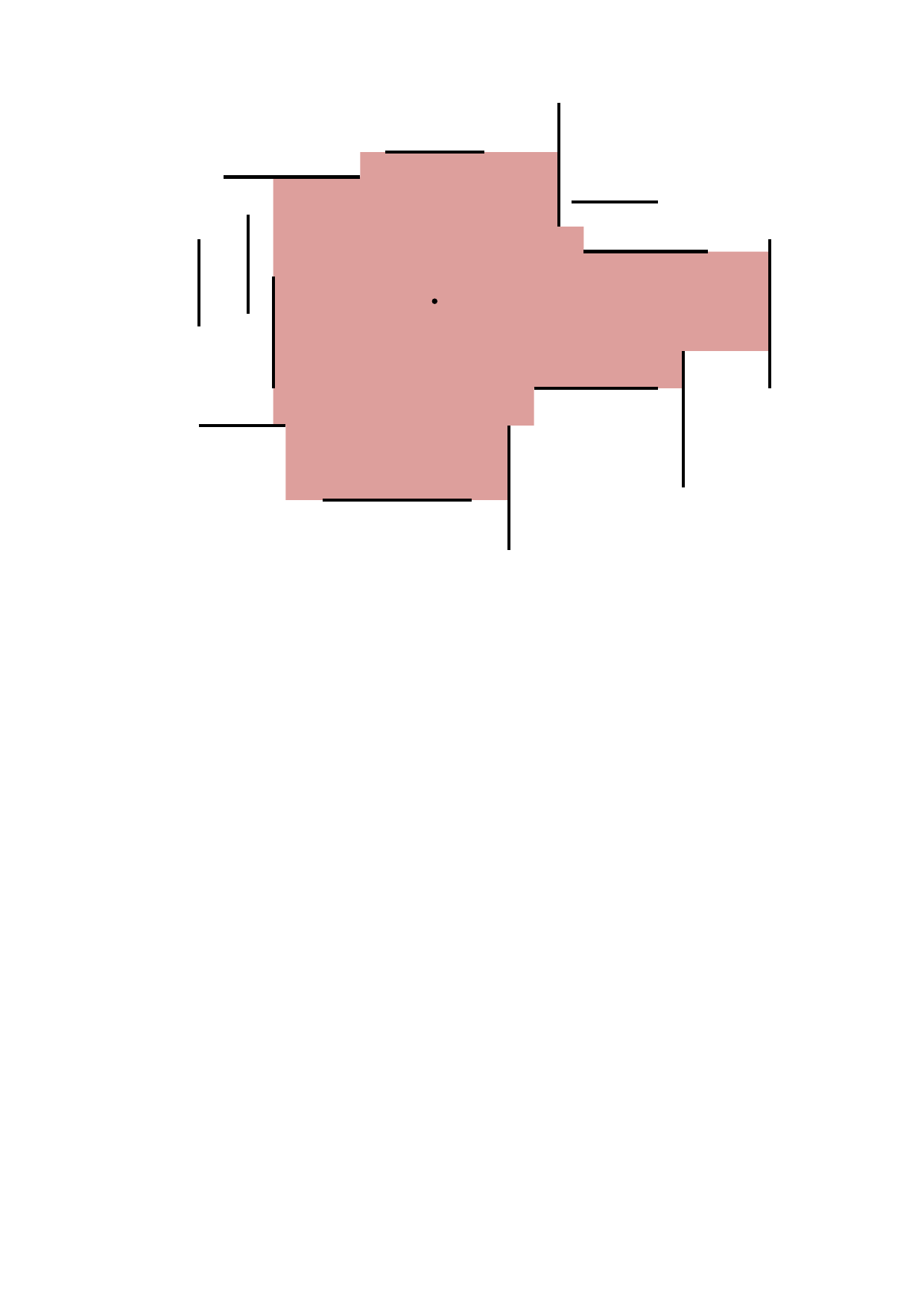}
         \subcaption{}
         \label{fig:rvis_quadrants}
     \end{subfigure}
     \hfill
     \begin{subfigure}[b]{0.45\textwidth}
         \centering
         \includegraphics[page = 5]{figures/rvispolygon.pdf}
         \subcaption{}
         \label{concat_q1}
     \end{subfigure}
    \caption{Exemplary illustrations of quadrants and a canonical point set.}
    \label{algo_vis_poly}
\end{figure}
	\section{Details on the ratio of guard sets}
\label{app:small-guard-sets}

We give details on the ratio between the sizes of guard sets in optimal solutions to the classic \textsc{Art Gallery Problem}, and the \textsc{Dispersive AGP}.

\begin{lemma}\label{ratio2scots}
	There exists a family of office-like polygons in which the ratio between the size of a smallest guard set with maximum dispersion and a smallest guard set in general approaches~$2$.
\end{lemma}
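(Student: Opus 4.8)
The plan is to exhibit a family $\{\polygon_k\}_{k\in\mathbb N}$ of hole-free, independent office-like polygons assembled from $k$ identical gadgets, and then to play the two quantities against each other: I would bound the \agp-optimum from above by $|\mathcal R|$, and bound the size of \emph{every} maximally dispersed guard set from below by roughly $|\mathcal R|+|\mathcal C|$. Since in a hole-free (hence tree-structured) office-like polygon we have $|\mathcal R| = |\mathcal C| + 1$, these two bounds give a ratio $(|\mathcal R|+|\mathcal C|)/|\mathcal R| = (2|\mathcal R|-1)/|\mathcal R|$, which tends to $2$ as $k\to\infty$. The \agp upper bound is immediate from Cruz and Tom\'as~\cite{cruz}: for hole-free independent office-like polygons a smallest guard set has size exactly $|\mathcal R|$, realized by placing, for each corridor, a single guard at a corridor--room junction vertex, which under $r$-visibility simultaneously sees the whole incident (rectangular) room and the whole corridor, together with one extra guard for the single remaining room.

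The heart of the argument is the lower bound on the dispersive optimum, and this is where the gadget must be designed carefully. The mechanism I would exploit is that the single \agp guard at a junction performs \emph{double duty} (it covers a corridor \emph{and} a full room), whereas at the target dispersion value $\ell^\ast$ this double duty becomes impossible. Concretely, I would choose the gadget so that (i) by independence every corridor must contain at least one guard on one of its \emph{own} vertices---this is exactly the fact used throughout the paper---and (ii) locally densely packed corridors (in the spirit of \cref{fig:3_necessary_main} and the $c$-$\varepsilon$-packings) cap the maximum dispersion at a fixed constant $\ell^\ast$ while forcing any $\ell^\ast$-realizing corridor guard to sit in the \emph{interior} of its corridor rather than at a junction. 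A guard in the interior of a corridor sees only a thin slice of each incident room (corridors are strictly narrower than rooms), so it cannot cover the room; hence each room additionally requires a dedicated room guard. Because, by definition of office-like polygons, no vertex of a room coincides with a vertex of a corridor, these forced corridor guards and the dedicated room guards occupy disjoint vertex sets, so the counts add: any $\ell^\ast$-realizing set contains at least $|\mathcal C| + |\mathcal R| = 2|\mathcal R| - 1$ guards.

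To complete the picture I would also verify achievability: exhibit one guard set of size $2|\mathcal R|-1$ that realizes $\ell^\ast$ (a guard deep in each corridor together with a guard at a far corner of each room), so that $\ell^\ast$ is genuinely the maximum dispersion and the bound is tight. Together with the \agp optimum of $|\mathcal R|$, this establishes the claimed ratio tending to $2$.

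The main obstacle I anticipate is the dispersive lower bound, specifically arguing that no guard set realizing $\ell^\ast$ can do so with substantially fewer than $2|\mathcal R|$ guards. Two points need care. First, one must rule out ``shortcut'' placements in which a single cleverly placed vertex could cover a full room and still keep distance $\ge \ell^\ast$ from every corridor guard; the densely packed local structure must be tuned so that every room-covering vertex lies within distance $<\ell^\ast$ of the forced interior corridor guards, eliminating such shortcuts. Second, because geodesic $L_1$-distances grow along the polygon, I must prevent a solution from beating $\ell^\ast$ using very few, very widely separated guards; this is precisely what the local packing caps, and I would need to check that $\ell^\ast$ is attained \emph{only} by the spread-out configurations that pay the doubling. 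Once the gadget geometry is fixed so that these two local conditions hold, the global counting argument is routine.
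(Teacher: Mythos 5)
The central mechanism of your lower bound does not exist in this model. You want the target dispersion $\ell^\ast$ to force every corridor guard ``into the interior of its corridor,'' where it ``sees only a thin slice of each incident room.'' But guards are vertex guards, and a corridor is a rectangle whose only vertices are its four corners; by the definition of office-like polygons, each of these corners lies strictly inside an edge of one of the two incident rectangular rooms. Under $r$-visibility, a point on the boundary of an axis-aligned rectangle sees that entire rectangle, so \emph{every} admissible guard position in a corridor automatically covers one incident room in full. The ``double duty'' you are trying to switch off is therefore unavoidable, and the count $|\mathcal C|+|\mathcal R|$ cannot be forced: the $|\mathcal C|$ mandatory corridor guards already cover up to $|\mathcal C|$ rooms. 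In the hole-free independent setting you chose, where $|\mathcal C|=|\mathcal R|-1$, this is especially damaging -- on a path of rooms, placing each corridor guard on the edge of the room nearer one end covers every room but the last, yielding $|\mathcal R|$ guards in total, the same as the \agp optimum -- so your framework cannot certify a ratio approaching $2$ without a further idea controlling \emph{which} room each corridor guard covers.

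The paper's construction supplies exactly that missing control, and it abandons both independence and hole-freeness to do so. It stacks $k$ long rooms of height $1$ and joins each consecutive pair by $k$ horizontally aligned unit-width corridors, so the corridors organize into $k$ vertical \emph{strips}; a guard anywhere on a strip sees the whole strip, so the \agp optimum is $k$. The strips are spaced $2k+2$ apart while the polygon admits dispersion $4k+1$, and a distance computation shows that a strip guard placed at a vertex belonging to a \emph{middle} room $R_2,\dots,R_{k-1}$ lies within distance $4k<4k+1$ of every vertex of a neighbouring strip, which must also contain a guard. Hence in any maximally dispersed solution all strip guards live in the two extreme rooms $R_1,R_k$, the $k-2$ middle rooms are left uncovered and each needs a dedicated guard, and the dispersive optimum has size $2k-2$, giving ratio $(2k-2)/k\to 2$. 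Your high-level counting template (forced corridor guards plus forced room guards versus an \agp optimum of one guard per room) matches the paper's in spirit, but to make it work you need a device of this kind that dictates the \emph{side} on which each forced guard lands; forcing it off the rooms entirely is not an option for vertex guards.
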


\begin{proof}
	For arbitrary $k \geq 2$, we construct an office-like polygon as follows; we refer to~\cref{fig:ratio2construction} for the high-level idea.
	\begin{enumerate}
		\item We place rooms $R_1, \dots, R_k$ with height~$1$ and width~$2k^2 + 4k + 1$. 
            The rooms are stacked vertically from top to bottom in the given order, with their left boundaries aligned along the vertical line $x = 0$.
            Between~$R_1$ and~$R_2$, there are two vertical units of space. The same holds for the rooms~$R_{k-1}$ and~$R_k$. 
            For every other pair of consecutive rooms~$R_i$ and~$R_{i+1}$, there is one vertical unit of space.
		\item For $1 \leq i \leq k-1$, we connect every pair of consecutive rooms $R_i, R_{i+1}$ by corridors $C_1, \dots, C_k$. 
            Every corridor has width~$1$; the height is already determined by the vertical space between the rooms. 
            We place~$C_j$ such that the left boundary is on the vertical line $x= 1 + (j-1) (2k+2)$.
	\end{enumerate}
	Other important lengths implied by this construction are indicated in \cref{fig:ratio2construction}. 

    We discuss possible guard sets for the constructed office-like polygon. 
    First, note that the construction contains several vertical strips $S_1, \dots, S_k$ of consecutive corridors such that, if a guard sees one corridor in~$S_i$, then the guard also sees all other corridors in~$S_i$. 
    Hence, by placing one guard at an arbitrary vertex shared by $S_i$ and $R_i$ for each $1 \leq i \leq k$, we obtain a guard set of size~$k$; see the red guards in \cref{fig:ratio2construction}. 
    Clearly, this is a smallest guard set.
	
	Next, we discuss that every guard set with maximum dispersion has a size of at least~$2k - 2$. 
    To this end, note that the blue vertices in \cref{fig:ratio2construction} form a guard set of this size with a dispersion distance of~$4k + 1$.
    The guard set is constructed as follows:
    Place a guard at a vertex of each strip from left to right, alternating between the top-left and bottom-left vertices of each strip. 
    These guards ensure that every corridor is covered, and thereby the rooms~$R_1, R_k$. 
    To cover the rooms $R_2, \dots, R_{k-1}$, place a guard at the respective top-right~corner.
	
	We show that no guard set containing a guard at a vertex shared by a corridor and one of the rooms $R_2, \dots, R_{k-1}$ can achieve a dispersion distance of~$4k + 1$. 
    It is clear then that no guard set with maximum dispersion can have a size that is smaller than~$2k-2$.

	Assume there exists a guard set with maximum dispersion that places a guard~$g$ at a vertex inside room~$R_i$ for some $2 \leq i \leq k{-}1$, and simultaneously in strip~$S_j$ for $1 \leq j \leq k$. 
    Let~$S'$ denote a neighbor in the ordering of the strips from left to right. 
    Observe that the largest distance from~$g$ to a vertex in~$S'$ is at most~$4k$; for an extreme example, consider the two green vertices in \cref{fig:ratio2construction}. 
    Since at least one guard must be placed in every strip, a dispersion distance of~$4k + 1$ is clearly not achievable.
	
	In summary, there always exists a guard set of size~$k$, but no guard set with maximum dispersion can have a size smaller than $2k - 2$.
    As a matter of fact, the ratio approaches~$2$ as~$k$ increases.
\end{proof}

\begin{sidewaysfigure}
	\centering
	\includegraphics[page=1]{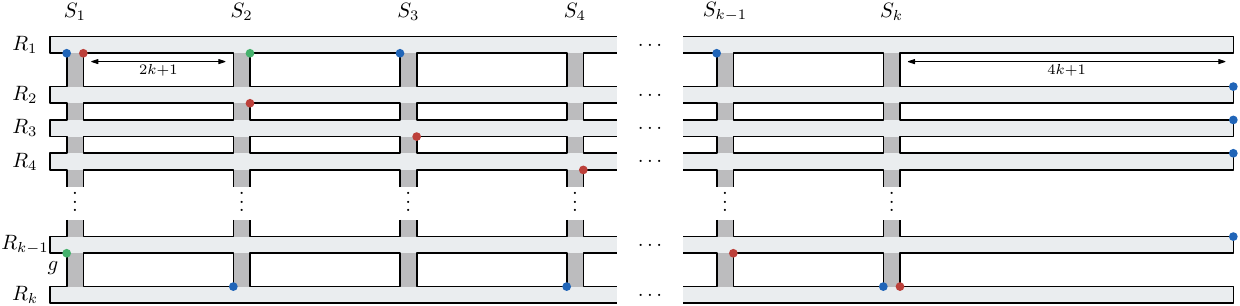}
	\caption{A family of office-like polygons demonstrating that guard sets with maximum dispersion can be nearly twice as large as smallest guard sets.}
	\label{fig:ratio2construction}
\end{sidewaysfigure}

\end{document}